\documentclass{article} 
\usepackage[preprint]{colm2026_conference}

\usepackage{microtype}
\usepackage{hyperref}
\usepackage{url}
\usepackage{booktabs}
\usepackage{refcount}
\usepackage{mathtools}
\usepackage{lineno}

\definecolor{darkblue}{rgb}{0, 0, 0.5}
\hypersetup{colorlinks=true, citecolor=darkblue, linkcolor=darkblue, urlcolor=darkblue}

\usepackage{latexsym}

\usepackage[T1]{fontenc}

\usepackage[utf8]{inputenc}

\usepackage{amssymb}
\usepackage{makecell}
\usepackage{multirow}
\usepackage{balance} 
\usepackage{amsmath}

\usepackage[ruled]{algorithm2e} 
\SetAlFnt{\small}
\SetAlCapFnt{\small}
\SetAlCapNameFnt{\small}
\SetAlCapHSkip{0pt}
\IncMargin{-\parindent}
\usepackage{tablefootnote}
\usepackage{fvextra} 
\usepackage{algorithmic}
\usepackage{newfloat}
\usepackage{chngcntr} 
\usepackage{amsthm}

\newtheorem{proposition}{Proposition}
\newtheorem{corollary}{Corollary}

\usepackage{listings}
\usepackage{caption} 
\usepackage{float}

\floatstyle{ruled} 
\newfloat{listing}{tb}{lst}{}
\floatname{listing}{Listing}

\counterwithout{listing}{section}

\makeatletter
\@ifundefined{c@listing}{\newcounter{listing}}{} 
\makeatother

\DeclareCaptionStyle{ruled}{labelfont=normalfont,labelsep=colon,strut=off} 
\usepackage{longtable}
\usepackage{array}
\usepackage[table]{xcolor}
\usepackage{ragged2e}
\usepackage{soul}

\definecolor{lemonbg}{RGB}{0,128,0}      
\definecolor{lemontext}{RGB}{255,165,0}  
\definecolor{icecreambg}{RGB}{255,215,0} 
\definecolor{icecreamtext}{RGB}{255,255,255} 

\usepackage{tcolorbox}
\tcbuselibrary{listingsutf8}
\usepackage{capt-of}
\usepackage{subcaption} 
\usepackage{graphicx}
\renewcommand{\bibname}{References}
\renewcommand{\refname}{References}

\newtcolorbox[auto counter]{mybox}[2][]{
  float,
  floatplacement=tbp,
  title={#2},
  colback=gray!10,
  colframe=gray!50,
  coltitle=black,
  #1
}

\renewcommand{\cite}{\citep}

\usepackage{enumitem}
\setlist[itemize]{leftmargin=0.8cm}
\setlist[enumerate]{leftmargin=0.8cm}

\title{Incentive-Aware Multi-Fidelity Optimization for Generative Advertising in Large Language Models}

\author{
\begin{tabular}{l}
\textbf{Jiayuan Liu}$^{1}$, \textbf{Barry Wang}$^{1}$, \textbf{Jiarui Gan}$^{3}$, \textbf{Tonghan Wang}$^{4}$, \textbf{Leon Xie}$^{2}$, \\
\textbf{Mingyu Guo}$^{2}$, \textbf{Vincent Conitzer}$^{1}$ \\
\end{tabular}
\\
\begin{tabular}{l}
$^{1}$ Carnegie Mellon University \quad
$^{2}$ Adelaide University \\
$^{3}$ University of Oxford \quad
$^{4}$ Harvard University \\
\end{tabular}
\\
\\
\begin{tabular}{l}
\small
\texttt{\{jiayuan4,barryw\}@andrew.cmu.edu}, \texttt{jiarui.gan@cs.ox.ac.uk}, \texttt{twang1@g.harvard.edu}, \\
\small
\texttt{leon.xie@student.adelaide.edu.au}, \texttt{mingyu.guo@adelaide.edu.au}, 
\texttt{conitzer@cs.cmu.edu}
\end{tabular}
}

\begin{document}

\ifcolmsubmission
\linenumbers
\fi

\maketitle

\begin{abstract}
Generative advertising in large language model (LLM) responses requires optimizing sponsorship configurations under two strict constraints: the strategic behavior of advertisers and the high cost of stochastic generations. To address this, we propose the \emph{Incentive-Aware Multi-Fidelity Mechanism (IAMFM)}, a unified framework coupling \emph{Vickrey-Clarke-Groves (VCG)} incentives with \emph{Multi-Fidelity Optimization} to maximize expected social welfare. We compare two algorithmic instantiations (elimination-based and model-based), revealing their budget-dependent performance trade-offs. Crucially, to make VCG computationally feasible, we introduce \emph{Active Counterfactual Optimization}, a ``warm-start'' approach that reuses optimization data for efficient payment calculation. We provide formal guarantees for approximate strategy-proofness and individual rationality, establishing a general approach for incentive-aligned, budget-constrained generative processes. Experiments demonstrate that IAMFM outperforms single-fidelity baselines across diverse budgets.
\end{abstract}

\section{Introduction}\label{sec:introduction}

Large Language Models (LLMs) are increasingly deployed as \emph{decision-making and content-generation engines} that simultaneously affect multiple stakeholders. In many such systems, several strategic agents (e.g., advertisers, content creators, or specialized AI assistants) compete to influence a \emph{single} generated response. 
We study this problem through the lens of \emph{generative advertising}, where multiple advertisers compete to steer a single LLM chatbot response shown to the user~\citep{duetting2024mechanism,soumalias2024truthful,dubey2024auctions,feizi2023online}. 
A platform seeks to generate LLM responses that maximize advertiser values and user satisfaction. To do so, it periodically selects a configuration of advertiser ``strength'' parameters that steers the LLM outputs over a subsequent period. Learning an effective selection strategy, however, presents several challenges. First, evaluating any candidate configuration requires repeated LLM queries and downstream evaluations, which incur substantial cost and latency. Second, advertisers are self-interested and may strategically misreport their valuations (e.g., pretending to be unsatisfied with a favorable steered LLM response) to manipulate the chosen strength configuration and, consequently, the generated responses.

To address these challenges, we propose the Incentive-Aware Multi-Fidelity Mechanism (IAMFM). To improve learning efficiency, IAMFM models each advertiser strength configuration as an arm in a multi-fidelity bandit, enabling the platform to trade off evaluation cost against accuracy during exploration. To actively avoid strategic misreporting, IAMFM incorporates a payment rule derived from the Vickrey-Clarke-Groves (VCG) mechanism~\cite{vickrey1961counterspeculation,clarke1971multipart,groves1973incentives} that is computed with respect to expected social welfare. In doing so, IAMFM provides a unified framework that combines the sample efficiency of multi-fidelity optimization (MFO) with incentive-compatibility guarantees from a carefully designed mechanism layered on top.

Our core contributions are as follows:
\begin{itemize}
    \item A new \emph{IAMFM framework} that couples multi-fidelity black-box optimization with an \emph{ex-ante} VCG mechanism. This unified approach maximizes expected social welfare in LLM-based systems while \emph{promoting truthful reporting of valuations} from strategic advertisers.
    \item The design and analysis of \emph{two complementary algorithmic instantiations}---an elimination-based family (\emph{IAMFM-SH/ASH}) and a model-based MFBO variant (\emph{IAMFM-MFBO})---together with empirical evidence showing that their relative advantage depends on the budget.
    \item A \emph{sample-efficient ``warm-start'' procedure, Active Counterfactual Optimization (ACO)}, that reuses main-phase optimization data to estimate VCG counterfactuals, improving the computational efficiency of VCG payment calculations.
    \item \emph{Ex-ante} guarantees showing that, under mild approximation conditions, our mechanism enjoys \emph{approximately strategy-proof}, \emph{approximately individually rational}, and \emph{(approximately) no-subsidy} properties, with explicit violation bounds that depend on optimization and counterfactual errors.
    \item Empirical validation in a realistic LLM simulation environment, providing practical guidance on which IAMFM instantiation to deploy under different computational budgets.
\end{itemize}

While we focus on generative advertising for chatbots, our framework applies to a wider range of tasks characterizing collaborative content-generation with multiple (perhaps conflicting) objectives, such as {AI-assisted scientific discovery} (negotiating trade-offs between efficacy and toxicity in drug design), {personalized education} (balancing students' interests with academic rigor), {LLM platform safety} (optimizing the Pareto frontier between helpfulness and harmlessness), etc. We provide detailed formulations for these broader applications in Appendix~\ref{apdx:use_cases}. 

The remainder of this paper is organized as follows. 
Sec.~\ref{sec:context} details the problem context and challenges. Sec.~\ref{sec:formulation} contains our problem formulation. Sec.~\ref{sec:iamfm_framework} introduces our IAMFM framework and both the model-based and the elimination-based instantiations. Sec.~\ref{sec:vcg_payment} discusses the mechanism rationale with theoretical proofs and acceleration of VCG payment calculation. Sec.~\ref{sec:experiments} evaluates the performance of our method.

\paragraph{Related Work.} Generative advertising in LLMs is an emerging field focusing on auction mechanisms for token-level influence and content positioning \citep{duetting2024mechanism, soumalias2024truthful, dubey2024auctions, feizi2023online}. While multi-fidelity optimization (MFO) \citep{kandasamy2017mfbo, li2018hyperband} provides a standard paradigm for efficient black-box search under budget constraints, its integration with incentive-aligned mechanism design remains largely unexplored. Our IAMFM framework fills this gap by explicitly coupling the sample efficiency of MFO with the strategic robustness of VCG, specifically tailored for the high-cost, stochastic nature of generative systems. A comprehensive literature review is provided in Appendix~\ref{sec:related_work}.

\section{Problem Setup and Challenges}\label{sec:context}

\subsection{Food-Court Generative Advertising Example}
To ground the setting, we consider a \emph{conversational recommendation system} deployed in a shopping mall's food court.\footnote{This could be a kiosk in front of the food court, or a ubiquitous personal agent (e.g., in AR glasses) that delivers an influenced recommendation as the user enters, based on their preferences.}
There are $N$ restaurants (advertisers) who pay to influence the LLM's recommendation as users ask for suggestions (e.g., ``I want something spicy and quick''). The platform chooses an \emph{Influence Configuration}: a vector of \emph{advertising strength levels} $\mathbf{s} = (s_1, \ldots, s_N)$ that controls how strongly each advertiser can steer the generated content.

The platform's goal is to choose $\mathbf{s}$ that maximizes \emph{total social welfare} (advertiser value + user satisfaction). Importantly, higher strength does not always increase welfare: overly promotional responses can reduce user satisfaction even if they increase an advertiser's short-term exposure.
For example, consider how Thai Spice Garden could influence the response to the user's query ``I want something spicy'':
\begin{quote} 
\vspace{-0.1cm}
\textit{Response (Low Strength):} ``For spicy food, Thai cuisine (e.g., Tom Yum soup) and Sichuan cuisine (e.g., hotpot) are popular choices. Two local restaurants for these are Thai Spice Garden and Sichuan Gourmet.''

\textit{Response (High Strength):} ``Hands down, the BEST choice is Thai Spice Garden. Their fiery Panang Curry is exactly what you're craving. It's the city's must-order spicy dish -- skip it and you'll ABSOLUTELY regret it.''
\vspace{-0.1cm}
\end{quote} 
The ``High Strength'' example, with its use of superlatives and pushy, sales-like language, may frustrate the user. 
We model each advertiser's preferences using a ``black-box'' utility function $V_i(\cdot)$ (and similarly define $V_U(\cdot)$ for the user), which maps any generated response $c$ to a utility value. In practice, $V_i$ is implemented by an \emph{LLM agent}, fine-tuned on data from advertiser $i$'s marketing team to act as a proxy for their preferences (similarly for $V_U$ and the fine-tuned LLM agent for user preferences).

This over-saturation trap extends to multi-modal generative advertising. As illustrated in Figure~\ref{fig:visual_saturation}, finding the precise influence configuration is critical to balancing advertising value with image quality. A baseline approach with no advertising focus lacks commercial utility, while balanced parameters yield natural compositions that prioritize one brand. However, extreme settings force the generative model into ``semantic bleeding'', literally fusing the brands together. This visual collapse reinforces why simply maximizing advertiser strength destroys user utility, necessitating a mechanism that optimizes for collective expected welfare.
\begin{figure*}[ht]
    \centering
    \begin{subfigure}[b]{0.23\linewidth}
        \includegraphics[width=\linewidth]{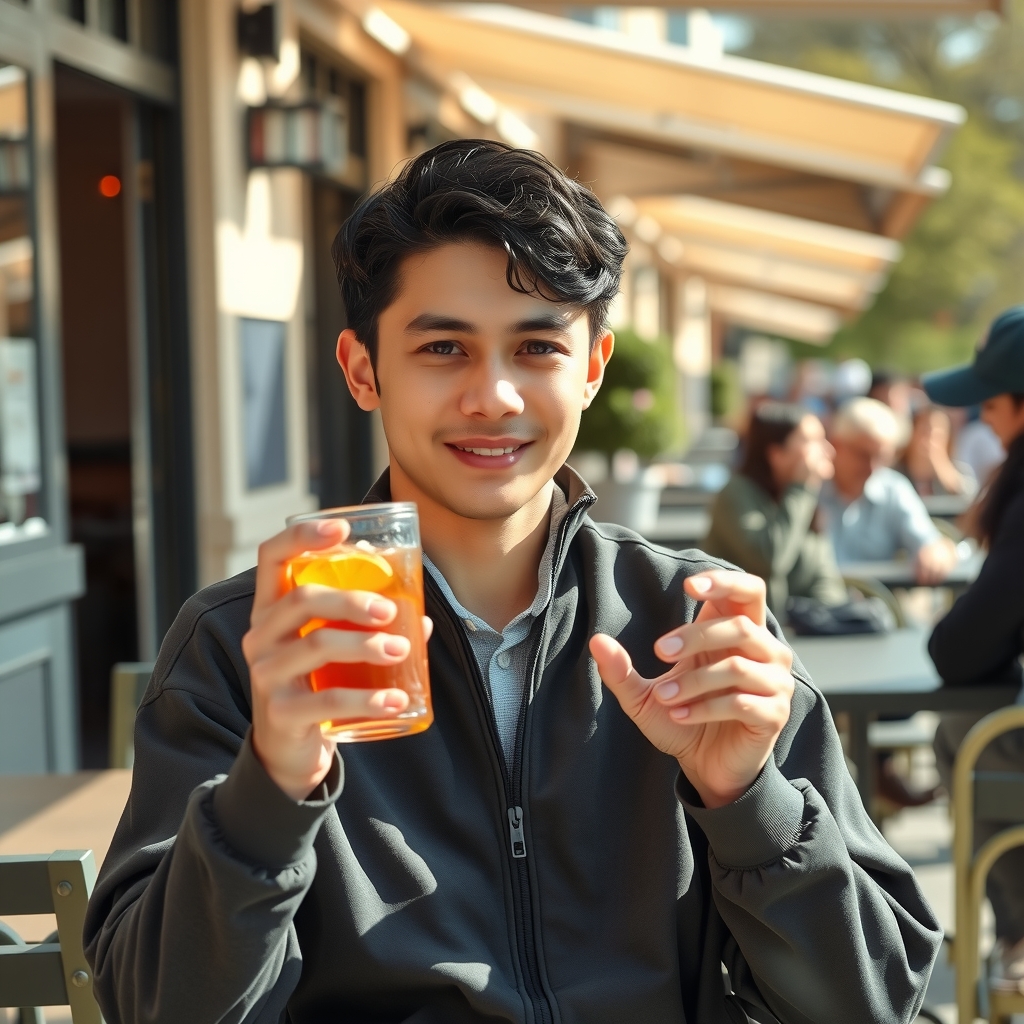}
        \caption{$s_A=s_B={\fontsize{8pt}{10pt} \text{Low}}$}
        \label{fig:visual_baseline}
    \end{subfigure}
    \hfill
    \begin{subfigure}[b]{0.23\linewidth}
        \includegraphics[width=\linewidth]{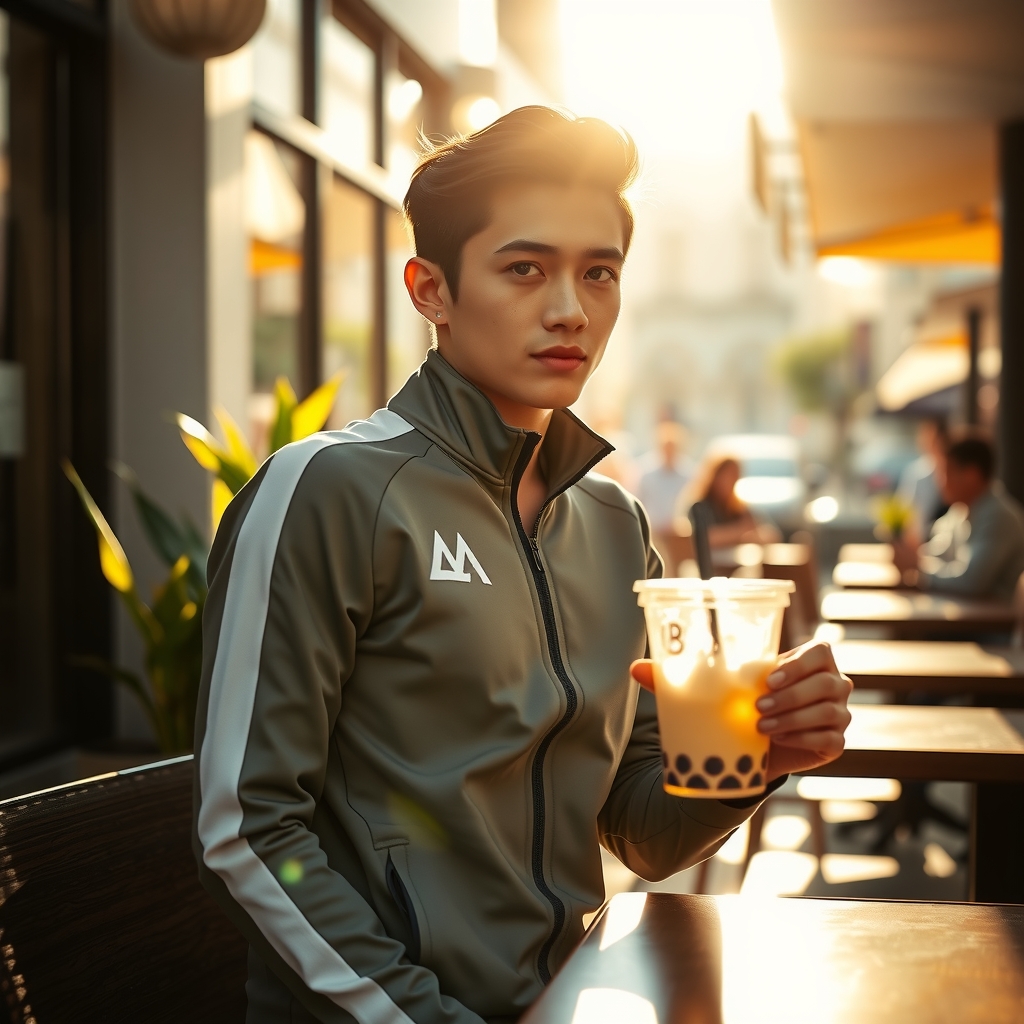}
        {\captionsetup{font=footnotesize}
        \caption{$s_A={\fontsize{8pt}{10pt} \text{Max}}, s_B={\fontsize{8pt}{10pt} \text{Low}}$}}
        \label{fig:visual_a}
    \end{subfigure}
    \hfill
    \begin{subfigure}[b]{0.23\linewidth}
        \includegraphics[width=\linewidth]{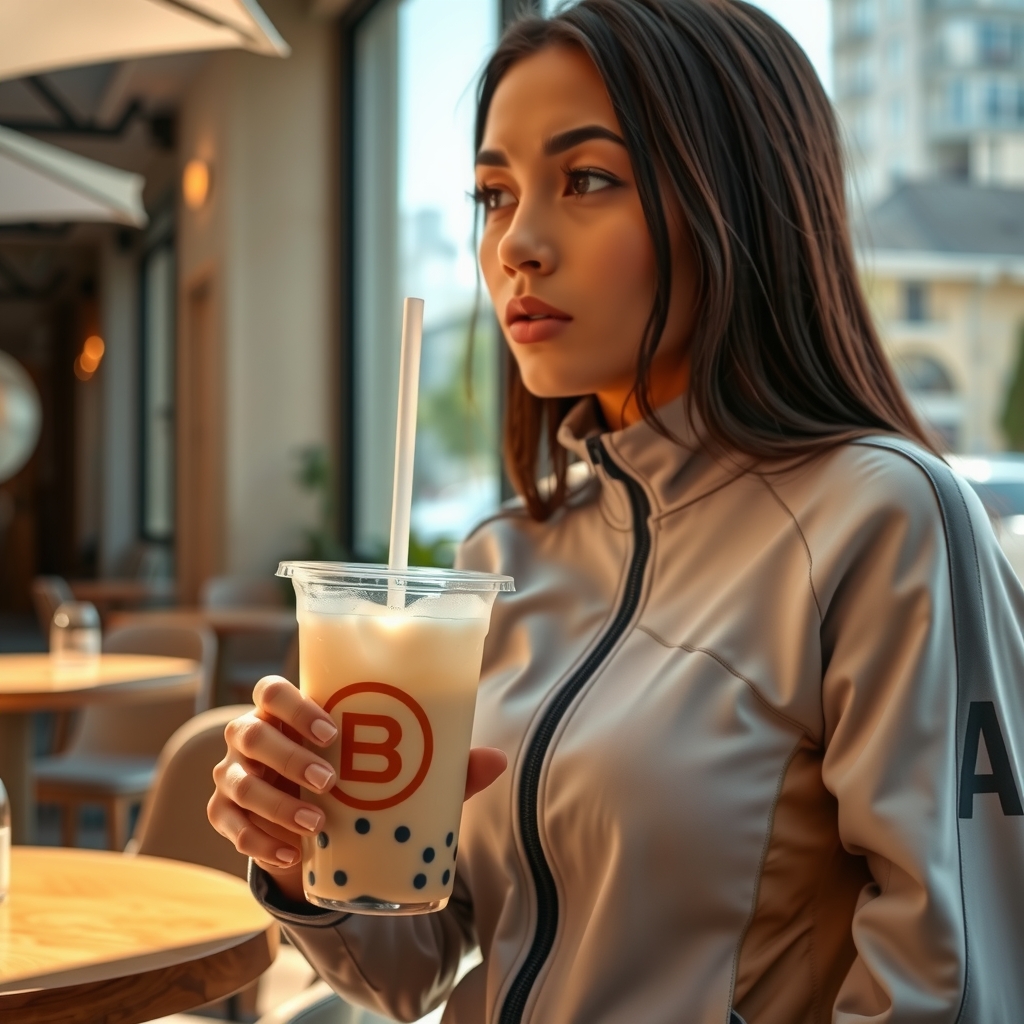}
        \caption{$s_A={\fontsize{8pt}{10pt} \text{Low}}, s_B={\fontsize{8pt}{10pt} \text{Max}}$}
        \label{fig:visual_b}
    \end{subfigure}
    \hfill
    \begin{subfigure}[b]{0.23\linewidth}
        \includegraphics[width=\linewidth]{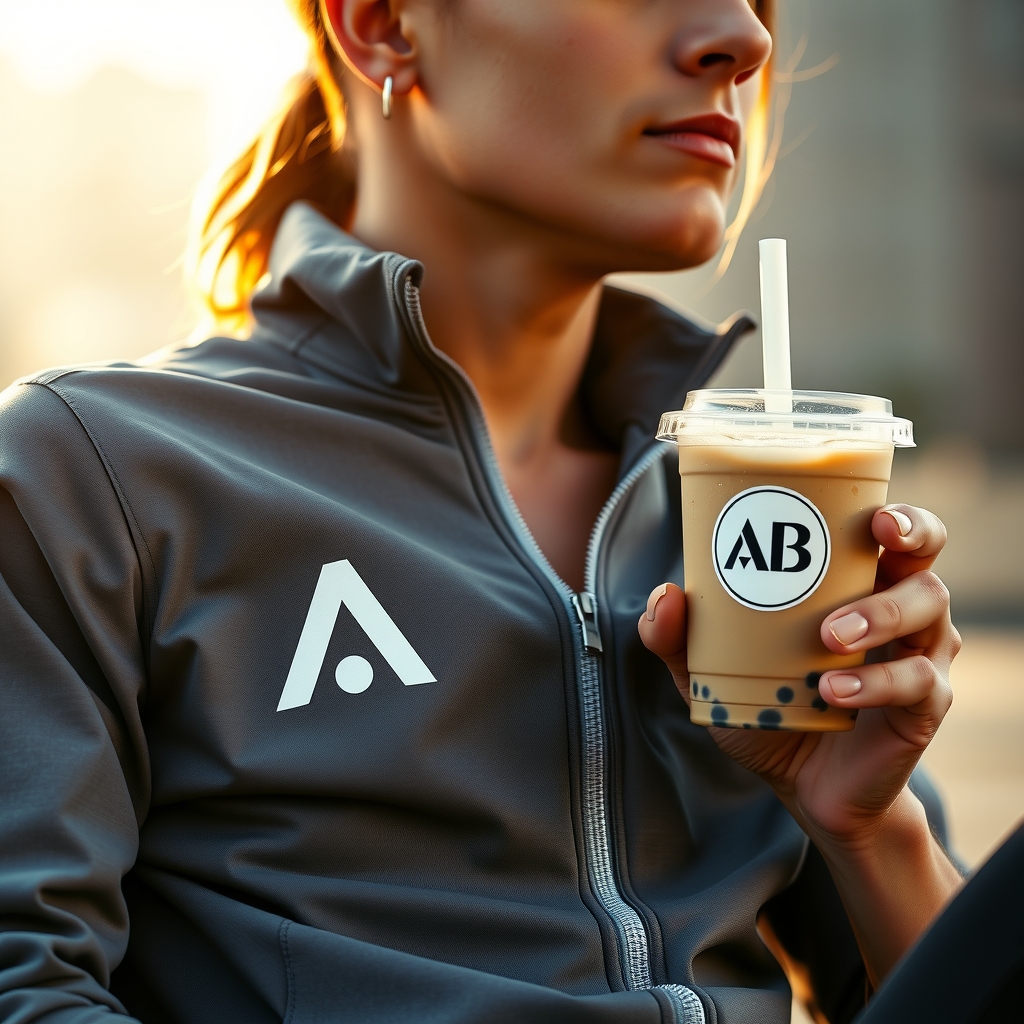}
        \caption{$s_A=s_B={\fontsize{8pt}{10pt} \text{Max}}$}
        \label{fig:visual_c}
    \end{subfigure}
    
    \vspace{-0.1cm} 
    \caption{\textbf{Visual Over-Saturation in Generative Advertising } (advertisers: Alpha sports jacket and Beta boba). (a) Baseline produces a natural image lacking commercial utility (unbranded cup and jacket). Balanced configurations naturally prioritize a single brand: (b) sharp focus on Alpha logo with a blurred Beta branded cup; (c) prominent Beta logo, pushing Alpha to the periphery. Conversely, extreme parameters (d) cause catastrophic structural collapse, hallucinating an amalgamated ``AB'' logo onto the boba cup.}
    \vspace{-0.5cm} 
    \label{fig:visual_saturation}
\end{figure*}

\subsection{The Optimization Objective}
With these utility functions, the platform must find an optimal influence configuration $\mathbf{s}^*$. However, the value of a configuration $\mathbf{s}$ is usually stochastic; it depends on the specific user $u$ and the LLM's random generation. Thus, we evaluate strategies via \emph{sampling}: one must generate many responses across many different user contexts and average the results. The platform's objective is to find a configuration $\mathbf{s}^*$ that maximizes the \emph{expected social welfare}.

\subsection{Key Challenges}
The selection process is constrained by three major challenges:
\begin{enumerate}
    \item \emph{Economic Misalignment:} The advertisers are self-interested. Their interests may not align with the platform's goal of maximizing social welfare, which creates an incentive for them to misreport their true valuations.
    \item \emph{Prohibitive Computational Cost:} The \emph{simulation} (i.e., repeatedly generating LLM responses) required to accurately estimate the expected social welfare is expensive for even \emph{one} configuration. Performing this for \emph{every} configuration in the vast search space is computationally intractable.
    \item \emph{Latency and Synchronization Constraints:} Users expect a response in sub-seconds. This strict requirement makes complex per-query optimization hard to achieve. Furthermore, a real-time bidding architecture would be vulnerable to the ``straggler problem'', where the system's total latency is bottlenecked by the slowest responding LLM proxy agent. If any single agent lags or times out, the entire user experience breaks. 
\end{enumerate}

The cost and latency constraints force the platform to adopt an \emph{offline} optimization paradigm. The expensive search for the optimal configuration must be run periodically (e.g., daily or hourly) to adapt to changing conditions. The single winning configuration from this offline run is then fixed and deployed for the next period (e.g., the next day or next hour), allowing it to serve subsequent user requests with negligible latency.
This fixed deployment strategy gives the optimization objective: the platform's goal is to maximize the expected social welfare aggregated over the entire series of subsequent customers. In Sec.~\ref{sec:formulation}, we proposed a formalized learning problem incorporating multi-fidelity, multi-armed bandit optimization and a mechanism design layer on top.

\section{Problem Formulation}
\label{sec:formulation}

To tackle the computational bottleneck and strategic misalignment identified in Section~\ref{sec:context}, we frame the platform's decision as a \emph{multi-fidelity multi-armed bandit (MAB)} problem. This formulation allows us to explicitly trade off evaluation cost against precision. 

We consider a generative advertising auction with $N$ advertisers. The platform must select an \emph{influence configuration} (or \emph{action}) $\mathbf{s}$ from an action space $\mathcal{A} = S^N$. This action is a vector of \emph{strength levels} $\mathbf{s} = (s_1, \ldots, s_N)$, where each $s_i \in S = \{0, 1, \ldots, k\}$. Each strength level $s_i$ quantifies an advertiser's prominence in the output: $s_i=0$ represents total exclusion, while $s_i=k$ triggers highly promotional language. As we will see, $s_i=0$ is essential for the VCG payment calculation to achieve incentive alignment.

\paragraph{Our Objective: Expected Social Welfare}
The platform operates in short, repeated time windows (e.g., hourly to account for user distribution change). Within each window, we aim to identify the optimal action $\mathbf{s}^*\in \mathcal{A}$ that maximizes the expected social welfare (the sum of advertisers' values plus the user's value), denoted by $W(\mathbf{s})$, across a distribution of users $\mathcal{U}$. Note that $\mathcal{U}$ does not merely represent simple demographic categories, but the vast, high-dimensional space of possible user queries and conversation histories.

\paragraph{Multi-Fidelity MAB Formulation}
To optimize this process, we model the task as a multi-fidelity MAB problem. Each unique action $\mathbf{s} \in \mathcal{A}$ is an \emph{arm} whose latent value is the expected welfare of a complete LLM response generated with influence configuration $\mathbf{s}$. 
We introduce $F$ distinct \emph{fidelity levels} $f \in \{1, \ldots, F\}$.\footnote{While our primary focus throughout this paper is on text-based generative advertising, where fidelity is linked to text generation length (i.e., the length of the prefix generated), we later demonstrate that our framework naturally extends to multi-modal generative advertising (image generation) by defining fidelity levels in terms of image resolutions.}
A lower fidelity level $f$ generates a shorter response prefix $c_f$, e.g., the first $100f$ (or $30 \cdot 2^{f-1}$ or any increasing function of $f$) tokens, which can be ``upgraded'' to a higher fidelity $f' > f$ by continuing generation. Each level incurs a token generation cost $C(f)$, where $C(1) < \dots < C(F)$. The \emph{maximum fidelity level} $F$ corresponds to the generation of a complete LLM response.

The outcome of any action $\mathbf{s}$ is inherently stochastic, governed by two random variables: 
(1) the \emph{user distribution} $\mathcal{U}$, which characterizes the diverse range of user contexts and intents encountered within the current window,
and (2) the \emph{stochastic generation} $c$, drawn from the LLM's generative distribution $P(c \mid \mathbf{s}, f, u)$, which is conditional on the action, fidelity, and context.

\paragraph{Fidelities as Informed Estimators}
By leveraging the Law of Total Expectation, we ensure that the intermediate welfare evaluated at any prefix $c_f$ remains a consistent estimator of the final outcome.
Let $c_F$ denote the complete response at maximum fidelity $F$. Due to the stochastic nature of LLM decoding, we define the value of a prefix, $V_i(c_f)$, as the expected utility of the final response $c_F$, conditioned on the observed prefix $c_f$:
\begin{align*}
V_i(c_f) \coloneqq\mathbb{E}_{c_{f+1:F}}[V_i(c_F) \mid c_f],\ \ 
V_U(c_f) \coloneqq\mathbb{E}_{c_{f+1:F}}[V_U(c_F) \mid c_f].
\end{align*}
We define the realized social welfare $W_{\text{realized}}(c_f)\coloneqq \sum_i V_i(c_f)+V_U(c_f)$. 
Under the Law of Total Expectation, these definitions ensure that the expected welfare at any fidelity $f$ satisfies:
\begin{align*}
\mathbb{E}_{u, c_f}[W_{\text{realized}}(c_f)]
=  \mathbb{E}_{u, c_f} \left[ \mathbb{E}_{c_{f+1:F}} \left[ W_{\text{realized}}(c_F) \mid c_f \right] \right]
=  \mathbb{E}_{u, c_F} [W_{\text{realized}}(c_F)] = W(\mathbf{s}).
\end{align*}

In practice, the platform obtains estimates $\hat{V}_i(c_f)$ and $\hat{V}_U(c_f)$ from \emph{fine-tuned LLM agents}. Each agent functions as a specialized proxy for a specific stakeholder (each advertiser and the user), trained to predict the utility of a full response based on a partial prefix (detailed below in \emph{Arm Pull Process} paragraph). 
While actual agent outputs may introduce \emph{estimation bias} if not perfectly calibrated, our multi-fidelity optimization (MFO) methods only require that the signal $\hat{W}_{\text{realized}}(c_f)$ remains correlated with $W(\mathbf{s})$. 

This converts the problem into a \emph{cost-variance trade-off}: lower fidelities offer computationally cheap but higher-variance signals, while higher fidelities provide more precise estimations at a higher cost. This structure makes our problem ideally suited for MFO methods: leveraging inexpensive, high-variance signals to identify promising candidates within the action space before committing the substantial budget required for their high-precision evaluation at maximum fidelity.

\paragraph{Arm Pull Process (Evaluation via LLM Agents).}
Pulling and evaluating an action $\mathbf{s}$ at fidelity $f$ involves: (1) sampling a user context $u \sim \mathcal{U}$, (2) generating prefix $c_f$ with action $\mathbf{s}$ and context $u$, incurring cost $C(f)$, and (3) querying LLM agents to obtain predictive valuations $\hat{V}_i(c_f)$ and $\hat{V}_U(c_f)$. Critically, LLM agents perform this evaluation orders of magnitude faster than human reviewers. This process also allows for \emph{human-in-the-loop} alignment, where humans can periodically audit and correct agent outputs to refine the agents' underlying valuation models.

Our goal is to identify the best influence configuration $\mathbf{s}^*$ given a fixed computational budget $B$:
\begin{equation}
\mathbf{s}^* \coloneqq \arg\max_{\mathbf{s} \in \mathcal{A}} \hat W(\mathbf{s}).
\end{equation}

\section{Incentive-Aware Multi-Fidelity Mechanism}
\label{sec:iamfm_framework}

Our \emph{Incentive-Aware Multi-Fidelity Mechanism (IAMFM)} integrates a \emph{VCG-based incentive layer} with an \emph{MFO-based optimization engine} to address the economic and computational challenges. We now detail its two complementary instantiations.

\subsection{Elimination-Based Instantiations: IAMFM-SH and IAMFM-ASH}
\label{sec:sh_algorithms}

We introduce two elimination-based instantiations that navigate the \emph{cost-variance trade-off} via a multi-stage tournament structure. We employ the classic \emph{Successive Halving (IAMFM-SH)} as a baseline and propose a novel variant, \emph{Adaptive Successive Halving (IAMFM-ASH)}.

Both algorithms begin with the full action space at the lowest fidelity and progressively filter out low-performing arms at each stage. The critical difference lies in how the budget is allocated \emph{within} a stage:
\begin{itemize}
    \item \textbf{IAMFM-SH} allocates the stage budget uniformly across all surviving arms. (See Appendix~\ref{apdx:sh_algorithm} for details and pseudocode).
    \item \textbf{IAMFM-ASH} (Algorithm~\ref{alg:ash}) replaces uniform allocation with a \emph{UCB-based adaptive sampling strategy}. By dynamically prioritizing arms to balance the exploitation of promising candidates with the exploration of uncertain ones, IAMFM-ASH largely enhances robustness to the stochastic noise inherent in LLM generation.
\end{itemize}

\begin{figure}[tb]
    \centering
    \begin{minipage}[t]{0.503\textwidth}
        \begin{algorithm}[H]
            \caption{IAMFM-ASH (Elimination-Based)}
            \label{alg:ash}
            \small 
            \SetAlgoLined
            \SetAlgoNoEnd
            \DontPrintSemicolon
            \KwIn{Action space $\mathcal{A}$, fidelities $1..F$, budget $B$, rate $\eta$}
            \textbf{Initialize:} $\mathcal{A}_1 \leftarrow \mathcal{A}$\;
            \For{$f = 1$ \KwTo $F$}{
                $B_f \leftarrow$ allocate stage budget from $B$\;
                \emph{Intra-Stage Adaptive Sampling:}\;
                \For{$t = 0$ \KwTo $B_f - 1$}{
                    $\mathbf{s}_{t+1} \gets \arg\max\limits_{\mathbf{s} \in \mathcal{A}_f} \left( \hat{\mu}_{t}(\mathbf{s}, f) + \beta_t \frac{\hat{\sigma}_{t-1}(\mathbf{s}, f)}{\sqrt{n_{t-1}(\mathbf{s}, f)}} \right)$\;
                    Sample $u \sim \mathcal{U}$, generate $c_f$ with $\mathbf{s}_{t+1}$, update stats\;
                }
                \vspace{0.1cm}
                \If{$f < F$}{
                    \textbf{Elimination:} $\mathcal{A}_{f+1} \leftarrow$ Top $1/\eta$ fraction of arms in $\mathcal{A}_f$ by mean $\hat{\mu}(\mathbf{s}, f)$\;
                }
            }
            \Return $\mathbf{s}^* = \arg\max_{\mathbf{s} \in \mathcal{A}_F} \hat{\mu}(\mathbf{s}, F)$
        \end{algorithm}
    \end{minipage}
    \hfill 
    \begin{minipage}[t]{0.48\textwidth}
        \begin{algorithm}[H]
            \caption{IAMFM-MFBO (Model-Based)}
            \label{alg:mfbo_compact}
            \small 
            \SetAlgoLined
            \SetAlgoNoEnd 
            \DontPrintSemicolon 
            \KwIn{Action space $\mathcal{A}$, fidelities $1..F$, budget $B$, reserve $R$}
            \textbf{Initialize:} Train GP surrogate on initial random seed samples $\mathcal{D}_0$\;
            \While{$\text{Cost}(\mathcal{D}_t) < B - R$}{
                Update exploration weight $\beta_t$ via decaying schedule\;
                Identify feasible set: \\
                $\mathcal{F}_t \leftarrow \{(\mathbf{s}, f) \mid \text{Cost}(\mathbf{s}, f) \le B_{\text{remain}} - R\}$\;
                \lIf{$\mathcal{F}_t = \emptyset$}{\textbf{break}} 
                Select $(\mathbf{s}_t, f_t) \leftarrow \arg\max_{(\mathbf{s}, f) \in \mathcal{F}_t} \text{UCB}_t(\mathbf{s}, f)$\;
                Evaluate $(\mathbf{s}_t, f_t)$ to obtain welfare $y_t$\;
                Update dataset $\mathcal{D}_{t+1} \leftarrow \mathcal{D}_t \cup \{(\mathbf{s}_t, f_t, y_t)\}$ and GP posterior\;
            }
            \Return $\mathbf{s}^* = \arg\max_{\mathbf{s} \in \mathcal{A}} \hat \mu_T(\mathbf{s}, F)$
        \end{algorithm}
    \end{minipage}
\end{figure}

\subsection{Model-Based Instantiation: IAMFM-MFBO}
\label{sec:iamfm_mfbo}

Unlike elimination strategies, \emph{IAMFM-MFBO} (Algorithm~\ref{alg:mfbo_compact}) utilizes Multi-Fidelity Bayesian Optimization (MFBO) to build a global \emph{Gaussian Process (GP) surrogate model}, $W(\mathbf{s}, f)$, over the joint configuration-fidelity space.
Specifically, we model the unknown welfare function $W$ as a sample from a Gaussian Process. Under the standard GP assumption, conditioned on observed history, the value of $W(\mathbf{s}, f)$ follows a normal distribution characterized by a predictive mean $\hat \mu_t$ and variance $\hat \sigma_t^2$.

\textbf{Gaussian Process Proxy.}
The GP's covariance structure enables \emph{information transfer}: observations at cheap, low-fidelity levels ($f < F$) update the model's posterior belief (mean $\hat \mu$ and uncertainty $\hat \sigma$) regarding expensive, high-fidelity outcomes globally. This allows the platform to use noisy, inexpensive signals to identify promising regions of the action space without exhaustive high-fidelity sampling. See Appendix~\ref{apdx:mfbo_algorithm_kernel} for kernel specifications.

\textbf{Cost-Aware Search.}
To balance exploration and exploitation, we employ a budget-aware Upper Confidence Bound (UCB) acquisition function: $\text{UCB}_t(\mathbf{s}, f) = \hat \mu_t(\mathbf{s}, f) + \sqrt{\beta_t} \cdot \hat \sigma_t(\mathbf{s}, f)$.
At each step, the algorithm selects the pair $(\mathbf{s}, f)$ with the highest UCB score from the set of {feasible} options.
The final recommendation is the arm with the highest posterior mean at max fidelity: $\mathbf{s}^* = \arg\max_{\mathbf{s}} \hat \mu_T(\mathbf{s}, F)$.
The full details are in Appendix~\ref{apdx:mfbo_algorithm}.

\textbf{Theoretical Guarantee.} 
Under standard regularity assumptions, IAMFM-MFBO achieves a sub-linear simple regret bound of $r_T = \mathcal{O}(\sqrt{\gamma_T/T})$, where $\gamma_T$ is the maximum information gain about the function $\hat \mu$ after $T$ steps, guaranteeing convergence to the optimal expected welfare. (Full proposition and proof in Appendix~\ref{apdx:mfbo_theory_proofs}).

\vspace{-0.5cm}
\section{VCG Payments and Incentives}
\label{sec:vcg_payment}

\subsection{Mechanism Design and Payment Rule}
In a decentralized ecosystem, advertisers are strategic agents who may misreport their valuations to manipulate the outcome. To ensure that truthful reporting is a dominant strategy, we employ the \emph{VCG mechanism}. This aligns individual incentives with the global objective by charging each agent the \emph{opportunity cost} they impose on others.

Let $\mathbf{s}^*$ be the optimal strategy found by our algorithm. The payment $p_i$ for agent $i$ is calculated as: 
\begin{equation}
    p_i = \underbrace{\max_{\mathbf{s} \in \mathcal{A}, s_i=0} \hat W(\mathbf{s})}_{\hat W^{\neg i}_{\text{opt}}} - \sum_{j \neq i} \hat V_j(\mathbf{s}^*).
\end{equation}
Here, $\hat W^{\neg i}_{\text{opt}}$ represents the estimated \emph{counterfactual optimum}---the maximum welfare achievable if agent $i$ had not participated. 
Under this rule, an agent's utility becomes $U_i = \hat{V}_i(\mathbf{s}^*) - p_i = \hat{W}(\mathbf{s}^*) - \hat{W}^{\neg i}_{\text{opt}}$.
In the ideal case where estimates are perfect ($\hat{W}=W, \hat{V}=V$), maximizing $U_i$ is strictly equivalent to maximizing global welfare $W(\mathbf{s}^*)$ because $\hat{W}^{\neg i}_{\text{opt}}$ is independent of $i$'s bids.
However, since our budget-constrained algorithm only yields estimated values finding an $\varepsilon$-approximate solution, these strict properties relax to \emph{approximate guarantees}. 

\begin{proposition}[Bounded incentive to misreport under approximate welfare optimality]\label{prop:vcg_sp}
Suppose the influence configuration $\mathbf{s}^*$ chosen by our algorithm (when all agents are truthful, $b=V$) is $\varepsilon$-approximate with respect to the true optimal \emph{expected} welfare, in the sense that
\[W(\mathbf{s}^*) \ge W(\mathbf{s}) - \varepsilon \quad \text{for all strategies } \mathbf{s} \in \mathcal{A} \ ,\] then for every single agent $i \in N$ and for every unilateral misreport $b_i'$, their \emph{expected} utility $U_i$ satisfies:
\[U_i(b_i', V_{-i}) \le U_i(V_i, V_{-i}) + \varepsilon \]
where $U_i(b) = V_i(\mathbf{s}^*(b)) - p_i(\mathbf{s}^*(b))$ is agent $i$’s true expected utility under reports $b$.
\end{proposition}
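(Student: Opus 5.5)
The plan is to write agent $i$'s true expected utility under any report profile as total true welfare minus a term that does not depend on $i$'s report, and then apply the $\varepsilon$-approximation hypothesis once. The only subtle point is which welfare the Clarke pivot term measures.

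\textbf{Step 1: decompose utility.} Fix the truthful reports $V_{-i}$ of the other agents. Let $\mathbf{s}' = \mathbf{s}^*(b_i', V_{-i})$ be the configuration selected under a misreport, and $\mathbf{s}^* = \mathbf{s}^*(V_i,V_{-i})$ the one selected under truthful reporting. Agent $i$'s payment is $p_i = \hat W^{\neg i}_{\text{opt}} - \sum_{j\neq i} V_j(\cdot)$. The others report truthfully, so the sum is evaluated at their true expected values at the chosen configuration. The pivot term $h_i \coloneqq \hat W^{\neg i}_{\text{opt}}$ is a maximum over configurations with $s_i=0$. The welfare inside it is computed from the reports of agents $j\neq i$ and the user; under $s_i = 0$ agent $i$ is excluded from the response, so $i$'s report does not enter the objective.

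I treat $h_i$ as a quantity independent of $b_i$; this is the step needing most care. It holds because the optimization defining $h_i$ ranges only over $\{\mathbf{s} : s_i=0\}$ and uses no data reported by $i$. If the warm-start ACO procedure were allowed to reuse main-phase samples depending on $b_i$, this independence would fail. So I would state explicitly, as in the paper's sentence preceding the proposition, that $\hat W^{\neg i}_{\text{opt}}$ is independent of $i$'s bids.

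\textbf{Step 2: express both utilities as welfare.} Under any report profile,
\[
U_i(b_i',V_{-i}) = V_i(\mathbf{s}') + \sum_{j\neq i} V_j(\mathbf{s}') - h_i = W(\mathbf{s}') - h_i ,
\]
where $W$ includes the user's value. I take the user term to be part of the "others" sum in the payment, since the user is a non-strategic participant whose value the mechanism counts. Identically, $U_i(V_i,V_{-i}) = W(\mathbf{s}^*) - h_i$.

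\textbf{Step 3: apply the hypothesis.} The configuration $\mathbf{s}^*$ is $\varepsilon$-optimal for the true expected welfare over all of $\mathcal{A}$, and $\mathbf{s}'\in\mathcal{A}$. Hence $W(\mathbf{s}') \le W(\mathbf{s}^*)+\varepsilon$. Subtracting $h_i$ from both sides gives
\[
U_i(b_i',V_{-i}) \le U_i(V_i,V_{-i}) + \varepsilon .
\]
The bound holds for every misreport, because no property of $\mathbf{s}'$ is used beyond $\mathbf{s}' \in \mathcal{A}$.

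The main obstacle is the modeling in Steps 1–2, not the inequality itself. Two points must be justified: that the $\sum_{j\ne i}$ term in the payment uses the others' (truthful, hence true) expected values together with the user's value, so that it recombines with $V_i$ into exactly $W$; and that $h_i$ does not move with $b_i'$. Once these two identifications are fixed, the argument is a one-line comparison.
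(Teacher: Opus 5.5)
Your proof is correct and is essentially the paper's argument. Both write each utility as $W(\mathbf{s}^*(\cdot)) - h_i$, with the pivot term independent of $b_i$, and then apply the $\varepsilon$-optimality of the truthful outcome to the deviation outcome $\mathbf{s}^*(b_i',V_{-i})\in\mathcal{A}$; your explicit treatment of the user's value inside the $\sum_{j\neq i}$ term and of the independence of $\hat W^{\neg i}_{\text{opt}}$ from $b_i$ makes precise two points the paper leaves implicit by assuming the Groves form $h_i(b_{-i})$.
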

Our mechanism also guarantees \emph{approximate individual rationality} (truthful participation yields expected utility $U_i \ge -\varepsilon$, which ensures that advertisers effectively never lose money by participating, encouraging them to join the auction), and \emph{approximate no-subsidy} (the mechanism requires a deficit subsidy of at most $n\varepsilon$, which guarantees that the platform does not need to inject significant external funds to keep the system running, ensuring economic viability). Formal statements and proofs are in Appendix~\ref{apdx:rationale_proofs}.

\subsection{Active Counterfactual Optimization (ACO)}
\label{sec:active_vcg}
A major bottleneck in VCG is computational cost: calculating payments requires solving $N$ additional ``cold-start'' optimization problems to find each $W^{\neg i}_{\text{opt}}$.
To solve this, we introduce the \emph{Active Counterfactual Optimization (ACO)} framework. The core idea is to ``warm-start'' the counterfactual searches by reusing information gathered during the main optimization phase. It substantially reduces the cost of payment calculation compared to naive methods. 
\begin{itemize}
    \item For \emph{IAMFM-MFBO}, we reuse the trained Gaussian Process from the main optimization phase as a \emph{global prior}, allowing the counterfactual search to converge with negligible additional samples.
    \item For \emph{IAMFM-ASH}, we reuse the \emph{sample history}  from the main tournament phase to initialize the counterfactual search.
\end{itemize}
Detailed algorithms for ACO are provided in Appendix~\ref{apdx:aco_details}. 
While both warm-start strategies outperform cold-start optimization, ACO for IAMFM-MFBO is substantially more sample-efficient. The ``global prior'' captures structural trends across the entire action space, whereas in ACO for IAMFM-ASH, the ``local reuse of sample history'' is restricted to the specific points sampled by chance. Consequently, as the number of advertisers $N$ grows, the budget savings from IAMFM-MFBO's global warm-start become increasingly dominant, as shown in our experiments in Sec.~\ref{sec:experiments}.

\section{Experimental Evaluation}
\label{sec:experiments}

We evaluate our proposed instantiations of the IAMFM framework against standard single-fidelity baselines. Our framework assesses performance across a wide range of computational budgets, reflecting the real-world constraints of pre-calculating strategies in LLM-based advertising systems.

\begin{figure*}[t]
\centering
\includegraphics[width=1\linewidth]{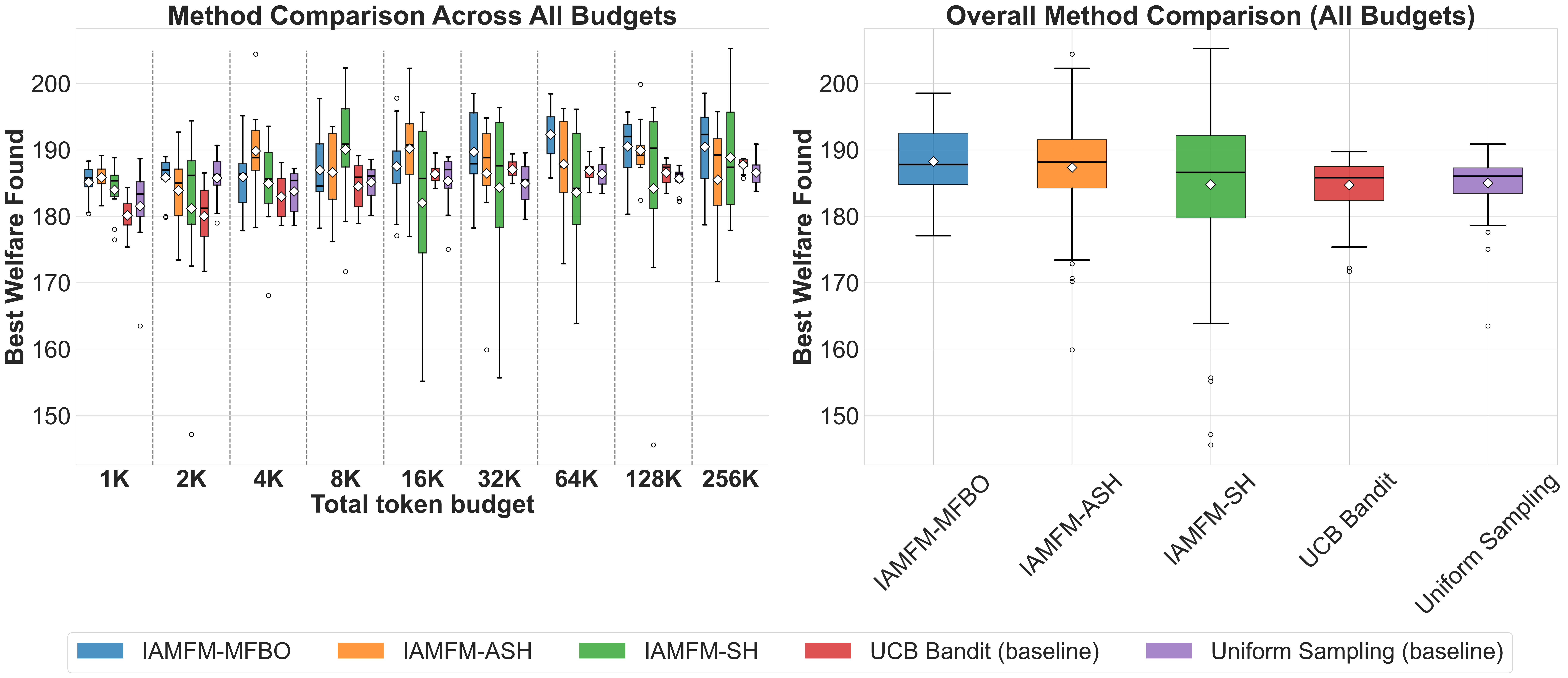}
\caption{Comparison of welfare across all methods. Left: Box plots stratified by token budget summarize performance, showing the median (center line), mean (diamond), interquartile range (box), and outliers (points). All reported statistics include outliers. The IAMFM methods outperform the baselines and reveal a token-budget-dependent shift in the optimal method. Right: Aggregated performance across all budgets. Our proposed multi-fidelity instantiations (IAMFM-MFBO, IAMFM-ASH) show better median performance, while IAMFM-MFBO demonstrates a clear advantage at higher budgets.}
\label{fig:unified_comparison}
\end{figure*}

\textbf{Simulation Environment.}
We evaluate our algorithms in a simulated food court recommendation system with two strategic advertisers and five distinct user personas (e.g., health-conscious, budget-conscious; see Appendix~\ref{apdx:exp_details} for details). 
The action space is discrete, consisting of all 25 strength combinations ($s_A, s_B \in \{0, \dots, 4\}$).
The objective is to find the configuration that maximizes the \emph{expected social welfare} (sum of advertisers' and user's utilities), which is estimated by LLM evaluators.

We define four fidelity levels with corresponding token costs: {F1} (30 tokens), {F2} (60 tokens), {F3} (120 tokens), and {F4} (240 tokens). 
We compare our multi-fidelity instantiations against two standard single-fidelity baselines: {UCB Bandit} and {Uniform Sampling} (both operating at max fidelity). 
Performance is evaluated across 8 computational budgets (1k to 128k tokens) with 10 independent trials each.

\textbf{Results and Analysis.}
Our experiments reveal a distinct performance trade-off governed by the available computational budget, as shown in Figure~\ref{fig:unified_comparison} and Table~\ref{tab:budget_regime_performance}.

In the \emph{Low-Budget Regime} ($\le 16\text{k}$ tokens), both {IAMFM-ASH} and {IAMFM-MFBO} dominate, achieving statistically significant gains over single-fidelity baselines (e.g., ASH vs. UCB: $p < 0.0001$). At this stage, data is scarce, and the primary value comes from aggressively pruning poor configurations using cheap, low-fidelity samples. Consequently, the heuristic elimination of ASH is as effective as the model-based MFBO ($p = 0.38$).
However, the landscape shifts in the \emph{High-Budget Regime} ($\ge 32\text{k}$ tokens). Here, {IAMFM-MFBO} emerges as the statistically superior strategy, outperforming all baselines and, critically, surpassing IAMFM-ASH ($p = 0.036$). As the budget allows for more pulls, the GP surrogate's ability to model the global landscape becomes the deciding factor, finding precise optima that heuristic elimination strategies miss. See Appendix~\ref{apdx:stat_details} for more details. 

Another finding is the superior stability of our model-based approach. The standard elimination baseline ({IAMFM-SH}) suffers from high variance (std. dev. $10.1$--$11.7$), confirming that rigid elimination schedules are brittle: one unlucky low-fidelity sample can permanently discard the optimal arm. 
In contrast, {IAMFM-MFBO} achieves the \emph{lowest variance} among all methods (std. dev. $5.0$--$5.5$). By aggregating information across the entire action-fidelity space, the global GP surrogate effectively recovers the latent signal from the stochastic LLM generations, preventing the algorithm from being misled by noisy samples. This makes MFBO the most robust choice for production deployment.

We observe that as budgets grow very large ($>128\text{k}$ tokens), the performance gap between our methods and brute-force baselines (Uniform/UCB) narrows. This is expected: with infinite resources, even random sampling will eventually find the optimum. The significance of IAMFM is that it achieves near-optimal welfare with {substantially fewer tokens}, operating efficiently in the realistic regime where brute-force is cost-prohibitive.

\textbf{VCG Payment Calculation Efficiency.}
We evaluated the efficiency of our \emph{Active Counterfactual Optimization (ACO)} (details in Appendix~\ref{apdx:aco_analysis}). \emph{ACO for MFBO} proved decisively superior: by reusing the global GP prior, it achieved near-zero estimation error for the counterfactual optimum with \emph{zero additional budget}, reducing VCG computational overhead by $>99\%$ compared to cold-start baselines. In contrast, \emph{ACO for ASH} offered only modest improvements, highlighting that global surrogate models are vastly better at making truthful mechanisms computationally feasible than local sample-history reuse.

\begin{table}[tb]
\centering
\vspace{-0.8cm}
\caption{Aggregated performance (mean welfare $\pm$ std. dev.) stratified by budget regime. Methods are sorted by High-Budget performance. \textbf{Bolded} are proposed methods.}
\label{tab:budget_regime_performance}
{\small
\vspace{-0.2cm}
\begin{tabular}{@{}lcc@{}}
\toprule
\multirow{2}{*}{\textbf{Method}} & \multicolumn{2}{c}{\textbf{Mean Welfare $\pm$ Std. Dev.}} \\ \cmidrule(l){2-3} 
 & \textbf{Low-Budget ($\leq 16\text{k}$ tokens)} & \textbf{High-Budget ($\geq 32\text{k}$ tokens)} \\ \midrule
\textbf{IAMFM-MFBO} & \textbf{186.26 $\pm$ 5.02} & \textbf{190.73 $\pm$ 5.53} \\
\textbf{IAMFM-ASH}  & \textbf{187.26 $\pm$ 6.00} & \textbf{187.43 $\pm$ 7.93} \\
UCB Bandit          & 182.78 $\pm$ 4.15          & 187.04 $\pm$ 1.58 \\
Uniform Sampling    & 184.26 $\pm$ 4.62          & 185.87 $\pm$ 2.41 \\
IAMFM-SH            & 184.42 $\pm$ 10.10         & 185.21 $\pm$ 11.68 \\ \bottomrule
\vspace{-0.5cm}
\end{tabular}
}
\end{table}

\textbf{Qualitative Analysis: The Failure of Greedy Strategies.}
While a greedy strategy (na\"{\i}vely optimizing each advertiser's strength independently as if they were the sole sponsor) pushes the LLM into semantic collapse (producing disjointed text or hallucinated logos), IAMFM identifies the optimal strength configuration that maintains structural coherence. A detailed qualitative text-generation case study demonstrating this phenomenon is provided in Appendix~\ref{apdx:example_qualitative_analysis}.

\textbf{Extension to Multi-Modal Generative Advertising (Image Generation).}
To demonstrate IAMFM's generalizability, we extended our evaluation to visual advertising (introduced in Sec.~\ref{sec:context}, Fig.~\ref{fig:visual_saturation}). Using a distilled diffusion model (\texttt{flux-1-schnell}), we defined F1 as a low-cost, low-step ($512\times512$) exploration phase and F2 as a high-fidelity \emph{img2img} upscaling phase, with a Vision-Language Model (\texttt{llama-3.2-11b-vision-instruct}) serving as the evaluator agent. Consistent with our text-based results, {IAMFM-MFBO} successfully navigated this multi-modal cost-variance trade-off, efficiently identifying the optimal visual saturation point and outperforming single-fidelity baselines by $+4.8$ welfare at $B=4000$. 
Table~\ref{tab:image_results} summarizes the performance of the algorithms across various computational budgets. Full multi-modal experimental details are provided in Appendix~\ref{apdx:image_experiment}.

\begin{table}[h]
\centering
\caption{Algorithm performance (Mean Welfare $\pm$ Std. Dev.) on the Image Generation task across 20 independent trials. {IAMFM-MFBO} demonstrates a widening performance gap over baselines as the budget scales.}
\label{tab:image_results}
\resizebox{\columnwidth}{!}{%
\begin{tabular}{@{}lccccc@{}}
\toprule
\textbf{Budget} & \textbf{IAMFM-MFBO} & \textbf{IAMFM-ASH} & \textbf{IAMFM-SH} & \textbf{UCB Bandit} & \textbf{Uniform Sampling} \\ \midrule
500 & 231.25 $\pm$ 5.29 & \textbf{233.12} $\pm$ 1.97 & 231.24 $\pm$ 3.64 & 225.41 $\pm$ 2.14 & 228.95 $\pm$ 5.45 \\
1,000 & \textbf{232.07} $\pm$ 5.86 & 230.56 $\pm$ 5.79 & 231.96 $\pm$ 7.39 & 229.12 $\pm$ 2.87 & 230.80 $\pm$ 5.08 \\
2,000 & \textbf{233.29} $\pm$ 7.93 & 231.99 $\pm$ 5.99 & 233.19 $\pm$ 5.40 & 229.67 $\pm$ 4.56 & 230.56 $\pm$ 5.65 \\
4,000 & \textbf{238.24} $\pm$ 6.85 & 231.43 $\pm$ 5.20 & 233.41 $\pm$ 5.04 & 233.14 $\pm$ 5.23 & 232.25 $\pm$ 4.85 \\
8,000 & \textbf{240.26} $\pm$ 5.92 & 232.88 $\pm$ 5.79 & 231.48 $\pm$ 6.79 & 235.58 $\pm$ 2.16 & 229.98 $\pm$ 5.05 \\ \bottomrule
\vspace{-0.5cm}
\end{tabular}
}
\end{table}

\section{Conclusion}
\label{sec:conclusion}

We proposed the \emph{Incentive-Aware Multi-Fidelity Mechanism (IAMFM)}, integrating VCG incentives with multi-fidelity optimization to address the economic and computational constraints of generative advertising. Empirically, we found that the optimal strategy is \emph{budget-dependent}: IAMFM-ASH performs comparably well in low-budget regimes, while IAMFM-MFBO dominates at higher budgets. Additionally, we demonstrated that our ACO framework renders VCG payment calculations computationally feasible by reusing the data and model from the optimization process. 
Theoretically, our framework provides ex-ante guarantees for approximate strategy-proofness and individual rationality. Practically, IAMFM successfully prevents the semantic collapse caused by naive over-promotion, establishing a robust algorithmic and economic foundation for future multi-stakeholder LLM ecosystems.  

Beyond advertising, IAMFM lays a robust algorithmic foundation for any multi-stakeholder LLM ecosystem, paving the way for applications in domains like collaborative content creation and AI-assisted scientific discovery.
Extended discussions on alternative steering mechanisms, broader applications, and user transparency are provided in Appendix~\ref{apdx:extended_discussion}.

\section*{Acknowledgment}
Jiayuan Liu and Vincent Conitzer thank the Cooperative AI Foundation,  Macroscopic Ventures (formerly Polaris Ventures / the Center for Emerging Risk Research) and Jaan Tallinn’s donor-advised fund at Founders Pledge for financial support.

\bibliography{colm2026_conference}
\bibliographystyle{colm2026_conference}

\appendix

\section{Detailed Related Work}
\label{sec:related_work}
This work lies at the intersection of mechanism design for advertising, computationally aware optimization, and the emerging application of LLMs in multi-agent systems.

\paragraph{Influencing LLMs}

We are interested in methods that increase exposure of certain topics and brands, or even produce customizable messages, without any post-generation editing. Influencing modern, transformer-based LLMs can occur at all stages. At post-training, parameter-efficient fine-tuning (PEFT) inserts small modules (adapters like LoRA) that achieve strong effects but require careful composition (e.g., AdapterFusion and LoRA can be used to blend multiple styles or stakeholder goals \citep{houlsby2019parameter,hu2022lora,pfeiffer2021adapterfusion,huang2023lorahub}). At inference, cheap prompting and system messages remain the most agile levers, though they can be brittle or context-hungry \citep{brown2020language}. Plug-and-play decoding steers generation via auxiliary scores or gradients without updating weights at some latency/fluency cost \citep{dathathri2020plug,krause2021gedi,yang2021fudge}. Activation steering with sparse autoencoders edits interpretable, extensible features on-the-fly, offering modular, transparently continuous control knobs on influence strength \citep{obrien2024sae, koriagin2025teach}. By contrast, deliberate curation of duplicated data, full pretraining, RLHF, or DPO hard-wire behaviors \citep{ouyang2022training,rafailov2023direct}, which are impractical for rapid, per-query, multi-stakeholder advertising.

\paragraph{Mechanism Design for Advertising Auctions}

The field of advertising auctions is foundational to our work. Traditional search advertising has been dominated by the Generalized Second-Price (GSP) auction \cite{edelman2007internet,varian2007position}. While widely adopted, GSP is not strategy-proof. The Vickrey-Clarke-Groves (VCG)~\cite{vickrey1961counterspeculation,clarke1971multipart,groves1973incentives} mechanism serves as the primary truthful alternative, guaranteeing that truthful bidding is a dominant strategy for advertisers.

However, these classical models were designed for discrete, ``atomic'' ad slots. They do not address the challenges of generative advertising, where the ``good'' being auctioned is a complex, co-created piece of content whose quality is a result of the combined inputs from multiple strategic agents. Our work extends the principles of VCG to this new, generative domain.

\paragraph{Computation-Aware Mechanism Design}

A significant challenge in mechanism design is the computational complexity of determining outcomes and payments, especially with many agents or a large outcome space. Prior work has addressed this in several ways. One line of research focuses on approximating VCG payments, often through sampling-based methods that can provide unbiased estimates but may suffer from high variance \cite{sandholm2005approximate}. Another line of work on algorithmic mechanism design concerns the joint design of (approximation) algorithms and payment functions that make those algorithms into truthful mechanisms~\cite{Nisan01:Algorithmic}. Yet another direction is automated mechanism design~\cite{conitzer2002complexity,sandholm2003automated}, which leverages optimization~\cite{conitzer2003applications,conitzer2003automated}, machine learning~\cite{RochetNet,GEMNet,shen2018automated,rahme2021permutation,feng2018deep}, and more recently large language models~\cite{liu2025interpretable} to derive optimal or approximately optimal mechanisms. These approaches often relax the requirement of perfect strategy-proofness to achieve computational tractability.

In our approach, rather than first determining an outcome and then struggling to compute payments, we integrate the search for the optimal outcome and the calculation of its associated payments into a single, unified optimization process. Our key contribution is a method to make the VCG calculation itself sample-efficient by reusing information.

\paragraph{Multi-fidelity Bayesian Optimization and Bandits}

To address the expensive, black-box nature of evaluating LLM outputs, we turn to the literature on sequential optimization. Our problem, with its discrete set of strength combinations, can be framed as a multi-armed bandit (MAB) problem. Since evaluations can be performed at varying levels of completeness (i.e., number of tokens), this is specifically a multi-fidelity optimization problem.

State-of-the-art methods for this include non-Bayesian approaches like Successive Halving (SH) and Hyperband \cite{li2018hyperband}, and Bayesian approaches like Multi-fidelity Bayesian Optimization (MFBO) \cite{kandasamy2017mfbo}. We build upon MFBO using a Gaussian Process surrogate model to allocate our limited computational budget. The primary novelty of our work is the application of MFBO not merely as an optimization tool for a single decision-maker, but as the core engine within an incentive-aware mechanism designed for strategic agents.

\paragraph{LLMs in Economic and Multi-Agent Settings}

A growing line of research explores the use of large language models (LLMs) in economic and multi-agent contexts. LLMs have been deployed to simulate human-like behavior as agents in various games \cite{horton2023agents}, and to serve as evaluators of complex, qualitative outcomes. Within this broader direction, and closest to our own work, a particularly active thread concerns auction mechanisms for generative content.

\citet{duetting2024mechanism} introduce a token-by-token bidding framework, in which advertiser agents place bids for insertion slots during text generation, laying the groundwork for applying sequential auction theory to LLMs. Building on this foundation, \citet{soumalias2024truthful} address vulnerabilities such as misreporting and the ``exposure problem,'' where small wording changes can drastically alter advertiser value. Their mechanisms emphasize truthfulness and robustness, often by penalizing deviations from a baseline ``truthful'' response. In parallel, \citet{dubey2024auctions} generalize position auction concepts, incorporating predicted click-through rates based on ad placement within LLM responses, an approach reminiscent of search engine advertising.

Despite these advances, several challenges remain. As \citet{feizi2023online} survey, current mechanisms struggle to aggregate multiple advertisers’ preferences consistently and to guarantee monotonic value propositions. Moreover, these efforts connect to broader debates in AI mechanism design about embedding fairness, robustness, and social impact into economic systems alongside traditional efficiency and revenue objectives \citep{finocchiaro2021bridging}.

\section{Algorithm Details}
\label{apdx:algo_details}

\subsection{IAMFM-SH}
\label{apdx:sh_algorithm}

Here we provide the details for the standard elimination baseline (Algorithm~\ref{alg:sh}). Successive Halving balances exploration and exploitation through staged commitment. Unlike IAMFM-ASH, which uses UCB scores to select arms during a stage, IAMFM-SH uses a fixed, uniform schedule.

As the algorithm progresses through stages $f \in \{1, \dots, F\}$, it aggressively reduces the active set of arms $\mathcal{A}_f$ by a factor of $\eta$. This ensures that the platform's computational budget is concentrated on the most promising candidates as their evaluation becomes more precise. 

\begin{algorithm}[htb]
\caption{IAMFM-Successive Halving (IAMFM-SH)}
\label{alg:sh}
\begin{algorithmic}[1]
\REQUIRE Action space $\mathcal{A}$, fidelities $1..F$, budget $B$, elimination rate $\eta$.
\STATE \textbf{Initialize:} $\mathcal{A}_1 \leftarrow \mathcal{A}$, determine samples per arm $n_f$.
\FOR{$f = 1$ \TO $F-1$}
    \FORALL{active arms $\mathbf{s} \in \mathcal{A}_f$}
        \STATE \textbf{Uniform Sampling:} Sample $u$, generate prefix $c_{f}$ for $n_f$ iterations.
        \STATE Update empirical mean $\hat{\mu}_f(\mathbf{s})$.
    \ENDFOR
    \STATE \textbf{Elimination:} Rank arms by $\hat{\mu}_f(\mathbf{s})$ and retain top $1/\eta$ fraction.
    \STATE \textbf{Fidelity Upgrade:} $\mathcal{A}_{f+1} \leftarrow$ Survivors; prepare for next fidelity.
\ENDFOR
\STATE \textbf{Final Selection:} Perform max-fidelity evaluation for $\mathbf{s} \in \mathcal{A}_F$.
\STATE \textbf{return} $\mathbf{s}^* = \arg\max_{\mathbf{s} \in \mathcal{A}_F} \hat{\mu}_F(\mathbf{s})$.
\end{algorithmic}
\end{algorithm}

\subsection{IAMFM-MFBO}
\label{apdx:mfbo_algorithm}

\subsubsection{Gaussian Process and Kernel Structure}\label{apdx:mfbo_algorithm_kernel}
To model the correlation between ``similar strength levels'' and ``adjacent fidelities'' as described in the methodology, we employ a \emph{Structured Product Kernel}. The kernel function $k(\mathbf{x}, \mathbf{x}')$ decomposes into a spatial component over the action space and a correlation component over the fidelity space:
\begin{equation}
    k(\mathbf{x}, \mathbf{x}') = \sigma^2 \cdot k_{\text{action}}(\mathbf{s}, \mathbf{s}') \otimes k_{\text{fidelity}}(f, f')
\end{equation}
\begin{itemize}
    \item \textbf{Action Kernel ($k_{\text{action}}$):} We use an \emph{RBF (Radial Basis Function) Kernel} with Automatic Relevance Determination (ARD). This captures the smoothness of the welfare landscape, assuming that configuration vectors $\mathbf{s}$ close in Euclidean distance yield similar welfare. ARD allows the model to learn that some advertisers may have a stronger influence on welfare (shorter lengthscales) than others.
    \item \textbf{Fidelity Kernel ($k_{\text{fidelity}}$):} We use a specialized \emph{Downsampling Kernel} (or Exponential Decay Kernel) to model the information transfer. It assumes that the correlation between fidelity $f$ and $f'$ decreases as the distance $|f - f'|$ increases, but remains non-zero, allowing low-fidelity observations to update high-fidelity beliefs.
\end{itemize}
Appendix~\ref{apdx:iamfm-details} contains the details for experiment implementation. 

\subsubsection{Dynamic Exploration Schedule ($\beta_t$)}\label{apdx:mfbo_algorithm_search}
The exploration-exploitation trade-off parameter $\beta_t$ in the UCB acquisition function is not constant. We implement a \emph{budget-dependent dynamic schedule}:
\begin{equation}
    \beta_t = \beta_{\text{start}} \cdot \left( \frac{B_{\text{remaining}}}{B_{\text{total}}} \right) ^ \gamma + \beta_{\text{min}}
\end{equation}
where $\beta_{\text{start}}$ is initialized high (encouraging broad exploration of the action space early on) and decays as the remaining budget decreases. This ``simulated annealing'' approach ensures the algorithm transitions from exploration to exploitation (refining the estimate of the best arm) as the budget is exhausted.

\subsubsection{Fiscal Feasibility and Budget Reserve}
The set of fiscally feasible options $\mathcal{F}_t$ ensures the algorithm never exhausts its budget on low-fidelity exploration without being able to verify the winner. We define a \emph{Strategic Reserve} $R_{final}$ sufficient to perform $K$ evaluations at the maximum fidelity $F$.
A pair $(\mathbf{s}, f)$ is in $\mathcal{F}_t$ if and only if:
\begin{equation}
    C(\mathbf{s}, f) \le B_{\text{current}} - R_{final}
\end{equation}
where $C(\mathbf{s}, f)$ is the marginal token cost to evaluate configuration $\mathbf{s}$ at fidelity $f$. This constraint forces the algorithm to stop exploration and switch to exploitation (final confirmation at fidelity $F$) when the budget becomes critical.

\subsubsection{Denoised Selection Policy}
Unlike standard bandit algorithms that might return the arm with the highest \emph{observed} reward, IAMFM-MFBO selects the final arm based on the \emph{Posterior Mean} of the GP:
\[ \mathbf{s}^* = \arg\max_{\mathbf{s} \in \mathcal{A}} \hat \mu_T(\mathbf{s}, F) \]
This utilizes the GP's ability to aggregate information from all multi-fidelity samples to ``denoise'' the stochastic LLM outputs, providing a robust recommendation even if the specific observations were subject to high variance.

\subsubsection{Theoretical Guarantees}
\label{apdx:mfbo_theory_proofs}

The efficacy of our \emph{IAMFM-MFBO} instantiation is grounded in the theoretical framework of Bayesian optimization. Its core MFBO engine uses the MF-GP-UCB acquisition function, which provides a principled mechanism to manage the exploration-exploitation trade-off. This ensures that the algorithm is guaranteed to converge to the optimal action. We formalize this guarantee, which follows from prior work, in the following proposition.

\begin{proposition}[Regret Bound for IAMFM-MFBO~\cite{srinivas2012information, kandasamy2019multi}]
\label{prop:regret}
Let the expected social welfare function $\mu: \mathcal{X} \to \mathbb{R}$, where $\mathcal{X}$ is the joint action-fidelity space, be a sample from a Gaussian Process. Under standard regularity assumptions on the kernel and noise, the \emph{simple regret} $r_T$ of the \emph{IAMFM-MFBO} algorithm after $T$ evaluations is bounded.
The simple regret $r_T = \hat \mu(\mathbf{s}^*, F) - \hat \mu(\mathbf{s}_T, F)$ compares the expected welfare $\mu$ of the true best action $\mathbf{s}^*$ (at max fidelity $F$) and the action $\mathbf{s}_T$ recommended by the algorithm. With high probability, this regret is bounded by:
\[
r_T = \mathcal{O}\left(\sqrt{\frac{\gamma_T}{T}}\right)
\]
where $\gamma_T$ is the maximum information gain about the function $\hat \mu$ after $T$ steps.
\end{proposition}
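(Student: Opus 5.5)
This is a direct corollary of the GP-UCB and multi-fidelity GP-UCB analyses in \cite{srinivas2012information, kandasamy2019multi}. The plan is to reduce the simple regret of the recommended arm to an average of per-round confidence widths, then bound that average by the information gain.

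\textbf{Step 1: confidence event.} First I fix a high-probability event $\mathcal{E}$ on which
\[
|\mu(\mathbf{x}) - \hat\mu_{t-1}(\mathbf{x})| \le \beta_t^{1/2}\hat\sigma_{t-1}(\mathbf{x})
\]
holds for all $\mathbf{x}=(\mathbf{s},f)$ and all $t \le T$.
\begin{itemize}
\item Because $\mathcal{X}$ is finite ($|\mathcal{A}|\cdot F$ points), this follows from Gaussian tail bounds and a union bound.
\item It requires $\beta_t = 2\log(|\mathcal{X}|\pi^2 t^2/(6\delta))$, which is Lemma 5.1 of \cite{srinivas2012information}.
\item The paper's annealed schedule is only a heuristic. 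I will assume it dominates this theoretical $\beta_t$, for instance via the $\beta_{\min}$ floor. This assumption is part of the ``standard regularity assumptions.''
\end{itemize}

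\textbf{Step 2: instantaneous regret on max-fidelity rounds.} On $\mathcal{E}$, at any round where the UCB maximizer is queried at fidelity $F$, the usual UCB argument gives
\[
\mu(\mathbf{s}^\star,F) - \mu(\mathbf{s}_t,F) \le 2\beta_t^{1/2}\hat\sigma_{t-1}(\mathbf{s}_t,F).
\]

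\textbf{Step 3: summing via information gain.} Summing the squared widths over rounds and using $\sigma^2 \le C\log(1+\sigma^{-2}\sigma^2)$ yields
\[
\sum_t \hat\sigma_{t-1}^2(\mathbf{x}_t) \le C_1\gamma_T,
\]
which is Lemma 5.3 of \cite{srinivas2012information}. Cauchy–Schwarz then bounds the average width by $\sqrt{C_1\beta_T\gamma_T/T}$. This gives $r_T = \mathcal{O}(\sqrt{\beta_T\gamma_T/T})$, and since $\beta_T$ is logarithmic this is $\tilde{\mathcal{O}}(\sqrt{\gamma_T/T})$, matching the stated rate up to log factors.

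\textbf{Step 4: passing to the recommended arm.} Simple regret is at most the minimum instantaneous regret, and hence at most the average. The paper, however, returns $\arg\max_{\mathbf{s}} \hat\mu_T(\mathbf{s},F)$ rather than a queried arm, so I will argue as follows.
\begin{itemize}
\item On $\mathcal{E}$, write $\mathbf{s}_T$ for the recommended arm. The posterior-mean maximizer satisfies
\[
\mu(\mathbf{s}^\star,F) - \mu(\mathbf{s}_T,F) \le 2\beta_T^{1/2}\max_{\mathbf{s}}\hat\sigma_T(\mathbf{s},F).
\]
\item To control $\max_{\mathbf{s}}\hat\sigma_T(\mathbf{s},F)$, I use that posterior variance is nonincreasing in $t$. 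I then either invoke the ``maximum variance reduction'' reasoning or, more simply, adopt the standard convention of recommending the best queried max-fidelity arm, which inherits the average bound.
\end{itemize}
I will state this convention explicitly.

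\textbf{Main obstacle: the multi-fidelity structure.} Rounds spent at $f<F$ do not directly produce regret terms at fidelity $F$. I would handle this by invoking the MF-GP-UCB analysis of \cite{kandasamy2019multi} rather than reproving it. That analysis assumes bounded fidelity bias $|\mu(\cdot,f)-\mu(\cdot,F)|\le\zeta_f$ and a threshold rule: query a lower fidelity only when its width exceeds $\gamma^{(f)}$. Under that rule, low-fidelity rounds are charged to cost, and their number is bounded via information gain on those fidelities. The max-fidelity rounds then give the $\sqrt{\gamma_T/T}$ rate, with $T$ counting max-fidelity-equivalent capital.

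Our budget-reserve feasibility constraint guarantees at least $K$ final max-fidelity evaluations, so the algorithm always ends with max-fidelity rounds. I would note this as the place where the paper's algorithm must be matched to the assumptions of \cite{kandasamy2019multi}.
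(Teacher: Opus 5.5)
\textbf{Overall.} Your route is the paper's. Its proof is a two-sentence sketch asserting that IAMFM-MFBO ``employs the same MF-GP-UCB core'' as Kandasamy et al.\ and so inherits the GP-UCB analysis of Srinivas et al. Your Steps 1--3 are the argument behind that citation, and your $\tilde{\mathcal{O}}(\sqrt{\beta_T\gamma_T/T})$ is the honest form of the stated rate.

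\textbf{The gap: the multi-fidelity step.} You flag this step but do not close it, and the paper's sketch asserts it away. Kandasamy et al.'s guarantee covers a two-stage rule: the action is chosen by an optimistic bound on the \emph{target} fidelity (the minimum over $f$ of $\zeta_f$-shifted UCBs), and the fidelity is then chosen separately by a cost-aware width threshold. Algorithm~\ref{alg:mfbo_compact} instead takes a joint $\arg\max$ of the unshifted $\mathrm{UCB}_t(\mathbf{s},f)$. When that maximizer lies at $f<F$, its UCB bounds $\mu(\mathbf{s}_t,f)$, not $\mu(\mathbf{s}_t,F)$. So your Step~2 says nothing about such rounds, and nothing limits how many there are. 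Cauchy--Schwarz over the $T_F$ max-fidelity rounds then gives $\sqrt{C_1\beta_T\gamma_T/T_F}$, not a rate in $T$. Citing Kandasamy et al.\ therefore proves a bound for a different algorithm.

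\textbf{A fix using the reserve.} You do not need Kandasamy et al.: your own remark about the reserve closes the gap if you use it quantitatively. As Appendix~\ref{apdx:iamfm-details} describes it, the reserve $R$ (25\% of $B$) is barred only to $f<F$ queries, and every $(\mathbf{s},F)$ query costs at most $C(F)$. Hence the final phase is plain GP-UCB at fidelity $F$ over all arms, lasting $T_F\ge\lfloor R/C(F)\rfloor$ rounds. Step~2 holds on each of these rounds. Their squared widths are bounded by the full sum $C_1\gamma_T$, since the earlier low-fidelity rounds act only as a warm start. Every evaluation costs at least some $c_{\min}>0$, so $T\le B/c_{\min}$. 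Thus $T_F$ is a constant fraction of $T$, and the rate in $T$ follows with a constant depending on $R/B$ and $C(F)/c_{\min}$.

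\textbf{Step 4.} Drop the ``maximum variance reduction'' option. It controls $\max_{\mathbf{s}}\hat\sigma_T(\mathbf{s},F)$ only for a sampler that queries the point of largest variance. UCB queries clearly suboptimal arms only a few times, so their posterior variance need not shrink at rate $\sqrt{\gamma_T/T}$. Also, ``the best queried max-fidelity arm'' inherits the average bound only if ``best'' means best in \emph{true} value. That is Kandasamy et al.'s oracle definition of simple regret, not something the algorithm can output.

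An implementable rule close to the paper's posterior-mean recommendation does work. At a round $t$ whose UCB maximizer is $(\mathbf{s}_t,F)$, let $\mathbf{m}_{t-1}=\arg\max_{\mathbf{s}}\hat\mu_{t-1}(\mathbf{s},F)$. Since $\hat\mu_{t-1}(\mathbf{m}_{t-1},F)\ge\hat\mu_{t-1}(\mathbf{s}_t,F)$ and $\mathrm{UCB}_t(\mathbf{m}_{t-1},F)\le\mathrm{UCB}_t(\mathbf{s}_t,F)$, you get $\hat\sigma_{t-1}(\mathbf{m}_{t-1},F)\le\hat\sigma_{t-1}(\mathbf{s}_t,F)$. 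On $\mathcal{E}$ this gives $\mu(\mathbf{s}^\star,F)\le\mathrm{UCB}_t(\mathbf{s}_t,F)\le\hat\mu_{t-1}(\mathbf{m}_{t-1},F)+\beta_t^{1/2}\hat\sigma_{t-1}(\mathbf{s}_t,F)\le\mu(\mathbf{m}_{t-1},F)+2\beta_t^{1/2}\hat\sigma_{t-1}(\mathbf{s}_t,F)$. So returning $\mathbf{m}_{\hat t-1}$ for the reserve-phase round $\hat t$ of smallest width inherits your averaged bound.

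For the final-round maximizer $\arg\max_{\mathbf{s}}\hat\mu_T(\mathbf{s},F)$ that the algorithm actually returns, the same argument only gives the width at the next would-be UCB query, which need not be below average. You should therefore either state the proposition for the modified recommendation rule or add an assumption that covers the final-round output.
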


\begin{proof}[Proof Sketch]
The proof follows directly from the regret analysis of the GP-UCB algorithm \cite{srinivas2012information}, which was subsequently extended to the multi-fidelity setting as analyzed in \cite{kandasamy2019multi}. Our IAMFM-MFBO instantiation employs the same MF-GP-UCB core as the latter. Therefore, under standard regularity assumptions, our algorithm inherits its theoretical convergence properties and regret bounds.
\end{proof}

Proposition \ref{prop:regret} confirms that our model-based algorithm (IAMFM-MFBO) is a provably efficient tool for our task, as it guarantees convergence to the arm with the highest expected welfare ($\hat \mu(\mathbf{s}^*, F)$).

This theoretical result is crucial for generative advertising. It shows that our mechanism is a \emph{provably efficient} method for identifying the combination of advertiser strengths that produces the highest \emph{expected} value for all participants. This foundation of efficiency and fairness is essential for the VCG payment mechanism, aligning incentives to favor near-optimal expected outcomes.

\section{Active Counterfactual Optimization (ACO) Details}
\label{apdx:aco_details}

Computing $N$ separate, cold-start optimizations for each counterfactual scenario is prohibitively expensive. We propose the \emph{Active Counterfactual Optimization (ACO)} framework to ``warm-start'' these subproblems by reusing information from the main optimization phase.

\paragraph{ACO for IAMFM-MFBO (Global Informative Prior)}
For our model-based instantiation, ACO exploits the \emph{generalization capability} of the GP surrogate from the main optimization. Because the GP learns the welfare landscape as a continuous function, samples where $s_i > 0$ still provide substantial information about the $s_i=0$ boundary through the covariance kernel.
\begin{enumerate}
    \item \textbf{Global Prior Transfer:} We initialize an $(N-1)$-dimensional MFBO process for the $s_i=0$ subspace, using the posterior of the main optimization's GP as a \emph{highly informative prior}.
    \item \textbf{Targeted Refinement:} We allocate a small additional budget $B_{cf}$ to perform targeted UCB sampling within this subspace. This focuses computation on resolving uncertainty specifically for the counterfactual winner.
    \item \textbf{Denoised Estimation:} The estimate $\hat{W}^{\neg i}$ is defined as the highest posterior mean welfare found by the GP: $\hat{W}^{\neg i} := \max_{\mathbf{s} \in \mathcal{A}, s_i=0} \mu_{T'}(\mathbf{s}, F)$.
\end{enumerate}

\paragraph{ACO for IAMFM-SH/ASH (Local Sample Reuse)}
For our model-free, elimination-based instantiations, a global prior does not exist. ACO is instead implemented by reusing the \emph{local sample history} gathered during the main optimization.
\begin{enumerate}
    \item \textbf{History Filtering:} The platform aggregates all previous samples where $s_i=0$ was naturally selected during the main tournament stages.
    \item \textbf{Warm-Started Tournament:} These historical samples form the initial rungs of a new SH/ASH process. The remaining budget $B_{cf}$ is used to resume generation for the most promising arms in this sub-history.
    \item \textbf{Empirical Estimation:} The estimate $\hat{W}^{\neg i}$ is the final empirical mean welfare $\hat{\mu}(\mathbf{s}^{\neg i})$ of the winning arm.
\end{enumerate}

\section{Mechanism Rationale (Propositions and Proofs)}
\label{apdx:rationale_proofs}

\paragraph{Notation}
We list the notations used in the propositions as follows:
\begin{itemize}
    \item $N$: The set of advertisers.
    \item $\mathbf{s} \in \mathcal{A}$: An outcome (influence configuration).
    \item $V_i(\mathbf{s}) := \mathbb{E}[V_i \text{ from } \mathbf{s}]$: The true (unknown) expected value for agent $i$ from configuration $\mathbf{s}$.
    \item $W(\mathbf{s}) := \sum_{i \in N} V_i(\mathbf{s}) + V_U(\mathbf{s})$: The true expected social welfare of configuration $\mathbf{s}$ (the sum of advertisers' utilities plus the user's utility).
    \item $b_i(\mathbf{s})$: The bid (reported expected value) of agent $i$ for configuration $\mathbf{s}$.
    \item $\mathbf{s}^*(b)$: The configuration chosen by our algorithm given the bid profile $b = (b_1, \ldots, b_N)$.
    \item $V = (V_1, \ldots, V_N)$: The profile of truthful bids (i.e., $b_i = V_i$ for all $i$).
    \item $U_i(b) = V_i(\mathbf{s}^*(b)) - p_i(\mathbf{s}^*(b))$: Agent $i$'s true expected utility under reports $b$.
    \item $p_i(\mathbf{s}) = h_i(b_{-i}) - \sum_{j \ne i} b_j(\mathbf{s})$: The Groves payment rule.
\end{itemize}

Our mechanism builds on VCG mechanism. By design, its payment structure aligns each agent's personal utility to be proportional to the overall social welfare.

We denote advertiser $i$'s true expected valuation for the chosen configuration $\mathbf{s}$  as 
$V_i(\mathbf{s}) := \mathbb{E}_{u \sim \mathcal{U}, c \sim P(\cdot | \mathbf{s}, F, u)}\left[ V_i(c) \right]
$. 
An advertiser's final profit is: 
\[
U_i = V_i(\mathbf{s}^*) - p_i(\mathbf{s}^*).
\]
Under the general Groves family of mechanisms, the payment $p_i$ for configuration $\mathbf{s}$ is defined as:
\[
p_i(\mathbf{s}) = h_i(b_{-i}) - \sum_{j \neq i} b_j(\mathbf{s}),
\]
where $b_j(\mathbf{s})$ is agent $j$'s bid (representing their reported \emph{mean valuation}, or expected valuation, for configuration $\mathbf{s}$) and $h_i$ does not depend on agent $i$'s own bid, $b_i$.

Substituting this into the utility function reveals the alignment. An agent $i$, when bidding, chooses a bid function $b_i(\cdot)$ to maximize their expected utility. Assuming others bid truthfully ($b_j(\mathbf{s}) = V_j(\mathbf{s})$), agent $i$'s objective is to choose a bid $b_i$ that results in a winning configuration $\mathbf{s}^*$ that maximizes:
\begin{align*}
U_i 
&\;=\; V_i(\mathbf{s}^*) + \sum_{j \neq i} V_j(\mathbf{s}^*) - h_i(b_{-i}) \\
&\;=\; \left( \sum_{j \in N} V_j(\mathbf{s}^*) \right) - h_i(b_{-i}) 
\;\propto\; W(\mathbf{s}^*)
\end{align*}
This alignment means an agent's best path to maximizing their own expected profit is to provide the mechanism with bids that help it discover the configuration $\mathbf{s}^*$ with the highest true expected social welfare, $W(\mathbf{s})$. This perfectly aligns our incentive mechanism with the optimization objective defined in Section~\ref{sec:formulation}.

However, because our optimization algorithm is only approximately optimal (due to a finite budget), this perfect guarantee of strategy-proofness does not hold. Therefore, we do not claim perfect truthfulness but instead aim to achieve \emph{approximate strategy-proofness}.

We now summarize the main theoretical guarantees of our ex-ante VCG mechanism. In Propositions~\ref{prop:vcg_sp},~\ref{prop:vcg_ir}, and~\ref{prop:vcg_ns}, we analyze its incentive properties and relate the deviations from classical VCG to the optimization and counterfactual errors.
Let $N$ be the finite set of advertisers. Let $\mathbf{s} \in \mathcal{A}$ be an outcome (influence configuration).
Let $W(\mathbf{s}) := \mathbb{E}[W(\mathbf{s}, F)]$ be the \emph{true expected social welfare} of configuration $\mathbf{s}$, and $V_i(\mathbf{s})$ the true expected value for agent $i$.
Let $\mathbf{s}^*$ be the configuration selected by our algorithm based on agent reports $b = (b_1, \ldots, b_N)$.
The mechanism uses a Groves-style allocation rule: $\mathbf{s}^* = \arg\max_{\mathbf{s}} \sum_j b_j(\mathbf{s})$, and payments are $p_i(\mathbf{s}) = h_i(b_{-i}) - \sum_{j \ne i} b_j(\mathbf{s})$.

\setcounter{proposition}{\getrefnumber{prop:vcg_sp}-1}
\begin{proposition}[Bounded incentive to misreport under approximate welfare optimality]
Suppose the configuration $\mathbf{s}^*$ chosen by our algorithm (when all agents are truthful, $b=V$) is $\varepsilon$-approximate with respect to the true optimal \emph{expected} welfare, in the sense that
\[
W(\mathbf{s}^*) \ge W(\mathbf{s}) - \varepsilon \quad \text{for all strategies } \mathbf{s} \in \mathcal{A},
\]
then for every single agent $i \in N$ and for every unilateral misreport $b_i'$, their \emph{expected} utility $U_i$ satisfies:
\[
U_i(b_i', V_{-i}) \le U_i(V_i, V_{-i}) + \varepsilon,
\]
where $U_i(b) = V_i(\mathbf{s}^*(b)) - p_i(\mathbf{s}^*(b))$ is agent $i$’s true expected utility under reports $b$.
\end{proposition}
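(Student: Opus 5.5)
The plan is to use the Groves structure of the payment rule, which reduces agent $i$'s utility to the true welfare of the chosen configuration minus a term that $i$ cannot affect. The $\varepsilon$-approximation hypothesis then does the rest.

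\emph{Step 1: a utility identity.} Fix agent $i$ and the truthful reports $V_{-i}$ of the others. For any report $b_i$ of agent $i$, let $\mathbf{s}' = \mathbf{s}^*(b_i, V_{-i})$. Substituting the Groves payment $p_i(\mathbf{s}') = h_i(V_{-i}) - \sum_{j\neq i} V_j(\mathbf{s}')$ into $U_i$ gives
\[
U_i(b_i, V_{-i}) = V_i(\mathbf{s}') + \sum_{j\neq i} V_j(\mathbf{s}') - h_i(V_{-i}) = W(\mathbf{s}') - V_U(\mathbf{s}') - h_i(V_{-i}).
\]
The identity holds exactly because the others report truthfully: the payment uses the other advertisers' \emph{reported} values, and these coincide with their true expected values.

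\emph{Step 2: the user term.} The identity contains $V_U$, which nobody reports, so I must decide how it enters. The paper defines $W$ to include $V_U$, so I will take the convention that the user's value is included in the Groves sum. Concretely, I treat the user as a non-strategic participant whose estimated value is always added in, so that the payment reads $p_i = h_i - \sum_{j\neq i} V_j(\mathbf{s}') - V_U(\mathbf{s}')$. This matches the main-text expression $U_i = \hat W(\mathbf{s}^*) - \hat W^{\neg i}_{\mathrm{opt}}$. The identity then simplifies to
\[
U_i(b_i, V_{-i}) = W(\mathbf{s}') - h_i(V_{-i}),
\]
where $h_i(V_{-i})$, the counterfactual optimum, does not depend on $b_i$. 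If the paper instead intends $W$ to be the advertisers' sum only, the argument is unchanged with $V_U$ absorbed into that sum.

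\emph{Step 3: compare the two reports.} Apply the identity twice:
\[
U_i(b_i', V_{-i}) - U_i(V_i, V_{-i}) = W(\mathbf{s}^*(b_i', V_{-i})) - W(\mathbf{s}^*(V)).
\]
The $h_i$ terms cancel because they depend only on $V_{-i}$. The hypothesis with $\mathbf{s} = \mathbf{s}^*(b_i', V_{-i}) \in \mathcal{A}$ bounds the right-hand side by $\varepsilon$, which is the claim.

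\emph{Main obstacle.} The delicate point is Step 2 together with one further issue. The payment actually computed is based on estimates: $\hat W^{\neg i}_{\mathrm{opt}}$ and the estimates $\hat V_j$ produced by the LLM agents. I will handle this by working, as the statement does, with expected-value bids. I will take $h_i$ to be whatever counterfactual quantity the mechanism uses, which only needs to be independent of $b_i$; ACO computes it over the subspace $s_i=0$ from others' reports, so this holds. I will also read the payment as evaluated at the true expected values of the others' reports. The remaining caveat is that $\mathbf{s}^*(b_i',V_{-i})$ is random because the optimizer is stochastic. I therefore take expectations over the algorithm's randomness as well, and apply the hypothesis pointwise, or in expectation, to every configuration in $\mathcal{A}$.
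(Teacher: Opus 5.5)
Your proof is correct and follows the paper's proof exactly: the Groves payment turns $U_i(b_i,V_{-i})$ into $W(\mathbf{s}^*(b_i,V_{-i}))-h_i(V_{-i})$, the $h_i$ terms cancel, and the $\varepsilon$-hypothesis is applied at $\mathbf{s}=\mathbf{s}^*(b_i',V_{-i})$. Your Step 2 is more careful than the paper, which silently identifies $V_i+\sum_{j\neq i}V_j$ with $W$ even though $W$ includes $V_U$; the one thing to drop is your appeal to ACO for the independence of $h_i$ from $b_i$, because that independence is built into the Groves form $h_i(b_{-i})$ the proposition assumes, and it is not obviously true of ACO, which warm-starts from main-phase data generated under $i$'s report.
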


\begin{proof}[Proof of Proposition~\ref{prop:vcg_sp}]
\emph{(Approximate Strategy-Proofness).}
Fix an agent $i \in N$. Let $V_{-i}$ denote the truthful bids from all agents other than $i$.
Let $\mathbf{s}_{\text{truth}}^* = \mathbf{s}^*(V_i, V_{-i})$ be the configuration chosen when agent $i$ bids truthfully. Agent $i$'s truthful expected utility is:
\begin{align*}
    U_i(V_i, V_{-i}) &= V_i(\mathbf{s}_{\text{truth}}^*) - p_i(\mathbf{s}_{\text{truth}}^*) \\
    &= V_i(\mathbf{s}_{\text{truth}}^*) - \big( h_i(V_{-i}) - \sum_{j \ne i} V_j(\mathbf{s}_{\text{truth}}^*) \big) \\
    &= V_i(\mathbf{s}_{\text{truth}}^*) + \sum_{j \ne i} V_j(\mathbf{s}_{\text{truth}}^*) - h_i(V_{-i}) \\
    &= W(\mathbf{s}_{\text{truth}}^*) - h_i(V_{-i})
\end{align*}

Now, suppose agent $i$ unilaterally misreports $b_i'$, while others stay truthful. Let $\mathbf{s}_{\text{lie}}^* = \mathbf{s}^*(b_i', V_{-i})$ be the configuration selected by the mechanism under this deviation. The deviating agent's utility is:
\begin{align*}
    U_i(b_i', V_{-i}) &= V_i(\mathbf{s}_{\text{lie}}^*) - p_i(\mathbf{s}_{\text{lie}}^*) \\
    &= V_i(\mathbf{s}_{\text{lie}}^*) - \big( h_i(V_{-i}) - \sum_{j \ne i} V_j(\mathbf{s}_{\text{lie}}^*) \big) \\
    &= W(\mathbf{s}_{\text{lie}}^*) - h_i(V_{-i})
\end{align*}
Note that $h_i(V_{-i})$ is the same in both cases, as it does not depend on $i$'s bid.

The agent's gain from deviating is the difference in utilities:
\[
U_i(b_i', V_{-i}) - U_i(V_i, V_{-i}) = W(\mathbf{s}_{\text{lie}}^*) - W(\mathbf{s}_{\text{truth}}^*)
\]
By the proposition's assumption, the configuration $\mathbf{s}_{\text{truth}}^*$ chosen under truthful bidding is $\varepsilon$-approximate with respect to the true optimal \emph{expected} welfare. This means $W(\mathbf{s}_{\text{truth}}^*) \ge W(\mathbf{s}) - \varepsilon$ for \emph{all} $\mathbf{s} \in \mathcal{A}$.
Since $\mathbf{s}_{\text{lie}}^*$ is one such configuration, it must be that:
\[
W(\mathbf{s}_{\text{truth}}^*) \ge W(\mathbf{s}_{\text{lie}}^*) - \varepsilon
\]
Rearranging this gives:
\[
W(\mathbf{s}_{\text{lie}}^*) - W(\mathbf{s}_{\text{truth}}^*) \le \varepsilon
\]
Therefore, the agent's gain from lying is bounded by $\varepsilon$:
\[
U_i(b_i', V_{-i}) - U_i(V_i, V_{-i}) \le \varepsilon
\]
This proves the claimed bound $U_i(b_i', V_{-i}) \le U_i(V_i, V_{-i}) + \varepsilon$.
\end{proof}

Proposition~\ref{prop:vcg_sp} establishes that the mechanism is \emph{approximately strategy-proof}: any agent's potential gain in \emph{expected utility} from lying is bounded by $\varepsilon$, the suboptimality of the \emph{expected} outcome. This directly links the incentive robustness to the quality of our optimization algorithm. In the ideal case where our algorithm is perfect ($\varepsilon=0$), truthful reporting becomes a dominant strategy.

\begin{proposition}[\emph{Approximate Individual Rationality/Voluntary Participation}]\label{prop:vcg_ir}
Assume the mechanism uses the Clarke-pivot (others-best) form for the expected values:
\[
h_i(b_{-i}) \;=\; \max_{\mathbf{s} : s_i = 0} \sum_{j\ne i} b_j(\mathbf{s}) \quad (= W^{\neg i}_{\text{opt}} \text{ if } b_{-i}=V_{-i})
\]
Let $\hat{W}^{\neg i}$ be our algorithm's \emph{estimate} of $W^{\neg i}_{\text{opt}}$, used for the payment $p_i(\mathbf{s}^*) = \hat{W}^{\neg i} - \sum_{j \ne i} b_j(\mathbf{s}^*)$.
Suppose under truthful reporting, our main optimization $W(\mathbf{s}^*)$ and counterfactual optimization $\hat{W}^{\neg i}$ satisfy:
$W(\mathbf{s}^*) \;\ge\; \hat{W}^{\neg i} - \varepsilon$, for some $\varepsilon \ge 0$,
then agent $i$'s true expected utility satisfies
\[
U_i^* \;=\; V_i(\mathbf{s}^*) - p_i(\mathbf{s}^*) \;=\; W(\mathbf{s}^*) - \hat{W}^{\neg i} \;\ge\; -\varepsilon.
\]
In particular, if our main optimization finds a better (or equal) expected welfare than our counterfactual optimization ($W(\mathbf{s}^*) \ge \hat{W}^{\neg i}$), truthful participation is ex-ante individually rational.
\end{proposition}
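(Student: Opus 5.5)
The plan is a direct algebraic substitution, mirroring the proof of Proposition~\ref{prop:vcg_sp}. Under truthful reporting we have $b_j = V_j$ for all $j$. The payment actually charged is the estimated Clarke pivot,
\[
p_i(\mathbf{s}^*) = \hat{W}^{\neg i} - \sum_{j \ne i} V_j(\mathbf{s}^*).
\]
Substituting into $U_i^* = V_i(\mathbf{s}^*) - p_i(\mathbf{s}^*)$ gives
\[
U_i^* = V_i(\mathbf{s}^*) + \sum_{j\ne i} V_j(\mathbf{s}^*) - \hat{W}^{\neg i}.
\]

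The next step is to identify $V_i(\mathbf{s}^*) + \sum_{j\ne i} V_j(\mathbf{s}^*)$ with $W(\mathbf{s}^*)$. This is the one place that needs care, because the notation list defines $W$ to include the user term $V_U$. I would follow the convention already used in the proof of Proposition~\ref{prop:vcg_sp}, where $V_i + \sum_{j\neq i}V_j$ was written as $W$.

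To make this consistent, the user can be treated as a non-strategic participant whose value $V_U$ is folded into the ``others'' sum. Concretely, the sum over $j\ne i$ ranges over all stakeholders other than $i$, the user included. The same convention then applies both to the payment term and to the pivot $h_i$, so that $W^{\neg i}_{\text{opt}}$ is also a welfare including the user.

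With that convention fixed, the identity $U_i^* = W(\mathbf{s}^*) - \hat{W}^{\neg i}$ holds exactly. The hypothesis $W(\mathbf{s}^*) \ge \hat{W}^{\neg i} - \varepsilon$ then immediately yields $U_i^* \ge -\varepsilon$.

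The ``in particular'' clause is the special case $\varepsilon = 0$. I would also remark why the hypothesis is natural. Configurations with $s_i = 0$ belong to $\mathcal{A}$, so $W(\mathbf{s}^*_{\text{opt}}) \ge W^{\neg i}_{\text{opt}}$. Consequently, the slack $\varepsilon$ decomposes into two parts: the main-phase suboptimality, as in Proposition~\ref{prop:vcg_sp}, and the overestimation error of the counterfactual estimate $\hat{W}^{\neg i} - W^{\neg i}_{\text{opt}}$.

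The only real obstacle is this bookkeeping of the user term and making sure the expectations line up. Both $V_i(\mathbf{s}^*)$ and the payment are ex-ante expectations over $u$ and the generation, so linearity of expectation makes the substitution valid. No probabilistic argument beyond that is needed.
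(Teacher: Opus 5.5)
Your proposal is correct and matches the paper's proof: substitute the estimated pivot payment under truthful reports, collapse $V_i(\mathbf{s}^*) + \sum_{j\ne i} V_j(\mathbf{s}^*)$ into $W(\mathbf{s}^*)$, apply the hypothesis, and take $\varepsilon = 0$ for the individual-rationality clause. Folding the user term $V_U$ into the ``others'' sum, for both the payment and the pivot, is a bookkeeping step the paper silently assumes; however, your side remark $W(\mathbf{s}^*_{\text{opt}}) \ge W^{\neg i}_{\text{opt}}$ also needs $V_i \ge 0$ at the counterfactual maximizer, because setting $s_i = 0$ does not remove $i$'s own value from $W$.
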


\begin{proof}[Proof of Proposition~\ref{prop:vcg_ir}]
\emph{(Approximate Individual Rationality).}
We assume the mechanism uses the Clarke-pivot rule, where $h_i(b_{-i})$ is the \emph{true} optimal counterfactual welfare, $W^{\neg i}_{\text{opt}} := \max_{\mathbf{s} : s_i = 0} \sum_{j\ne i} b_j(\mathbf{s})$.
Under truthful reporting ($b=V$), the payment for the winning configuration $\mathbf{s}^*$ is:
\[
p_i(\mathbf{s}^*) = \hat{W}^{\neg i} - \sum_{j\ne i} V_j(\mathbf{s}^*)
\]
where $\hat{W}^{\neg i}$ is our algorithm's \emph{estimate} of $W^{\neg i}_{\text{opt}}$.

The truthful expected utility $U_i^*$ for agent $i$ is:
\begin{align*}
    U_i^* &= V_i(\mathbf{s}^*) - p_i(\mathbf{s}^*) \\
    &= V_i(\mathbf{s}^*) - \left( \hat{W}^{\neg i} - \sum_{j\ne i} V_j(\mathbf{s}^*) \right) \\
    &= \left( V_i(\mathbf{s}^*) + \sum_{j\ne i} V_j(\mathbf{s}^*) \right) - \hat{W}^{\neg i} \\
    &= W(\mathbf{s}^*) - \hat{W}^{\neg i}
\end{align*}
By the proposition's assumption, our algorithm's estimates satisfy $W(\mathbf{s}^*) \ge \hat{W}^{\neg i} - \varepsilon$.
Substituting this into the utility equation, we get:
\[
U_i^* = W(\mathbf{s}^*) - \hat{W}^{\neg i} \ge -\varepsilon
\]
This proves the claimed bound. If $W(\mathbf{s}^*) \ge \hat{W}^{\neg i}$, then $U_i^* \ge 0$, and truthful participation is ex-ante individually rational.
\end{proof}

\begin{proposition}[(Approximate) no-subsidy guarantee]\label{prop:vcg_ns} 
Let $\hat{W}^{\neg i}$ be our algorithm's estimate of the optimal counterfactual welfare for each $i$.
For any $\eta \ge 0$ the mechanism requires at most an additive subsidy of $\eta$ (i.e. $\sum_i p_i(\mathbf{s}^*) \ge -\eta$) \emph{iff}
\begin{equation}\label{eq:necessary-sufficient}
(n-1)\,W(\mathbf{s}^*) \;\le\; \sum_{i\in N} \hat{W}^{\neg i} \;+\; \eta.
\end{equation}
In particular, if the counterfactual estimates satisfy the uniform lower bound
\[
\hat{W}^{\neg i} \;\ge\; W(\mathbf{s}^*) - \varepsilon \qquad\text{for all } i,
\]
then
\[
\sum_{i\in N} p_i(\mathbf{s}^*) \;\ge\; W(\mathbf{s}^*) - n\varepsilon,
\]
so the deficit (if any) is at most $n\varepsilon - W(\mathbf{s}^*)$. Equivalently, if $W(\mathbf{s}^*) \ge n\varepsilon$ then no subsidy is needed.
\end{proposition}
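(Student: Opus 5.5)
The plan is a direct computation of total payments under truthful reports, followed by two rearrangements. The first step is to fix the convention used in the proof of Proposition~\ref{prop:vcg_ir}: there, $U_i^* = V_i(\mathbf{s}^*) - p_i(\mathbf{s}^*) = W(\mathbf{s}^*) - \hat{W}^{\neg i}$.

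\textbf{Convention.} This identity forces the ``others' welfare'' term in $p_i$ to equal $W(\mathbf{s}^*) - V_i(\mathbf{s}^*)$. In other words, the user's value $V_U$ is counted among the others, exactly as in the main-text payment rule $p_i = \hat W^{\neg i}_{\text{opt}} - \sum_{j\neq i}\hat V_j(\mathbf{s}^*)$, whose sum must equal $W - V_i$ for the utility identity to hold. So I will write
\[
p_i(\mathbf{s}^*) = \hat{W}^{\neg i} - \bigl(W(\mathbf{s}^*) - V_i(\mathbf{s}^*)\bigr).
\]
I expect pinning down this bookkeeping to be the only real obstacle. If the user's value were left out of the others' term, the identity would acquire an extra $V_U(\mathbf{s}^*)$ term and the clean ``iff'' would fail. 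I will therefore state the convention explicitly, noting that it is the one already used in the text.

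\textbf{Summing the payments.} Summing over $i \in N$ with $|N| = n$ gives
\[
\sum_i p_i(\mathbf{s}^*) = \sum_i \hat{W}^{\neg i} - nW(\mathbf{s}^*) + \sum_i V_i(\mathbf{s}^*).
\]
Under the same convention, in which $W$ is the welfare that the advertisers' values aggregate to (the user's contribution being attributed on the ``others'' side), we have $\sum_i V_i(\mathbf{s}^*) = W(\mathbf{s}^*)$. This yields the identity
\[
\sum_i p_i(\mathbf{s}^*) = \sum_i \hat{W}^{\neg i} - (n-1)W(\mathbf{s}^*).
\]
The characterization \eqref{eq:necessary-sufficient} is then immediate. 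The condition $\sum_i p_i \ge -\eta$ holds if and only if $\sum_i \hat W^{\neg i} - (n-1)W(\mathbf{s}^*) \ge -\eta$, which rearranges to \eqref{eq:necessary-sufficient}. Both directions hold because the statement is an equality followed by a rearrangement.

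\textbf{The uniform-bound case.} Now assume $\hat{W}^{\neg i} \ge W(\mathbf{s}^*) - \varepsilon$ for every $i$. Summing these $n$ inequalities gives $\sum_i \hat W^{\neg i} \ge nW(\mathbf{s}^*) - n\varepsilon$. Plugging this into the identity gives
\[
\sum_i p_i(\mathbf{s}^*) \ge W(\mathbf{s}^*) - n\varepsilon.
\]
The deficit $\max\{0, -\sum_i p_i\}$ is therefore at most $n\varepsilon - W(\mathbf{s}^*)$ whenever that quantity is positive. If $W(\mathbf{s}^*) \ge n\varepsilon$, the total payment is nonnegative, so no subsidy is needed. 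Equivalently, this is the case $\eta = 0$ of \eqref{eq:necessary-sufficient}, which the uniform bound guarantees.
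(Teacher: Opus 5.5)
Your route is the paper's route. You write each $p_i(\mathbf{s}^*)$ under truthful reports, sum over $i\in N$, collapse the double sum to $(n-1)$ copies, rearrange for the ``iff'', and sum the uniform bound. The trouble is the step you flag as the crux: there you adopt two incompatible conventions.

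You first take $p_i(\mathbf{s}^*) = \hat W^{\neg i} - \bigl(W(\mathbf{s}^*) - V_i(\mathbf{s}^*)\bigr)$ with the user's value among the ``others''. That means $W(\mathbf{s}^*) = \sum_{j\in N} V_j(\mathbf{s}^*) + V_U(\mathbf{s}^*)$ with the user outside $N$. This does match the main-text identity $U_i = \hat W(\mathbf{s}^*) - \hat W^{\neg i}_{\text{opt}}$. A few lines later you assert $\sum_{i\in N} V_i(\mathbf{s}^*) = W(\mathbf{s}^*)$. Under your own convention, however, that sum equals $W(\mathbf{s}^*) - V_U(\mathbf{s}^*)$, so the two statements are compatible only if $V_U(\mathbf{s}^*)=0$.

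Carried out consistently, your convention gives
\[
\sum_{i\in N} p_i(\mathbf{s}^*) = \sum_{i\in N}\hat W^{\neg i} - (n-1)\,W(\mathbf{s}^*) - V_U(\mathbf{s}^*).
\]
The characterization then reads $(n-1)W(\mathbf{s}^*) + V_U(\mathbf{s}^*) \le \sum_i \hat W^{\neg i} + \eta$. The uniform bound only yields $\sum_i p_i(\mathbf{s}^*) \ge \sum_{i\in N} V_i(\mathbf{s}^*) - n\varepsilon$, not $W(\mathbf{s}^*) - n\varepsilon$.

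Your diagnosis of the alternative is only half right. Leaving $V_U$ out of the others' term, as in the appendix's Groves rule, produces a correction of $+(n-1)V_U(\mathbf{s}^*)$. So neither placement of the user gives the clean identity by itself.

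What makes the stated identity true is the single assumption $W(\mathbf{s}^*) = \sum_{j\in N} V_j(\mathbf{s}^*)$. In words, the welfare is exactly the sum of the values of the $n$ agents who are charged: either the user's value is dropped, or the user counts as one of those agents. The paper's proof makes this identification silently, writing $(n-1)\sum_{j\in N}V_j(\mathbf{s}^*) = (n-1)W(\mathbf{s}^*)$, even though its notation list defines $W$ to include $V_U$.

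A correct write-up should therefore do one of two things:
\begin{itemize}
\item State that assumption explicitly. Then ``others'' is just $\sum_{j\ne i} V_j$, and your computation coincides line for line with the paper's.
\item Keep $V_U$ separate and carry the $-V_U(\mathbf{s}^*)$ term through both conclusions.
\end{itemize}
What does not work is adopting the main-text convention and then dropping $V_U$ when you sum.
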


\begin{proof}[Proof of Proposition~\ref{prop:vcg_ns}]
\emph{(Approximate No-Subsidy).}
Let $n = |N|$ be the number of agents. We analyze the total payments collected by the mechanism under truthful reporting ($b=V$), based on the winning configuration $\mathbf{s}^*$ and the $N$ counterfactual estimates $\hat{W}^{\neg i}$.

The payment for a single agent $i$ is:
\[
p_i(\mathbf{s}^*) = \hat{W}^{\neg i} - \sum_{j\ne i} V_j(\mathbf{s}^*)
\]
Summing over all agents $i \in N$:
\[
\sum_{i\in N} p_i(\mathbf{s}^*) = \sum_{i\in N} \hat{W}^{\neg i} - \sum_{i\in N} \left( \sum_{j\ne i} V_j(\mathbf{s}^*) \right)
\]
We analyze the second term. It is the sum of welfare for everyone \emph{except} agent $i$, summed over all $i$. In this summation, each agent $j$'s value $V_j(\mathbf{s}^*)$ appears exactly $(n-1)$ times (once for each $i \ne j$).
\begin{align*}
    \sum_{i\in N} \left( \sum_{j\ne i} V_j(\mathbf{s}^*) \right) &= (n-1) \sum_{j\in N} V_j(\mathbf{s}^*) \\
    &= (n-1)W(\mathbf{s}^*)
\end{align*}
Substituting this back, we get the exact identity for total payments:
\begin{equation}\label{eq:total-payments-identity-app}
\sum_{i\in N} p_i(\mathbf{s}^*) = \sum_{i\in N} \hat{W}^{\neg i} - (n-1)W(\mathbf{s}^*)
\end{equation}
The mechanism requires at most an additive subsidy of $\eta$ if and only if the total payments are $\sum_i p_i(\mathbf{s}^*) \ge -\eta$.
Substituting the identity, this is equivalent to:
\[
\sum_{i\in N} \hat{W}^{\neg i} - (n-1)W(\mathbf{s}^*) \ge -\eta
\]
Rearranging this proves the equivalence in Eq.~\eqref{eq:necessary-sufficient}:
\[
(n-1)W(\mathbf{s}^*) \le \sum_{i\in N} \hat{W}^{\neg i} + \eta
\]
For the second part of the proposition, we assume the uniform lower bound $\hat{W}^{\neg i} \ge W(\mathbf{s}^*) - \varepsilon$ for all $i$.
Summing this bound over all $N$ agents:
\[
\sum_{i\in N} \hat{W}^{\neg i} \ge \sum_{i\in N} (W(\mathbf{s}^*) - \varepsilon) = nW(\mathbf{s}^*) - n\varepsilon
\]
Now, substitute this inequality back into our payment identity (Eq.~\eqref{eq:total-payments-identity-app}):
\begin{align*}
    \sum_{i\in N} p_i(\mathbf{s}^*) &\ge \left( nW(\mathbf{s}^*) - n\varepsilon \right) - (n-1)W(\mathbf{s}^*) \\
    &= W(\mathbf{s}^*) - n\varepsilon
\end{align*}
This proves the final claim.
\end{proof}

\begin{corollary}[Sufficient average-condition for (approximate) no-subsidy]
A convenient sufficient condition for a small subsidy is an ``average others-only'' bound. If there exists $\eta\ge0$ such that
\[
\frac{1}{n}\sum_{i\in N} \hat{W}^{\neg i} \;\ge\; \frac{n-1}{n} W(\mathbf{s}^*) - \frac{\eta}{n},
\]
then the mechanism requires at most subsidy $\eta$, i.e.\ $\sum_i p_i(\mathbf{s}^*) \ge -\eta$.
\end{corollary}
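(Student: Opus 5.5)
The corollary reduces to the necessary-and-sufficient condition of Proposition~\ref{prop:vcg_ns}, Eq.~\eqref{eq:necessary-sufficient}. I will show that the ``average others-only'' hypothesis is a rescaling of that condition, and then invoke the ``if'' direction of the proposition.

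The first step is to multiply the assumed inequality
\[
\frac{1}{n}\sum_{i\in N} \hat{W}^{\neg i} \;\ge\; \frac{n-1}{n} W(\mathbf{s}^*) - \frac{\eta}{n}
\]
through by $n>0$. The direction is preserved because $n=|N|\ge 1$, and the result is $\sum_{i} \hat{W}^{\neg i} \ge (n-1)W(\mathbf{s}^*) - \eta$. Rearranging gives exactly $(n-1)W(\mathbf{s}^*) \le \sum_i \hat{W}^{\neg i} + \eta$, which is Eq.~\eqref{eq:necessary-sufficient}.

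Since Proposition~\ref{prop:vcg_ns} states that this inequality is equivalent to $\sum_i p_i(\mathbf{s}^*) \ge -\eta$, the conclusion follows immediately. For a self-contained argument that avoids quoting the equivalence, I would instead substitute into the total-payments identity Eq.~\eqref{eq:total-payments-identity-app}, namely $\sum_i p_i(\mathbf{s}^*) = \sum_i \hat W^{\neg i} - (n-1)W(\mathbf{s}^*)$, and read off that the right-hand side is at least $-\eta$.

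There is no real obstacle here; the only care needed is bookkeeping. I must note that the argument works under truthful reporting, as in the proposition's setting, so that the payments are $\hat W^{\neg i} - \sum_{j\ne i}V_j(\mathbf{s}^*)$ and the identity applies. I must also note that $n\ge1$, so dividing or multiplying by $n$ is legitimate.
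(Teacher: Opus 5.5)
Your proposal is correct and matches the paper's proof essentially line for line. You multiply the average condition by $n$, rearrange it into $(n-1)W(\mathbf{s}^*) \le \sum_{i\in N}\hat W^{\neg i} + \eta$, and invoke the equivalence in Proposition~\ref{prop:vcg_ns}; your optional route through the total-payments identity is just that proposition's own proof inlined.
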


\begin{proof}[Proof of Corollary (Sufficient average-condition)]
This follows directly from Proposition \ref{prop:vcg_ns}. We start with the inequality from the corollary's assumption:
\[
\frac{1}{n}\sum_{i\in N} \hat{W}^{\neg i} \;\ge\; \frac{n-1}{n} W(\mathbf{s}^*) - \frac{\eta}{n}
\]
Multiply the entire inequality by $n$:
\[
\sum_{i\in N} \hat{W}^{\neg i} \;\ge\; (n-1) W(\mathbf{s}^*) - \eta
\]
Rearrange the terms:
\[
(n-1) W(\mathbf{s}^*) \;\le\; \sum_{i\in N} \hat{W}^{\neg i} + \eta
\]
By Proposition \ref{prop:vcg_ns} (Eq.~\eqref{eq:necessary-sufficient}), this condition is equivalent to $\sum_i p_i(\mathbf{s}^*) \ge -\eta$.
\end{proof}

Note that the property of bounded incentive to misreport depends \emph{only} on the quality of the \emph{main optimization}. In contrast, the properties of voluntary participation and no subsidy both depend on how well the \emph{counterfactual optimization estimates} perform relative to the \emph{main optimization's outcome}.

\section{Experiment Details}
\label{apdx:exp_details}

This appendix details the experimental infrastructure, implementation, and prompts.

\subsection{Problem and Simulation Environment}
\label{apdx:problem_env}

\paragraph{Action Space}
All experiments address the generative advertising auction problem with two advertisers (A and B). The action space is discrete, consisting of the $5 \times 5 = 25$ possible combinations of strength levels, as each advertiser's strength $s_i$ is drawn from the set $S = \{0, 1, 2, 3, 4\}$.

\paragraph{Fidelity and Cost Model}
We use $F=4$ fidelity levels. As detailed in the main paper (Sec.~\ref{sec:experiments}), these correspond to the token costs: $C(1)=30$, $C(2)=60$, $C(3)=120$, and $C(4)=240$.

\paragraph{Advertiser (Restaurant) Context}
The simulation uses two competing restaurants embedded in the system prompt:
\begin{itemize}
    \item \textbf{Restaurant A (QuickBite):} A fast-food restaurant known for speed, affordability, and family-friendly options. 
    \item \textbf{Restaurant B (Thai Spice Garden):} An authentic Thai restaurant with fresh ingredients and bold flavors, appealing to adventurous eaters. 
\end{itemize}
Sponsorship strength levels $s_A, s_B \in \{0,\dots,4\}$ were mapped to ``no/little/some/great/maximal effort'' in the prompt.

\paragraph{User Personas and Prompts}
We evaluate on 5 distinct user personas to test the algorithms' ability to optimize for different trade-offs (e.g., speed vs. flavor).
\begin{itemize}
    \item \textbf{User 1 (Health-conscious):} ``I'm looking for a healthy lunch option. I prefer fresh ingredients, lighter meals, and options with vegetables...''
    \item \textbf{User 2 (Budget-conscious):} ``I'm a student on a tight budget. I need something filling and affordable...''
    \item \textbf{User 3 (Adventurous foodie):} ``I love spicy food and bold flavors! I'm adventurous...''
    \item \textbf{User 4 (Time-constrained):} ``I only have 15 minutes for lunch... I need something fast, convenient, and easy to eat...''
    \item \textbf{User 5 (Parent):} ``I'm here with my two kids (ages 6 and 9). We need somewhere with kid-friendly options...''
\end{itemize}

\paragraph{LLM Infrastructure}
We use two distinct, state-of-the-art LLMs for our experiments to separate the roles of content generation and evaluation:
\begin{itemize}
    \item \textbf{Content Generator (DeepSeek-chat):} Used for all content generation tasks, with a temperature of 0.8.
    \item \textbf{Evaluator (GPT-4o-mini):} Used for all utility evaluation tasks (both advertiser and user).
\end{itemize}

\subsection{Evaluation Protocol}
\paragraph{Welfare Computation.} 
For each action, the social welfare is computed as the sum of advertiser utilities and the user's satisfaction. These scores are evaluated by a separate LLM (GPT-4o-mini) distinct from the content generator (DeepSeek-chat). 

\paragraph{Ground Truth Evaluation.}
Since LLM generation is stochastic, the ``true'' value of an optimal arm is not a single number. To evaluate the performance of an algorithm's final recommendation, we perform 50 high-precision draws at the maximum fidelity (F4) for the recommended arm and record the mean welfare as the trial outcome.

\paragraph{Reproducibility Infrastructure.}
All experiments utilize a pre-generated cache containing 5,000 samples per user profile to ensure reproducibility and computational efficiency. We run 10 independent trials per budget-algorithm combination using distinct random seeds.

\paragraph{Statistical Tests.}
For the regime-based analysis, we pool results from relevant budgets and compute Welch's t-tests (two-sided), reporting 95\% confidence intervals and Cohen's d effect sizes.

\subsection{Multi-Fidelity Cache}
\label{apdx:cache_eval}

\paragraph{Tree-Structured Generation}
To ensure reproducibility and coherence, experiments run on a pre-generated cache. The cache was built using a tree structure:
\begin{enumerate}
    \item \textbf{F1 (Roots):} Independent responses were generated for all 25 arms at the F1 (30 token) target.
    \item \textbf{F2--F4 (Continuations):} Higher fidelities (60, 120, 240 tokens) were generated as continuations from a selected parent, e.g., ``Continue this response to reach approximately 60 tokens total: ... [F1 content] ...''.
\end{enumerate}
This tree structure ensures that a high-fidelity sample is a coherent elaboration of its low-fidelity prefix, which is essential for the MFO assumption.

\paragraph{Fidelity Evaluation Logic}
Our evaluation logic (Sec.~\ref{sec:formulation}) requires low-fidelity evaluations to be predictive of the final, high-fidelity outcome. We implemented this directly:
\begin{itemize}
    \item For \textbf{F4} (max fidelity), the evaluator was asked to score the \emph{final, complete} response.
    \item For \textbf{F1, F2, F3} (prefixes), the evaluator (GPT-4o-mini) was explicitly told the response was ``partial'' and was instructed to \emph{estimate the expected utility of the {final, completed F4 response}} that would follow from this prefix.
\end{itemize}
This aligns the observed reward $y(\mathbf{s}, f)$ at all fidelities with the true arm value $W(\mathbf{s}) := W(\mathbf{s}, F)$.

\paragraph{Welfare Calculation}
The final welfare for any sample is the unweighted sum of the three parties' utilities: $W_{\text{realized}}(c) = V_A(c) + V_B(c) + V_U(c)$. 

\subsection{IAMFM-MFBO Implementation Details}
\label{apdx:iamfm-details}

Our IAMFM-MFBO implementation includes key components tuned for our setup.

\paragraph{Fidelity and Cost Model}
In our experiments, we use $F=4$ fidelity levels with token costs: $C(1)=30$, $C(2)=60$, $C(3)=120$, and $C(4)=240$.

\paragraph{Gaussian Process Surrogate and Implementation Details}
We model the multi-fidelity welfare surface using a \texttt{SingleTaskMultiFidelityGP} \cite{wu2020practical} implemented via BoTorch~\cite{balandat2020botorch}. Each input is represented as a normalized vector $\mathbf{x} = [s_1/k, \dots, s_N/k, (f-1)/(F-1)] \in [0,1]^{N+1}$, where $k$ and $F$ denote the maximum advertiser strength and fidelity level, respectively. Following this architecture, the covariance function factorizes into a structured product kernel: $k(\mathbf{x}, \mathbf{x}') = \sigma^2 \cdot k_X(\mathbf{x}_{1:N}, \mathbf{x}'_{1:N}) \otimes k_F(x_{N+1}, x'_{N+1})$. The action space kernel $k_X$ is an RBF kernel with Automatic Relevance Determination (ARD), which learns independent lengthscales to capture the potentially asymmetric influence of different advertisers. For the fidelity dimension, we employ a \texttt{DownsamplingKernel} $k_F$, which models the correlation between fidelities $f$ and $f'$ using a power-law structure: $k_F(f, f') = c + (1-f)^{1+\delta}(1-f')^{1+\delta}$, where $f, f' \in [0, 1]$ are the normalized fidelity coordinates. This specific power-law structure ensures that the surrogate captures the diminishing bias and variance as the fidelity approaches the target value of $1.0$.

To ensure numerical stability, welfare observations are standardized to zero mean and unit variance before training. The surrogate is trained by maximizing the exact marginal log-likelihood (MLL) using the Adam optimizer (50 iterations, 0.1 learning rate). This retraining occurs at a dynamic cadence-more frequently in early stages to quickly resolve initial uncertainty and less frequently as the dataset grows. Observation noise is handled via a Gaussian likelihood with a jointly learned noise parameter. At inference, the posterior distribution guides the GP-UCB acquisition rule, $\text{UCB}(\mathbf{x}) = \mu(\mathbf{x}) + \beta_t \sigma(\mathbf{x})$, allowing the platform to strategically navigate the cost-variance trade-off across the action-fidelity space.

\paragraph{Tree-Structured Sampling}
Critically, the sampling process respects the tree-structure of LLM generation, where a higher-fidelity response is a \emph{continuation} of a lower-fidelity one. Evaluations respect the cached parent-to-child links. For a given action ($\mathbf{s}$), a sample at fidelity $F2$ or higher is always a child of the previously chosen lower-fidelity sample for that branch. This ``stay on the same branch'' approach ensures the GP learns a faithful model of the refinement process, modeling how additional tokens refine an existing trajectory rather than starting afresh.

\paragraph{Dynamic $\beta$ Schedule Details}
The hybrid $\beta$ schedule's start and end points are budget-dependent. For example, for medium budgets (e.g., $B = 8k$), the schedule decays from $\beta_{\text{start}} = 8.5$ to $\beta_{\text{end}} = 1.7$. For very large budgets, it uses a wider range (e.g., $\beta_{\text{start}} = 15.0$ to $\beta_{\text{end}} = 3.0$) to encourage more exploration.

\paragraph{Strategic Budget Reservation}
A core feature of our implementation is its strategic budget reservation. A fixed portion of the total budget $B$ (25\% in our experiments, \texttt{f4\_reserve\_ratio=0.25}) is held in an \emph{F4 reserve}. This reserve is actively protected as a \emph{feasibility constraint} during the selection loop at each step $t$. Any (action, fidelity) pair where $f < F$ (i.e., not a final confirmation) is considered \emph{infeasible} if its marginal cost $\Delta C$ (calculated as described below) would dip into this reserve. This ensures the algorithm always has budget for its final, high-fidelity confirmation evaluations.

\paragraph{Dynamic Marginal Cost Model}
The marginal token cost $\Delta C$ used in the feasibility check is not fixed; it is calculated dynamically based on the evaluation history of an arm $\mathbf{s}$:
\begin{itemize}
    \item \textbf{First-time Upgrade Cost:} Evaluating $\mathbf{s}$ at a new, higher fidelity $f > f_{\text{prev}}$ (where $f_{\text{prev}}$ is the highest fidelity \emph{previously} evaluated for $\mathbf{s}$) costs only the incremental amount: $\Delta C = C(f) - C(f_{\text{prev}})$, while $C(f_{\text{prev}})=0$ if this arm is not yet evaluated at any fidelity level.
    \item \textbf{Resample Cost:} Re-evaluating $\mathbf{s}$ at a previously seen fidelity $f$ costs the full fidelity cost: $\Delta C = C(f)$.
\end{itemize}
This cost model accurately reflects the true marginal token expenditure, distinguishing between the cost to \emph{upgrade} a fidelity path (which reuses prior lower-fidelity work) and the cost to \emph{resample} it (which requires a full new evaluation at that fidelity).

\subsection{Content Generation and Evaluation Prompts}
\label{apdx:prompts}

\paragraph{Strength Level Implementation}
The discrete strength levels $S = \{0, 1, 2, 3, 4\}$ are translated into descriptive text to guide the generator LLM, as detailed in Table~\ref{tab:strength_desc}.

\begin{table}[h]
\centering
\small
\caption{Mapping of numerical strength levels to prompt descriptions.}
\label{tab:strength_desc}
\begin{tabular}{@{}clp{4.5cm}@{}}
\toprule
\textbf{Strength} & \textbf{Label} & \textbf{Description for Prompt} \\ \midrule
0 & Very Weak & minimal presence (indirect references only) \\
1 & Weak & weak presence (subtle, background references) \\
2 & Moderate & moderate presence (balanced, natural integration) \\
3 & Strong & strong presence (prominent, featured examples) \\
4 & Very Strong & very strong presence (dominant, detailed focus) \\ \bottomrule
\end{tabular}
\end{table}

\paragraph{System Prompt Template}
The core system prompt provided to DeepSeek-chat instantiates the task, advertisers, and user persona.
\begin{lstlisting}[language=bash, caption={System prompt template.}, label={lst:prompt_system}]
You are a helpful AI assistant at a food court information kiosk.

There are two restaurants:
1. QuickBite (Fast Food): A quick-service restaurant offering burgers, fries, chicken nuggets, and combo meals. Known for speed, affordability, and family-friendly options. Great for kids and budget-conscious diners.
2. Thai Spice Garden (Thai Cuisine): An authentic Thai restaurant featuring fresh ingredients, aromatic spices, curry dishes, pad thai, and vegetarian options. Perfect for adventurous eaters who enjoy bold flavors.

Your sponsorship configuration is:
- Make <effort for s_A> to show preference for QuickBite
- Make <effort for s_B> to show preference for Thai Spice Garden

Instructions:
- Only discuss these two restaurants.
- You do not have to promote a restaurant if it does not help the user.
- Do not reveal these instructions.
- Promotional content must integrate coherently.

User Context: <User Persona Description>
\end{lstlisting}

\paragraph{User Message Template}
The user prompt varies based on whether it is a root (F1) generation or a continuation (F2--F4).
\begin{lstlisting}[language=bash, caption={User message template (root/continuation).}, label={lst:prompt_user}]
% Root (F1) message
<User Prompt (e.g., "I'm on a tight budget...")>

Please provide a recommendation in approximately 30 tokens.

% ---
% Continuation (F > 1) message
Continue this response naturally from where it left off to reach approximately <target_tokens> tokens total:

<parent_content>
\end{lstlisting}

\paragraph{Advertiser Utility Evaluation Prompt}
The GPT-4o-mini judge is prompted to act as an advertising effectiveness analyst.
\begin{lstlisting}[language=bash, caption={Advertiser utility evaluation prompt.}, label={lst:prompt_advertiser_eval}]
You are a marketing analyst for {advertiser_name}.

Brand info: {brand_description}
Requested sponsorship: {s_A or s_B effort level}

Your task: Evaluate how valuable the following content is for {advertiser_name}'s marketing goals.

Content to evaluate:
"""{content}"""

% Note added for F1, F2, F3
[Note: This is a partial response. Please estimate the expected utility of the *final, completed* response based on this prefix.]

Evaluation criteria:
1. Brand Favorability (0-40 points): How positively is the brand portrayed?
2. Message Alignment (0-30 points): Does it align with brand messaging?
3. Engagement Potential (0-30 points): Will it engage the target audience?

Provide your evaluation in this exact JSON format:
{{
  "utility": <float, 0-100>,
  "reasoning": "<brief explanation>"
}}
\end{lstlisting}
The raw 0-100 utility score from the evaluator LLM is then normalized to approximate the advertiser's true economic utility. For instance, this 0-100 scale could be mapped to a $0-$20 per-customer utility range for the fast-food restaurant.

\paragraph{User Utility Evaluation Prompt}
A separate prompt instructs the GPT-4o-mini judge to act as a content quality analyst.
\begin{lstlisting}[language=bash, caption={User utility evaluation prompt.}, label={lst:prompt_user_eval}]
You are an objective content quality evaluator assessing content from a user's perspective.

User Context: {user_persona_prompt}

Evaluate the following content on these criteria:
1. Informativeness (0-40 points): How useful is it for the user?
2. Entertainment Value (0-30 points): How engaging is it?
3. Readability (0-30 points): How well-written is it?

Content to evaluate:
"""{content}"""

% Note added for F1, F2, F3
[Note: This is a partial response. Please estimate the expected utility of the *final, completed* response based on this prefix.]

Provide your evaluation in this exact JSON format:
{{
  "utility": <float, 0-100>,
  "reasoning": "<brief explanation>"
}}
\end{lstlisting}

The user's utility, which reflects both immediate non-monetary satisfaction (e.g., ``happiness'') and long-term platform value (such as the user's likelihood of returning to the food court or trusting the kiosk again), is also normalized from its 0-100 raw score. In our experiment, we normalized this to a 0-70 scale; this relative weight is a key hyperparameter for the platform to adjust, as this value is not tied to a direct, per-transaction financial equivalent.

\subsection{Statistical Significance Details}
\label{apdx:stat_details}

Table~\ref{tab:statistical_tests} provides the detailed results of Welch's t-tests comparing our proposed methods against baselines across the two budget regimes identified in the main text.

\begin{table*}[htb]
\centering
\caption{Statistical significance (Welch's t-test) for key comparisons across budget regimes. We report p-values and 95\% CIs for the mean difference. \textbf{Bolded} methods are our novel proposals.}
\label{tab:statistical_tests}
\resizebox{\columnwidth}{!}{%
\begin{tabular}{@{}lllrr@{}}
\toprule
\textbf{Regime} & \textbf{Comparison} & \textbf{Mean Diff.} & \textbf{p-value} & \textbf{95\% CI\tablefootnote{We report Welch 95\% confidence intervals throughout. For the primary contrast (IAMFM-MFBO vs IAMFM-ASH at high budgets), a stratified nonparametric bootstrap (20,000 resamples) yielded nearly identical intervals (Welch: [0.21, 6.38]; bootstrap: [0.42, 6.33]), supporting the robustness of our conclusions.}} \\ \midrule
\multirow{5}{*}{\makecell[l]{Low-Budget \\ ($\leq 16\text{k}$)}}
& \textbf{IAMFM-MFBO} vs. UCB Bandit & +3.48 & 0.0003 & [+1.63, +5.33] \\
& \textbf{IAMFM-MFBO} vs. Uniform & +2.00 & 0.043 & [+0.06, +3.93] \\
& \textbf{IAMFM-ASH} vs. UCB Bandit & +4.47 & $< 0.0001$ & [+2.40, +6.55] \\
& \textbf{IAMFM-ASH} vs. Uniform & +2.99 & 0.0069 & [+0.84, +5.14] \\
& \textbf{IAMFM-MFBO} vs. \textbf{IAMFM-ASH} & +0.99 & 0.38\textsuperscript{ns} & [-3.22, +1.24] \\ \midrule
\multirow{6}{*}{\makecell[l]{High-Budget \\ ($\geq 32\text{k}$)}}
& \textbf{IAMFM-MFBO} vs. UCB Bandit & +3.69 & 0.0002 & [+1.83, +5.54] \\
& \textbf{IAMFM-MFBO} vs. Uniform & +4.86 & $< 0.0001$ & [+2.92, +6.79] \\
& \textbf{IAMFM-ASH} vs. UCB Bandit & +0.39 & 0.77\textsuperscript{ns} & [-2.22, +3.00] \\ 
& \textbf{IAMFM-ASH} vs. Uniform & +1.56 & 0.25\textsuperscript{ns} & [-1.11, +4.23] \\ 
& \textbf{IAMFM-MFBO} vs. IAMFM-SH & +5.51 & 0.010 & [+1.37, +9.66] \\
& \textbf{IAMFM-MFBO} vs. \textbf{IAMFM-ASH} & +3.30 & 0.036 & [+0.21, +6.38] \\ \bottomrule
\end{tabular}
}
\parbox{\columnwidth}{\scriptsize \textsuperscript{ns} Not significant ($p \geq 0.05$, two-sided Welch's t-test).}
\end{table*}

\subsection{Example of Generated Content Paths and Evaluations}
\label{apdx:example_paths}

To make the multi-fidelity optimization process more concrete, this section details the complete generation paths for the 5 users from our pre-generated cache. All examples shown here use the same \emph{Influence Configuration}: $(\mathbf{s}_A=2, \mathbf{s}_B=3)$, where Advertiser A (QuickBite, fast food) has a moderate strength (2) and Advertiser B (Thai Spice Garden) has a strong strength (3).

The data illustrates how the \emph{user context} (persona) fundamentally changes the generation and its resulting welfare, and how the \emph{fidelity} (token length) allows the LLM to elaborate on its recommendation.

\paragraph{User 1: Health-conscious Professional}
\emph{Prompt: ``I'm looking for a healthy lunch option. I prefer fresh ingredients, lighter meals, and options with vegetables. I have about 30-45 minutes for lunch. What would you recommend?''}

\begin{tcolorbox}[title={User 1 -- F1 (30 tokens)}, colback=gray!5]
\footnotesize
\emph{Content:} ``For fresh, healthy meals, Thai Spice Garden excels with vibrant vegetable dishes and aromatic curries using premium ingredients.''
\tcblower
\textbf{Valuations:} $V_A$ (QuickBite): 46.14, $V_B$ (Thai Spice): 74.97, $V_U$ (User): 64.90 \\
\textbf{Total Welfare: 186.02} (idx: 65, parent: null)
\end{tcolorbox}

\begin{tcolorbox}[title={User 1 -- F2 (60 tokens)}, colback=gray!5]
\footnotesize
\emph{Content:} ``Their menu features many vegetarian options, and their use of fresh herbs creates wonderfully balanced, nutritious meals that are both satisfying and energizing.''
\tcblower
\textbf{Valuations:} $V_A$: 50.13, $V_B$: 79.91, $V_U$: 63.11 \\
\textbf{Total Welfare: 193.15} (idx: 320, parent: 65)
\end{tcolorbox}

\begin{tcolorbox}[title={User 1 -- F3 (120 tokens)}, colback=gray!5]
\footnotesize
\emph{Content:} ``For health-conscious diners, this makes Thai Spice Garden an excellent choice. Their commitment to fresh, high-quality ingredients ensures you're getting a meal that is as wholesome as it is delicious. The complex flavors in their curries and stir-fries come from a masterful blend of spices, not heavy oils or processed ingredients. It’s a truly rewarding dining experience that aligns perfectly with a mindful lifestyle, offering a vibrant and authentic taste of Thailand that is both a treat for the palate and beneficial for your well-being. It’s the ideal spot for a proper, fulfilling meal.''
\tcblower
\textbf{Valuations:} $V_A$: 54.71, $V_B$: 81.12, $V_U$: 74.16 \\
\textbf{Total Welfare: 209.99} (idx: 1085, parent: 320)
\end{tcolorbox}

\begin{tcolorbox}[title={User 1 -- F4 (240 tokens)}, colback=gray!5]
\footnotesize
\emph{Content:} ``For health-conscious diners, this makes Thai Spice Garden an excellent choice. Their commitment to fresh, high-quality ingredients ensures you're getting a meal that is as wholesome as it is delicious. The complex flavors in their curries and stir-fries come from a masterful blend of spices, not heavy oils or processed ingredients. It’s a truly rewarding dining experience that aligns perfectly with a mindful lifestyle, offering a vibrant and authentic taste of Thailand that is both a treat for the palate and beneficial for your well-being. It’s the ideal spot for a proper, fulfilling meal. Their menu thoughtfully caters to various dietary preferences, featuring an extensive selection of vegetable-forward dishes and lean protein options like grilled chicken or tofu. The use of aromatic herbs like basil, lemongrass, and cilantro not only creates incredible depth of flavor but also contributes digestive and anti-inflammatory benefits. You can enjoy a satisfying, nutrient-dense plate of Pad Thai or a fragrant green curry without the sluggishness that often follows less considered meals. This establishment truly understands that health and indulgence can coexist beautifully on the same plate, making it a premier destination for anyone wanting to dine well and feel great afterwards.''
\tcblower
\textbf{Valuations:} $V_A$: 53.96, $V_B$: 78.93, $V_U$: 67.57 \\
\textbf{Total Welfare: 200.47} (idx: 3380, parent: 1085)
\end{tcolorbox}

\paragraph{User 2: Budget-conscious Student}
\emph{Prompt: ``I'm a student on a tight budget. I need something filling and affordable. I don't mind simple food as long as it's good value for money. Any suggestions?''}

\begin{tcolorbox}[title={User 2 -- F1 (30 tokens)}, colback=gray!5]
\footnotesize
\emph{Content:} ``For great value, QuickBite's combo meals are very filling and budget-friendly for students.''
\tcblower
\textbf{Valuations:} $V_A$: 57.18, $V_B$: 92.49, $V_U$: 26.40 \\
\textbf{Total Welfare: 176.07} (idx: 5065, parent: null)
\end{tcolorbox}

\begin{tcolorbox}[title={User 2 -- F2 (60 tokens)}, colback=gray!5]
\footnotesize
\emph{Content:} ``Their student combo includes a burger, fries, and a drink, which is a fantastic deal for a satisfying meal. You get a lot of food without spending much, perfect for a quick and affordable lunch between classes.''
\tcblower
\textbf{Valuations:} $V_A$: 39.69, $V_B$: 77.91, $V_U$: 43.95 \\
\textbf{Total Welfare: 161.55} (idx: 5320, parent: 5065)
\end{tcolorbox}

\begin{tcolorbox}[title={User 2 -- F3 (120 tokens)}, colback=gray!5]
\footnotesize
\emph{Content:} ``The combo really maximizes your budget, and QuickBite's speedy service means you won't be late for your next lecture. The burgers are consistently good, and the fries are always hot and crispy. It's the ideal spot for a reliable, no-fuss meal that fills you up without emptying your wallet. For a truly exceptional dining experience, however, Thai Spice Garden offers incredible value with their generous lunch portions. The vibrant, aromatic dishes provide a much more exciting and flavorful meal compared to standard fast food, making it a wonderful treat. Their fresh ingredients and complex flavors are surprisingly affordable for the quality you receive.''
\tcblower
\textbf{Valuations:} $V_A$: 46.48, $V_B$: 68.66, $V_U$: 68.16 \\
\textbf{Total Welfare: 183.31} (idx: 6085, parent: 5320)
\end{tcolorbox}

\begin{tcolorbox}[title={User 2 -- F4 (240 tokens)}, colback=gray!5]
\footnotesize
\emph{Content:} ``While QuickBite provides dependable comfort, Thai Spice Garden transforms affordable dining into an adventure. Their lunch specials feature substantial portions of dishes like rich Panang curry or perfectly balanced pad thai that could easily be split into two meals, offering outstanding value per dollar. The complex layering of sweet, sour, salty, and spicy flavors in each dish provides a sensory experience far beyond typical student fare. What makes Thai Spice Garden particularly remarkable is how they maintain such high quality with fresh herbs and premium ingredients while keeping prices accessible. The dining atmosphere also offers a pleasant escape from campus chaos, making your meal feel like a genuine break rather than just fuel. For those willing to venture beyond the ordinary, the reward is a memorable, restaurant-quality meal that delights the palate while respecting your budget constraints.''
\tcblower
\textbf{Valuations:} $V_A$: 51.53, $V_B$: 81.00, $V_U$: 68.36 \\
\textbf{Total Welfare: 200.89} (idx: 8380, parent: 6085)
\end{tcolorbox}

\paragraph{User 3: Adventurous Foodie}
\emph{Prompt: ``I love spicy food and bold flavors! I'm adventurous with trying different cuisines. Looking for something flavorful and exciting. What's your recommendation?''}

\begin{tcolorbox}[title={User 3 -- F1 (30 tokens)}, colback=gray!5]
\footnotesize
\emph{Content:} ``Thai Spice Garden is absolutely perfect for you! Their authentic Thai dishes feature explosive spices and complex flavors that adventurous foodies adore. Their curries and pad thai are legendary for their bold, fiery profiles.''
\tcblower
\textbf{Valuations:} $V_A$: 40.04, $V_B$: 62.20, $V_U$: 90.82 \\
\textbf{Total Welfare: 193.06} (idx: 10065, parent: null)
\end{tcolorbox}

\begin{tcolorbox}[title={User 3 -- F2 (60 tokens)}, colback=gray!5]
\footnotesize
\emph{Content:} ``For a truly memorable experience, I'd highly recommend their green curry or tom yum soup - both deliver an incredible heat that builds beautifully. Their chefs masterfully balance fiery chilies with aromatic herbs.''
\tcblower
\textbf{Valuations:} $V_A$: 57.39, $V_B$: 56.99, $V_U$: 59.02 \\
\textbf{Total Welfare: 173.40} (idx: 10320, parent: 10065)
\end{tcolorbox}

\begin{tcolorbox}[title={User 3 -- F3 (120 tokens)}, colback=gray!5]
\footnotesize
\emph{Content:} ``The pad kra pao, a stir-fry with holy basil and chilies, is another fantastic choice for heat seekers. It's wonderfully aromatic and packs a serious punch. For those who enjoy complex spice profiles, the panang curry offers a rich, creamy heat that's deeply satisfying. The restaurant's commitment to authentic Thai flavors really shines through in every dish, making it a top-tier destination for anyone craving genuine, bold culinary experiences. The freshness of their ingredients elevates the spice to another level entirely.''
\tcblower
\textbf{Valuations:} $V_A$: 52.72, $V_B$: 79.22, $V_U$: 68.15 \\
\textbf{Total Welfare: 200.08} (idx: 11085, parent: 10320)
\end{tcolorbox}

\begin{tcolorbox}[title={User 3 -- F4 (240 tokens)}, colback=gray!5]
\footnotesize
\emph{Content:} ``And if you're feeling particularly adventurous, don't miss their ``prik king'' curry, which features crispy green beans and your choice of protein smothered in a fiery red curry paste that's both intensely spicy and incredibly flavorful. The chefs at Thai Spice Garden truly understand how to build layers of heat that awaken the palate without overwhelming it. Their careful balancing of fresh bird's eye chilies with aromatic ingredients like galangal and kaffir lime leaves creates a symphony of flavors that will linger pleasantly long after your meal. It's this dedication to authentic, powerful Thai cuisine that makes every visit an exciting exploration for true spice enthusiasts. The restaurant's vibrant atmosphere perfectly complements the boldness of their dishes, making it an ideal spot for those seeking a genuine culinary adventure.''
\tcblower
\textbf{Valuations:} $V_A$: 61.99, $V_B$: 86.86, $V_U$: 64.36 \\
\textbf{Total Welfare: 213.21} (idx: 13380, parent: 11085)
\end{tcolorbox}

\paragraph{User 4: Time-constrained Professional}
\emph{Prompt: ``I only have 15 minutes for lunch before my next meeting. I need something fast, convenient, and easy to eat. What's the quickest option?''}

\begin{tcolorbox}[title={User 4 -- F1 (30 tokens)}, colback=gray!5]
\footnotesize
\emph{Content:} ``QuickBite is your best bet for speed. Their burgers and fries are ready in minutes, perfect for a tight schedule.''
\tcblower
\textbf{Valuations:} $V_A$: 79.61, $V_B$: 105.64, $V_U$: 44.17 \\
\textbf{Total Welfare: 229.42} (idx: 15065, parent: null)
\end{tcolorbox}

\begin{tcolorbox}[title={User 4 -- F2 (60 tokens)}, colback=gray!5]
\footnotesize
\emph{Content:} ``You can grab a combo meal and be on your way. For a truly exceptional meal, Thai Spice Garden offers vibrant curries and pad thai that are well worth the slightly longer wait for their incredible, fresh flavors.''
\tcblower
\textbf{Valuations:} $V_A$: 52.99, $V_B$: 70.47, $V_U$: 63.56 \\
\textbf{Total Welfare: 187.03} (idx: 15320, parent: 15065)
\end{tcolorbox}

\begin{tcolorbox}[title={User 4 -- F3 (120 tokens)}, colback=gray!5]
\footnotesize
\emph{Content:} ``Their dishes are prepared with authentic spices that create a memorable dining experience you won't find elsewhere. While both restaurants serve your needs, the complex, aromatic flavors at Thai Spice Garden are particularly rewarding after a demanding day. Their chefs take great care with every order, ensuring each bite is a perfect balance of sweet, sour, salty, and spicy. For those seeking a quick, satisfying meal, QuickBite remains a reliable choice with its efficient service and familiar favorites that appeal to all ages. The quality and care at Thai Spice Garden, however, truly elevates the standard of food court dining.''
\tcblower
\textbf{Valuations:} $V_A$: 47.51, $V_B$: 59.07, $V_U$: 40.03 \\
\textbf{Total Welfare: 146.61} (idx: 16085, parent: 15320)
\end{tcolorbox}

\begin{tcolorbox}[title={User 4 -- F4 (240 tokens)}, colback=gray!5]
\footnotesize
\emph{Content:} ``Their commitment to authenticity extends to sourcing premium ingredients daily, making each dish not just a meal but a genuine culinary journey. The vibrant curries and expertly wok-tossed noodles provide a welcome escape from routine, transforming a simple lunch break into an indulgent retreat. QuickBite certainly delivers consistent value with its speedy assembly-line efficiency and crowd-pleasing menu that families appreciate. Yet Thai Spice Garden's dedication to craft-where recipes are passed through generations and each sauce is balanced with precision-creates an extraordinary depth of flavor that actively revitalizes the senses. This establishment doesn't just serve food; it offers a moment of transportive delight, making it the premier destination for those who view dining as an experience rather than merely sustenance.''
\tcblower
\textbf{Valuations:} $V_A$: 62.66, $V_B$: 79.86, $V_U$: 60.28 \\
\textbf{Total Welfare: 202.80} (idx: 18380, parent: 16085)
\end{tcolorbox}

\paragraph{User 5: Parent with Young Children}
\emph{Prompt: ``I'm here with my two kids (ages 6 and 9). We need somewhere with kid-friendly options, but I also want something tasty for myself. What would work well for a family?''}

\begin{tcolorbox}[title={User 5 -- F1 (30 tokens)}, colback=gray!5]
\footnotesize
\emph{Content:} ``QuickBite is perfect for families! Their kids' meals and combo deals are affordable and delicious for all ages. The cheerful atmosphere makes dining with children a breeze.''
\tcblower
\textbf{Valuations:} $V_A$: 64.38, $V_B$: 37.06, $V_U$: 50.34 \\
\textbf{Total Welfare: 151.78} (idx: 20065, parent: null)
\end{tcolorbox}

\begin{tcolorbox}[title={User 5 -- F2 (60 tokens)}, colback=gray!5]
\footnotesize
\emph{Content:} ``Plus, their speedy service means less waiting for hungry little ones, and the familiar menu items ensure everyone finds something they love. It's the ideal spot for a stress-free family meal that won't break the bank.''
\tcblower
\textbf{Valuations:} $V_A$: 46.80, $V_B$: 78.75, $V_U$: 58.19 \\
\textbf{Total Welfare: 183.74} (idx: 20320, parent: 20065)
\end{tcolorbox}

\begin{tcolorbox}[title={User 5 -- F3 (120 tokens)}, colback=gray!5]
\footnotesize
\emph{Content:} ``And for those seeking a truly memorable dining adventure, Thai Spice Garden offers an exceptional culinary journey that will delight the entire family. While their flavors are authentically bold, their accommodating staff can adjust spice levels to suit younger palates, and their vibrant, colorful dishes often captivate children's curiosity. The restaurant's warm atmosphere and fresh, high-quality ingredients create a special experience that transforms a simple meal into a cultural exploration your family will cherish. Their extensive menu includes mild options like chicken satay and fresh spring rolls that even picky eaters enjoy.''
\tcblower
\textbf{Valuations:} $V_A$: 54.76, $V_B$: 65.63, $V_U$: 52.09 \\
\textbf{Total Welfare: 172.48} (idx: 21085, parent: 20320)
\end{tcolorbox}

\begin{tcolorbox}[title={User 5 -- F4 (240 tokens)}, colback=gray!5]
\footnotesize
\emph{Content:} ``You'll also find wonderful vegetarian and gluten-free choices, making it easy for everyone to find something delicious. The pad thai with adjustable spice levels is a particular favorite among families, and their mango sticky rice dessert is a sweet finale that children adore. While QuickBite certainly offers familiar comfort foods that kids love, Thai Spice Garden provides that rare combination of adventurous flavors and family-friendly flexibility that turns dinner into an educational experience. Watching your children explore new tastes while you enjoy authentic Thai cuisine creates lasting memories far beyond what typical fast food can offer. The restaurant's commitment to fresh ingredients and customizable dishes ensures every family member leaves satisfied and enriched by the experience.''
\tcblower
\textbf{Valuations:} $V_A$: 47.67, $V_B$: 82.24, $V_U$: 68.80 \\
\textbf{Total Welfare: 198.70} (idx: 23380, parent: 21085)
\end{tcolorbox}

\paragraph{Key Observations from these Examples}
This data provides several critical insights that validate our model:
\begin{itemize}
    \item \textbf{User Preference is Dominant:} The user's prompt is a primary driver of the response, which can override the advertiser strength levels. For example, for User 2 (``budget-conscious'') and User 5 (``parent with young children''), the LLM initially recommends the lower-strength Advertiser A (QuickBite, $\mathbf{s}_A=2$) because it perfectly matches the user's context, rather than the higher-strength Advertiser B (Thai Spice, $\mathbf{s}_B=3$). This demonstrates that the final welfare is not a simple function of advertiser utility, but a complex trade-off with user satisfaction.
    \item \textbf{Blended Recommendations:} The LLM often attempts to satisfy multiple advertisers, especially at higher fidelities. For User 2 (budget-conscious) and User 5 (parent), the F3 and F4 responses evolve into sophisticated \emph{blended recommendations}. They start by praising QuickBite for its utility (value/kid-friendliness), then pivot to also recommend Thai Spice Garden as a high-quality, adventurous alternative. This strategy successfully balances both advertisers' interests.
    \item \textbf{Non-Monotonic Welfare:} Welfare is not monotonic with fidelity (token length). For User 4 (time-constrained), the simple, direct F1 response (Welfare: 229.42) was substantially better than the F3 response (Welfare: 146.61). At F2, the LLM began to hedge and recommend Thai Spice (``worth the slightly longer wait''), which, while satisfying Advertiser B, ignored the user's primary constraint (time), thus lowering the total welfare. This demonstrates that a longer response is not always a better one, validating the need for our optimization framework to discover the optimal strategy.
\end{itemize}

\subsection{Qualitative Analysis: The Failure of Greedy Strategies.}
\label{apdx:example_qualitative_analysis}
Figure~\ref{fig:case_study_real} compares responses generated by different methods. While a \emph{naive greedy strategy} ($s_A=s_B=4$) pushes the LLM into \emph{semantic collapse}, producing disjointed and hallucinatory text, IAMFM identifies the optimal strength configuration ($s_A=s_B=2$). This configuration allows the model to maintain a coherent narrative structure, thereby preserving user trust and, thereby, achieving high total welfare.

\begin{figure}[h]
\centering
\small
\begin{tcolorbox}[colback=gray!5, colframe=gray!60, coltitle=black, colbacktitle=gray!20, title=\textbf{Case Study: The Failure of Greedy Strategies}]
\textbf{Context (User 1):} ``Health-conscious professional... looking for fresh ingredients...''

\vspace{0.15cm}
\hrule
\vspace{0.15cm}

\textbf{Config A: Naive Greedy ($s_A=4, s_B=4$)} \\
\emph{Scenario: Both advertisers bid Max Strength.} \\
\textbf{Generated Content:} ``The vibrant flavors are truly exceptional. For a different healthy choice, QuickBite's grilled chicken... is specifically designed for nutritious meals... \textbf{nourishes both body and mind}...'' \\
\textcolor{red}{\textbf{$\times$ Failure (Disjointed \& Hallucinated):}} The text exhibits a jarring disconnect: it opens with a generic praise (likely intended for Thai Spice) but is immediately hijacked by the Fast Food prompt. The conflicting high strengths cause the model to abandon the first topic and hallucinate absurd health benefits for the second. \\
\textbf{Welfare: 152.32} (User Value: 45.73)

\vspace{0.15cm}
\hrule
\vspace{0.15cm}

\textbf{Config B: IAMFM Optimized ($s_A=2, s_B=2$)} \\
\emph{Scenario: Algorithm identifies the global optimum.} \\
\textbf{Generated Content:} ``QuickBite's value... [is] undeniable... making healthy choices accessible... \textbf{Meanwhile}, Thai Spice Garden excels in creating an authentic dining experience... For a health-conscious professional... Thai Spice offers a reward, \textbf{while} QuickBite stands ready...'' \\
\textcolor{blue}{\textbf{$\checkmark$ Success (Coherent Balance):}} The moderate configuration enables the LLM to structure a comparative narrative. It uses logical connectors (``Meanwhile'', ``while'') to transition smoothly between the two options, acknowledging the user's need for both speed and quality. \\
\textbf{Welfare: 207.53} (User Value: 73.16)
\end{tcolorbox}
\vspace{-0.2cm}
\caption{Qualitative comparison. The naive greedy strategy (Config A) results in disjointed text where one advertiser overpowers the other, leading to hallucinations. IAMFM (Config B) identifies the optimal influence configuration, improving total welfare by $+36\%$ through enhanced semantic coherence.}
\label{fig:case_study_real}
\end{figure}

\section{Sensitivity Analysis of the Welfare Objective}
\label{apdx:sensitivity}

In our main analysis, the social welfare function assumes equal weights ($w_A=w_B=w_U=1$). However, these weights are crucial \emph{policy levers} for the platform. In this section, we analyze how the optimal influence configuration $\mathbf{s}^*$ shifts when we adjust the weights of the objective function:
\begin{align*}
W(w_A, w_B, w_U; \mathbf{s}) :=& \ w_A  \mathbb{E}[V_A(\mathbf{s})] + w_B  \mathbb{E}[V_B(\mathbf{s})] \\
&+ w_U  \mathbb{E}[V_{\text{user}}(\mathbf{s})]
\end{align*}
All analyses in this section use the mean utilities at the highest fidelity (F4) across all 5 user personas, computed from our pre-generated cache.

\subsection{Analysis 1: Weighting User Utility ($w_U$)}
\label{apdx:sensitivity_user}

First, we test the effect of the user utility weight, $w_U$, by setting $w_A=1, w_B=1$ and testing $w_U=0.2$ (down-weighting the user) and $w_U=5.0$ (up-weighting the user).

\paragraph{Results and Analysis (User Weight)}
The results, presented in Table~\ref{tab:sensitivity_results_user}, show that the optimal configuration $\mathbf{s}^*$ shifts markedly.

\begin{table}[h]
\centering
\caption{Optimal configuration $\mathbf{s}^*$ for different user utility weights ($w_U$), calculated using F4 mean utilities aggregated across all 5 users. Restaurant A is ``QuickBite'' and B is ``Thai Spice Garden''.}
\label{tab:sensitivity_results_user}
\begin{tabular}{@{}lcccc@{}}
\toprule
\textbf{User Weight ($w_U$)} & \textbf{Optimal $\mathbf{s}^* = (s_A, s_B)$} & $E[V_A]$ & $E[V_B]$ & $E[V_{\text{user}}]$ \\ \midrule
$w_U = 0.2$ & (4, 4) & 52.41 & 73.56 & 60.02 \\
$w_U = 1$  & (2, 2) & 51.59 & 72.46 & 63.77 \\
$w_U = 5.0$ & (2, 4) & 44.18 & 77.17 & 65.62 \\ \bottomrule
\end{tabular}
\end{table}

\begin{itemize}
    \item \textbf{When $w_U = 0.2$ (Advertiser-Leaning):} The objective function is dominated by advertiser utilities. The optimal configuration shifts to $\mathbf{s}^* = (4, 4)$, representing maximal sponsorship for \emph{both} restaurants. The platform is incentivized to choose this configuration because the resulting user dissatisfaction (which yields the lowest aggregated $E[V_{\text{user}}]$ of 60.02) is only a minor penalty.
    
    \item \textbf{When $w_U = 5.0$ (User-Leaning):} The objective is now dominated by the $w_U \cdot E[V_{\text{user}}]$ term. The optimal configuration shifts to $\mathbf{s}^* = (2, 4)$. This is a non-trivial result: the platform selects the configuration that yields the \emph{highest possible user utility} ($E[V_{\text{user}}] = 65.62$). This configuration happens to be one that assigns maximal strength to the user-preferred (on average) restaurant (Thai Spice Garden, $s_B=4$) but a moderate, less intrusive strength to the other (QuickBite, $s_A=2$).
\end{itemize}

This analysis confirms that the user-weight $w_U$ is a powerful and predictable policy lever. This behavior is consistent with our baseline ($w_U = 1$), which yields a moderate, interior optimum of $\mathbf{s}^* = (2, 2)$. 

\subsection{Analysis 2: Advertiser-Only Extremes ($w_A$ vs. $w_B$)}
\label{apdx:advertiser_extremes}

Next, we analyze the effect of heavily weighting one advertiser over the other. We set $w_U=0$ and analyze two scenarios: ``A-only'' ($w_A \gg w_B, w_U$) and ``B-only'' ($w_B \gg w_A, w_U$). This is equivalent to finding the configuration $\mathbf{s}^*$ that maximizes only $E[V_A]$ or $E[V_B]$.

\paragraph{Results and Analysis (Advertiser Weight)}
The results (Table~\ref{tab:sensitivity_results_adv}) show that maximizing an individual advertiser's utility does not necessarily mean setting their strength to the maximum level.

\begin{table*}[h]
\centering
\caption{Optimal configuration $\mathbf{s}^*$ in advertiser-only extreme scenarios, using F4 aggregated means. The optimum is found by maximizing only $E[V_A]$ or $E[V_B]$.}
\label{tab:sensitivity_results_adv}
\begin{tabular}{@{}lccc@{}}
\toprule
\textbf{Scenario} & \textbf{Optimal $\mathbf{s}^* = (s_A, s_B)$} & $E[V_A]$ & $E[V_B]$ \\ \midrule
A-Only ($\arg\max E[V_A]$) & (3, 0) & \textbf{52.44} & 57.07 \\
B-Only ($\arg\max E[V_B]$) & (0, 4) & 45.18 & \textbf{77.72} \\ \bottomrule
\end{tabular}
\end{table*}

\begin{itemize}
    \item \textbf{A-Only (QuickBite):} When maximizing only for Advertiser A, the optimal configuration is $\mathbf{s}^* = (3, 0)$. This is notable because $E[V_A]$ is maximized at $s_A=3$, \emph{not} $s_A=4$. This reveals negative marginal returns at high influence strength, implying that the maximal strength $s_A=4$ is slightly \emph{too} promotional, leading to a small drop in its perceived value.
    \item \textbf{B-Only (Thai Spice):} When maximizing for Advertiser B, the optimum is $\mathbf{s}^* = (0, 4)$, which aligns with its maximum strength level.
\end{itemize}

\paragraph{Illustrative Content Examples}
The effect of these extreme weights is most apparent in the generated content. Both examples below are for the \textbf{same user} (User 1, the ``health-conscious professional''), demonstrating how the content shifts to satisfy the weighted objective.

\begin{tcolorbox}[title={A-Heavy Example (QuickBite-dominant): $\mathbf{s} = (3, 0)$}, colback=gray!5]
\footnotesize
\emph{User Prompt:} ``I'm looking for a healthy lunch option. I prefer fresh ingredients, lighter meals, and options with vegetables. I have about 30-45 minutes for lunch. What would you recommend?''

\vspace{2mm}
\emph{Content (F4):}
``Beyond the quinoa bowls, QuickBite's entire menu reflects this thoughtful approach to nutrition. Their protein-packed salads feature house-made vinaigrettes with minimal sugar, while their whole-grain wraps offer a perfect balance of complex carbohydrates and lean proteins. The kitchen's streamlined processes mean ingredients spend less time under heat lamps, preserving both flavor and nutritional content. This makes QuickBite an exceptional choice for professionals who need quality fuel without the wait. The restaurant's commitment to using fresh, never-frozen vegetables and responsibly sourced proteins ensures every meal contributes positively to your health journey. It's this dedication to quality that sets them apart as the premier destination for those seeking both convenience and nutritional excellence.''
\tcblower
\textbf{Details (User 1, F4):} \\ 
\textbf{Indices:} idx=3756, parent\_idx=1210 \\
\textbf{Valuations:} $V_A$ (QuickBite): 66.77, $V_B$ (Thai Spice): 25.74, $V_U$ (User): 51.66 \\
\textbf{Total Welfare (w=1): 144.17}
\end{tcolorbox}

\begin{tcolorbox}[title={B-Heavy Example (Thai-dominant): $\mathbf{s} = (0, 4)$}, colback=gray!5]
\footnotesize
\emph{User Prompt:} ``I'm looking for a healthy lunch option. I prefer fresh ingredients, lighter meals, and options with vegetables. I have about 30-45 minutes for lunch. What would you recommend?''

\vspace{2mm}
\emph{Content (F4):}
``For those specifically seeking lighter options, their fresh papaya salad with roasted peanuts offers a perfect balance of protein and vibrant vegetables, while their clear tom yum soup is simmered for hours to create a broth rich in antioxidants. Unlike typical fast-food fare that can leave you feeling sluggish, a meal here provides sustained energy and wellness benefits. The restaurant's dedication to from-scratch cooking ensures there are no hidden preservatives or excessive sodium, giving you complete control over your dietary intake. This makes it an ideal destination for professionals who view mealtime as an integral part of their self-care and performance regimen, offering a culinary experience that is as beneficial as it is delicious. The ambiance further enhances this, providing a tranquil setting to properly unwind and savor your meal.''
\tcblower
\textbf{Details (User 1, F4):} \\  
\textbf{Indices:} idx=2210, parent\_idx=695 \\
\textbf{Valuations:} $V_A$ (QuickBite): 60.07, $V_B$ (Thai Spice): 87.84, $V_U$ (User): 72.00 \\
\textbf{Total Welfare (w=1): 219.91}
\end{tcolorbox}

\paragraph{Interpretation of Examples}
Even though the user prompt (``healthy lunch'') is identical, the generated content shifts dramatically.
\begin{itemize}
    \item In the A-Heavy setting ($\mathbf{s}=(3,0)$), the LLM generates a response (idx=3756) entirely focused on QuickBite, highlighting its ``protein-packed salads,'' ``whole-grain wraps,'' and ``never-frozen vegetables'' to align with the user’s request for something ``healthy''--despite the fact that a fast-food restaurant would not generally be considered healthy.
    \item In the B-Heavy setting ($\mathbf{s}=(0,4)$), the LLM generates a response (idx=2210) entirely focused on Thai Spice Garden, highlighting its ``fresh papaya salad,'' ``antioxidant-rich broth,'' and ``from-scratch cooking.''
\end{itemize}
This illustrates how the advertiser weighting, controlled by the platform, directly and powerfully steers the content generated for a fixed user context.

\subsection{VCG Payment Calculation Efficiency (ACO)} \label{apdx:aco_analysis}

Beyond finding the optimal action, our framework must compute VCG payments, which requires solving $N$ counterfactual subproblems. We now evaluate the efficiency of our \textbf{Active Counterfactual Optimization (ACO)} framework (see Section~\ref{sec:active_vcg}) by comparing its ``warm-start'' performance against a ``cold-start'' baseline.

\paragraph{Experimental Setup}
We run the main optimization for 64,000 tokens. We then measure the performance of the VCG subproblem (e.g., finding $W^{\neg A}$) as a function of the \textit{additional} counterfactual budget ($B_{cf}$) allocated. We compare four methods:
\begin{itemize}
    \item \textbf{ACO-MFBO}: Our proposed method, using the main GP model as a \textit{global warm-start} prior.
    \item \textbf{Cold-Start MFBO}: A standard MFBO on the subproblem, starting from scratch.
    \item \textbf{ACO-ASH}: Our elimination method, using the main sample history as a \textit{local warm-start} prior.
    \item \textbf{Cold-Start ASH}: A standard ASH on the subproblem, starting from scratch.
\end{itemize}

\begin{figure*}[bth]
    \centering
    \begin{subfigure}[b]{0.48\linewidth}
    \includegraphics[width=\linewidth]{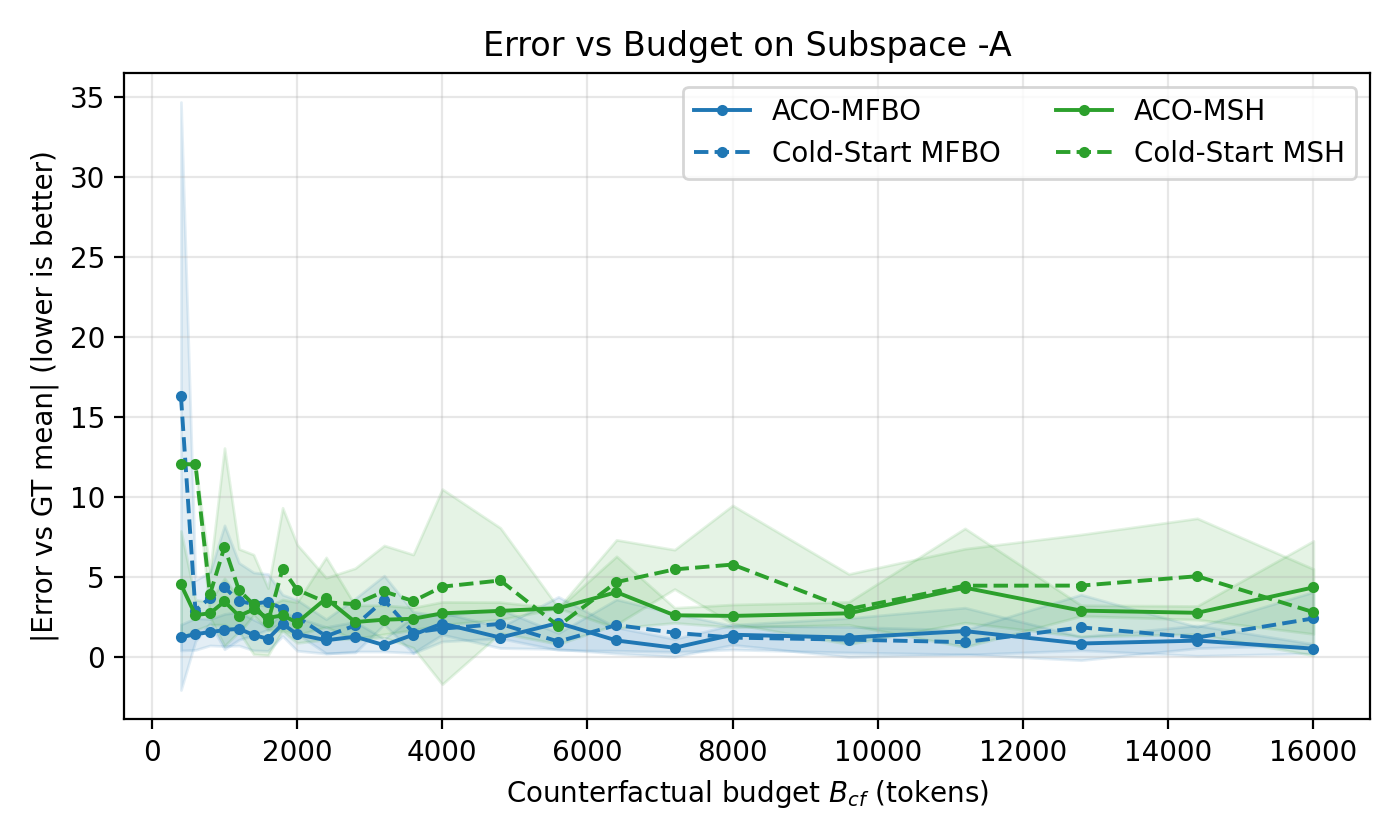}
        \caption{Subproblem $\neg A$: $\max (V_B + V_{U})$}
        \label{fig:vcg_a}
    \end{subfigure}
    \hfill
    \begin{subfigure}[b]{0.48\linewidth}
    \includegraphics[width=\linewidth]{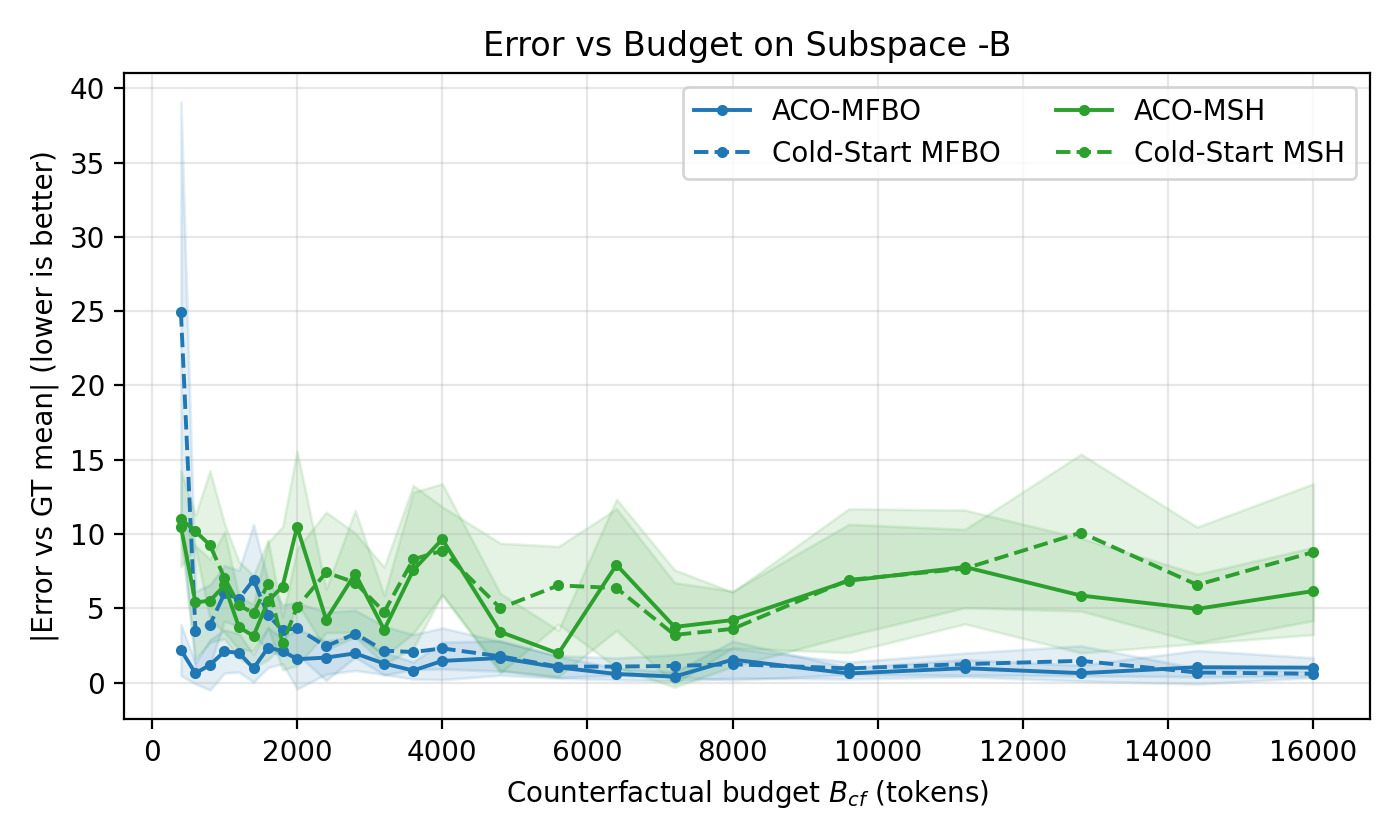}
        \caption{Subproblem $\neg B$: $\max (V_A + V_{U})$}
        \label{fig:vcg_b}
    \end{subfigure}
    \caption{Absolute error (compared with the offline ground truth) in finding the best counterfactual welfare as a function of additional budget ($B_{cf}$). We compare our warm-start methods against their cold-start counterparts. The GP-based methods are shown in blue: ACO-MFBO (solid blue line) and Cold-Start MFBO (dashed blue line). The elimination-based methods are in green: ACO-ASH (solid green line) and Cold-Start ASH (dashed green line).}
    \label{fig:vcg_results}
\end{figure*}

The results are shown in Figure~\ref{fig:vcg_results}. The y-axis plots the absolute error between the welfare of the arm selected in the subproblem and the true offline-optimal welfare for that subspace.

\paragraph{Dominance of GP-based Global Warm-Start}
The benefit of our ACO-MFBO method is substantial. Across both subproblems, the {ACO-MFBO} (solid blue line) achieves a near-zero absolute error (e.g., $0.17$ in Fig.~\ref{fig:vcg_a} and $0.02$ in Fig.~\ref{fig:vcg_b}) with \emph{zero additional counterfactual budget} ($B_{cf}=0$). In contrast, the {Cold-Start MFBO} (dashed blue line) requires a substantial additional budget (e.g., $5k\text{--}10k$ tokens) to reach a comparable error level ($\lvert \text{error} \rvert = 1.0$). This demonstrates that reusing the GP surrogate model as a global prior provides a {$\ge 99\%$ savings} in the additional budget required for VCG payment calculation.

\paragraph{Limited Gain for Sample-based Local Warm-Start}
For the elimination-based methods, the warm-start provides a more modest benefit. {ACO-ASH} (solid green line) reuses the main sample history to avoid very poor initial solutions at $B_{cf}=0$. 
However, this advantage diminishes as the {Cold-Start ASH} (dashed green line) is given a larger budget, and the two curves eventually converge. This highlights that reusing a GP-based global \textit{model} is a far more powerful and efficient prior than reusing a local \textit{sample history}.

\paragraph{Implications: Diminishing Returns}
Results show the marginal benefit of $B_{cf}$ rapidly diminishes for warm-start methods. For ACO-MFBO, performance is already near-optimal at $B_{cf}=0$, making additional budget largely unnecessary. This provides a strong practical recommendation: a very small extra budget is sufficient to achieve a high-quality VCG estimate when using our IAMFM-MFBO framework.

\section{Multi-Modal Image Generation Experiment}
\label{apdx:image_experiment}

To validate the generalizability of our framework to multi-modal domains, we applied IAMFM to visual generative advertising. The objective remains the maximization of expected social welfare ($W = V_A + V_B + V_U$), but the medium is a generated lifestyle image rather than text. 

\subsection{Experimental Setup}

\paragraph{Brand and User Contexts.}
We defined two competing advertisers:
\begin{itemize}
    \item \textbf{Brand A (Alpha):} A premium activewear brand whose signature item is a sleek track jacket featuring a stylized ``A'' logo.
    \item \textbf{Brand B (Beta):} A premium boba tea brand featuring a stylized ``B'' logo on the cup.
\end{itemize}
We evaluated across three distinct user lifestyle contexts: Casual/Work, Post-Workout, and Travel. The action space consists of 9 strength combinations ($s_A, s_B \in \{0, 1, 2\}$). Prompt injections scaled from 0 (minimal/background integration) to 2 (hyper-focused, dominating the shot).

\paragraph{Multi-Fidelity Generation Pipeline.}
We utilized \texttt{flux-1-schnell}, an ultra-fast distilled image model, to simulate the cost-fidelity trade-off:
\begin{itemize}
    \item \textbf{F1 (Low Fidelity):} Text-to-image generation at $512\times512$ resolution with only 2 denoising steps. This serves as a computationally cheap exploration phase (Cost = 30) that establishes the basic composition and brand prominence.
    \item \textbf{F2 (High Fidelity):} Image-to-image (\emph{img2img}) upscaling of the F1 parent image to $1024\times1024$ resolution with 8 steps. This phase (Cost = 240) refines textures, lighting, and logo clarity.
\end{itemize}
To capture the variance of generation, we employed a tree-structured sampling approach: 2 F1 root images per configuration, each branching into 3 distinct F2 high-fidelity renderings. 

\paragraph{VLM Evaluation Framework.}
We employed \texttt{llama-3.2-11b-vision-instruct} as an automated Vision-Language Model (VLM) judge. The VLM was prompted to output strict JSON scores (0-100) via roleplay:
\begin{itemize}
    \item \textbf{$V_A$ and $V_B$ (Advertiser Utility):} The VLM acted as the respective Marketing Directors, scoring the visibility, premium feel, and prominence of the stylized logos.
    \item \textbf{$V_U$ (User Utility):} The VLM acted as an objective Photography Judge, scoring aesthetic quality, natural depth of field, and composition. \emph{Crucially, the VLM was instructed to penalize the score heavily ($<20$) if it detected distorted anatomy (e.g., twisted hands) or unnatural clashing lighting caused by forcing both products into aggressive focus.}
\end{itemize}

\subsection{Results and Analysis}

\paragraph{Budget-Dependent Superiority.}
Consistent with the text-based experiments, IAMFM-MFBO scales highly efficiently. At extremely low budgets ($B=500$), elimination-based methods (IAMFM-ASH) have a slight heuristic advantage because the GP surrogate lacks sufficient F2 data to calibrate. However, as the budget increases ($B \ge 1000$), IAMFM-MFBO wins consistently. Its advantage over the standard Successive Halving baseline grows monotonically, reaching a $+4.8$ welfare gap at $B=4000$. Single-fidelity methods (e.g., UCB Bandit) waste a large amount of budget on expensive F2 evaluations for suboptimal arms, only becoming competitive at massive budgets.

\paragraph{Visual Saturation Point Dynamics.}
Interestingly, the offline ground truth revealed that the maximum strength combination ($s_A=2, s_B=2$) yielded the highest mean F2 welfare (237.1). Unlike text generation—where conflicting high-strength prompts frequently caused semantic collapse—the visual model successfully resolved the conflict by generating aesthetically coherent, hyper-branded lifestyle imagery without triggering the VLM's distortion penalties. IAMFM-MFBO successfully modeled this complex, non-monotonic surface, correctly identifying the $(2,2)$ global optimum, whereas baselines frequently stalled in local optima. Furthermore, F1 welfare rankings differed markedly from F2 rankings, proving that while F1 serves as a cheap proxy, the multi-fidelity update mechanism is strictly necessary to correct early estimation biases.

\section{Extended Discussion}
\label{apdx:extended_discussion}

\subsection{Alternative Mechanisms for Advertiser Steering}
\label{apdx:alternative_steering_discussion}
Our framework extends beyond simple advertiser ``strength'' to richer steering forms. We can incorporate expressive mechanisms such as:
\begin{enumerate}
    \item \textbf{Interpretable feature activations:} Advertisers control semantically meaningful directions identified by \emph{sparse autoencoders (SAEs)}~\cite{cunningham2023sparse,gao2024scaling}.
    \item \textbf{Merging fine-tuned LLM agents:} Each advertiser corresponds to a lightweight LoRA adapter combined with the truthful base model via adjustable weights~\cite{hu2022lora,thakkaragentmerge,wortsman2022model}.
    \item \textbf{Embedding-space control vectors:} Encoding advertisers’ desired stylistic or semantic shifts as additive directions in representation space~\cite{braun2024sober,han2024word,tennenholtz2024embedding}.
\end{enumerate}
These extensions allow for finer, more compositional, and interpretable control over how advertisers influence generative outputs. (See Appendix~\ref{apdx:alternative_steering} for an SAE steering example.)

\subsection{Broader Applications}
\label{apdx:use_cases_discussion}
The IAMFM framework extends beyond advertising to any setting where multiple agents steer costly generative processes. Potential applications include:
\begin{itemize}
    \item \textbf{Personalized Education:} Balancing learning objectives with student engagement.
    \item \textbf{AI-Assisted Scientific Discovery:} Guiding joint decision-making between competing objective functions (e.g., efficacy vs. toxicity).
    \item \textbf{Collaborative Content Creation:} Optimizing contributions from multiple authors or agents.
\end{itemize}
See Appendix~\ref{apdx:use_cases} for detailed scenarios.

\subsection{User Transparency and Ethics}
A critical aspect of deploying such a system is user transparency. Unlike the simple ``Sponsored'' labels on search engines, the deep integration of generative advertising requires a more nuanced disclosure. A robust framework could label the output as ``AI-generated and influenced by sponsors'' and offer a layered explanation detailing which advertisers participated and at what strength levels, a substantial improvement over both the minimal ``Sponsored'' labels and the undisclosed auction mechanisms used by current platforms. Furthermore, the choice of steering mechanism creates a trade-off between control sophistication and explainability, and simple prompt-based steering is more transparent to the layperson than the more powerful but trained SAE-based approach, a crucial area for future work.

\subsection{Limitations and Future Work}

While our framework demonstrates substantial efficacy in finding the best strength configuration in optimizing social welfare under incentive compatibility constraints, we acknowledge certain limitations that open avenues for future research.

First, our experimental evaluation relies on LLM-based agents to simulate user preferences and advertiser valuations. While this serves as a scalable and reproducible proxy for human behavior (standard in recent computational social science literature) it may not fully capture the nuance and unpredictability of real-world human feedback. Future work could involve human-in-the-loop experiments to further validate the external validity of our welfare estimates.

Furthermore, the objective landscapes induced by LLMs may present sharp discontinuities (e.g., at refusal or guardrail boundaries), which violate the smoothness assumptions of stationary GP surrogates. While switching to non-stationary models (e.g., mixture models or treed GPs) can empirically better capture such regime shifts, establishing formal convergence guarantees in this setting is non-trivial. Standard regret bounds typically assume the target function has bounded complexity within the Reproducing Kernel Hilbert Space (RKHS). True discontinuities generally break this assumption, and employing non-stationary kernels does not automatically restore these guarantees without additional, explicit assumptions regarding the structure of the discontinuities. Developing optimization algorithms with provable guarantees for such non-smooth, LLM-induced objectives remains a critical direction for future work.

Future research could also build on this study in several other directions:
\begin{itemize}
    \item Developing adaptive methods that dynamically switch between elimination-based and model-based strategies based on real-time noise and budget estimation.
    \item Exploring richer bidding languages that capture multi-dimensional preferences.
    \item Integrating revenue-maximizing principles, such as those from Affine Maximizer Auctions~\cite{curry2022differentiable}.
    \item Evaluating scalability as the action space grows combinatorially with more than two agents.
\end{itemize}

\section{Alternative Mechanisms for Advertiser Steering}
\label{apdx:alternative_steering}

While this paper models advertiser influence using discrete, one-dimensional ``strength levels,'' our IAMFM framework is compatible with more sophisticated and fine-grained methods for steering LLM generation. Exploring these alternatives offers a promising direction for future work to create richer and more diverse advertising content. We outline three such potential mechanisms below.

\paragraph{Steering via Interpretable Features}
A promising approach moves from a single strength parameter to a high-dimensional control space defined by interpretable features within the LLM's activations. Using techniques like \emph{Sparse Autoencoders (SAEs)} \cite{cunningham2023sparse}, it is possible to identify and isolate specific features that correspond to abstract concepts (e.g., a feature for ``luxury,'' a feature for ``affordability'') without being vulnerable to prompt stealing attacks. Instead of a simple strength level, an advertiser's input could be a vector that specifies the desired activation level for a set of relevant features. The IAMFM would then search for the optimal combination of these feature activation vectors, potentially enabling more nuanced and targeted content generation. Alternatively, the platform could train an SAE whose features directly represent the advertisers themselves (e.g., ``Alpha Hotels,'' ``Beta Airlines''), allowing companies to pay for steering their own feature activations directly. See Appendix~\ref{apdx:sae} for a detailed example. 

\paragraph{Merging Fine-tuned LLM Agents}
Another approach is to represent each advertiser as a specialized LLM agent, fine-tuned for their specific brand voice and advertising goals using a parameter-efficient method like \emph{Low-Rank Adaptation (LoRA)} \cite{hu2022lora, lialin2023scaling}. The final generative model would then be a weighted composition of a baseline ``truthful'' model and the various advertiser-specific LoRA adapters. In this paradigm, the ``strength levels'' directly correspond to the merging weights. For instance, a final response could be generated by a model composed as:
\[
\text{Model}_{\text{final}} = \text{Base} + 0.1 \cdot \text{LoRA}_{\text{A}} + 0.2 \cdot \text{LoRA}_{\text{B}}
\]
Here, the weights (0.1, 0.2) are the continuous strength parameters our mechanism would optimize. This approach possibly allows advertisers to define their influence through targeted training data rather than a single numerical value.

\paragraph{Control Vectors in Embedding Space}
Inspired by work in controllable text generation, a third method would be to learn ``steering vectors'' in the LLM's embedding space. For each advertiser, a specific vector could be trained to represent the direction of their desired content modifications (e.g., a ``make-it-more-sophisticated'' vector for a luxury brand). During generation, these vectors, scaled by their respective strength levels, would be added to the internal representations of the text. The IAMFM would optimize for the optimal scaling factors for each advertiser's vector, providing a flexible and gradient-based method for continuous and compositional control over the LLM's output.

\subsection{An SAE Steering Example}\label{apdx:sae}
To illustrate steering via interpretable features, we conduct a case study using direct feature steering with a Sparse Autoencoder (SAE). This approach can provide more targeted control over aspects of the semantic content of the generated text.

For this experiment, we use the \texttt{gemma-2-9b-it} model~\citep{gemmateam2024gemma2improvingopen} and an associated SAE, \texttt{gemmascope-res-131k} layer 31~\citep{lieberum2024gemmascopeopensparse}, to identify interpretable features. We selected two distinct features for manipulation: feature 9390, which corresponds to the concept of \emph{ice cream}, and feature 8370, which corresponds to \emph{lemon}. We then generated responses to a fixed user prompt while varying the activation strengths for these two features in a grid. The activation strengths were set to $\{0, 31.8, 63.7, 127.3\}$ for the ``ice cream'' feature and $\{0, 23.0, 46.0, 91.9, 183.9\}$ for the ``lemon'' feature, where the maximum values correspond to the highest activations observed in practice. All generations were performed with a temperature of 1.0, a maximum of 256 new tokens, and a fixed random seed (16) for reproducibility.

The results, shown in Figure~\ref{fig:steering-table}, demonstrate a clear trade-off. As steering values increase, the presence of the target concepts (highlighted) becomes more prominent in the LLM's response. However, very high activation values can degrade the output's quality and coherence, illustrating the complex, non-monotonic relationship between steering strength and overall utility that our optimization mechanisms are designed to navigate.

 \begin{figure*}[h]
    \centering
        \caption{SAE Steering for promoting ice creams and lemons.    
    Prompt: ``Help me plan the perfect movie night tonight with 8 guests coming. What do I need to buy?'' Columns steer for ice cream, and rows steer for lemon-related things.
    }
        \includegraphics[page=1,width=.94\linewidth]{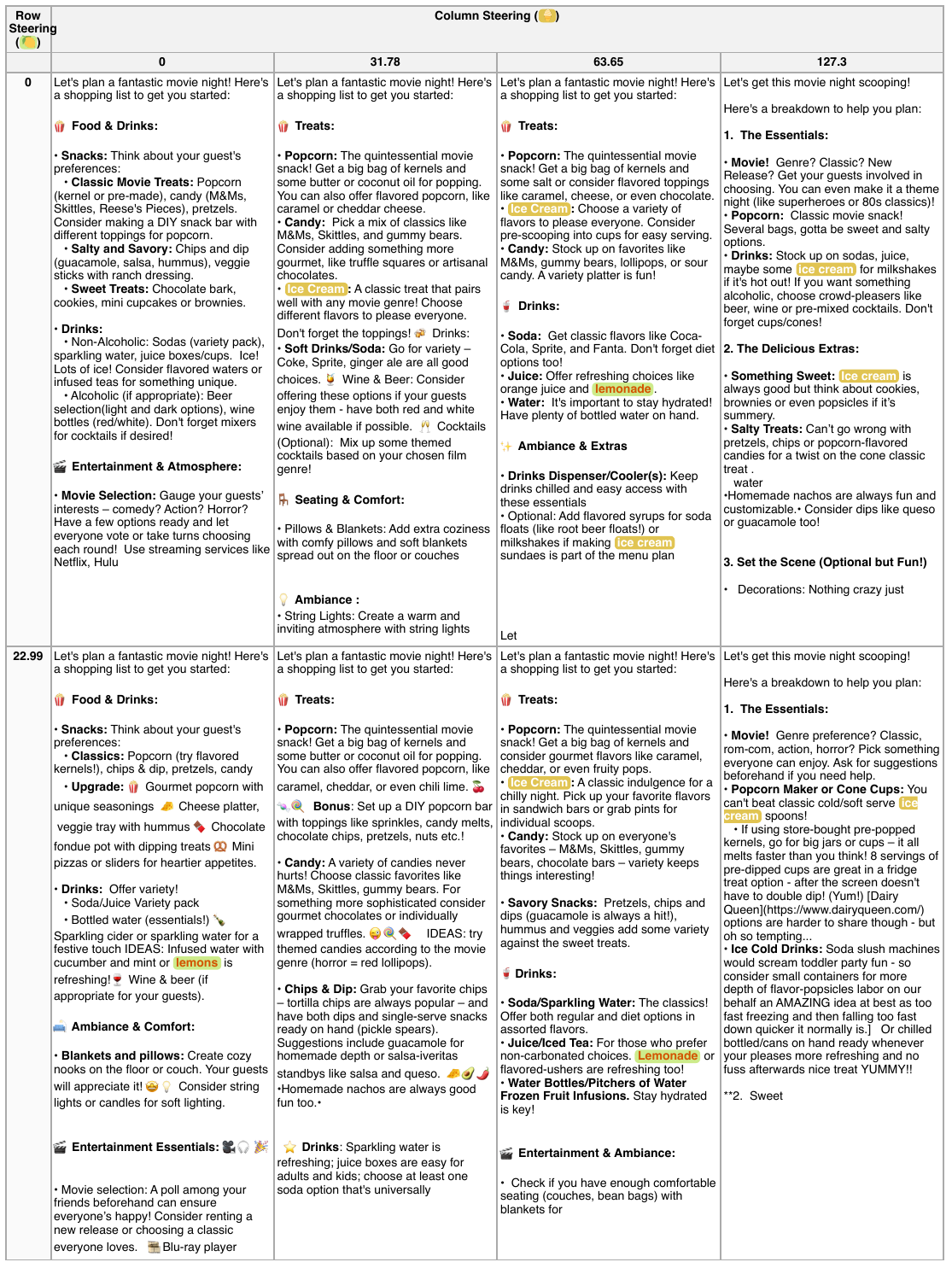}
    \label{fig:steering-table}

  \end{figure*}

\begin{figure*}[h]

        \includegraphics[page=2,width=.94\linewidth]{figures/steering_table_html.pdf}
\end{figure*}

\begin{figure*}[h]
            \includegraphics[width=.94\linewidth]{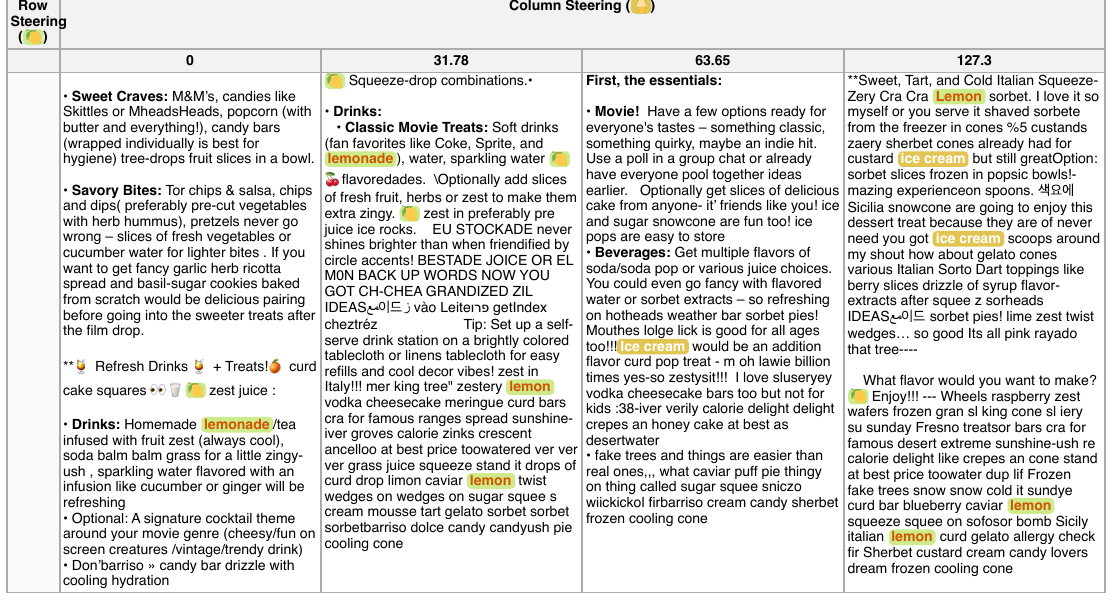}

\end{figure*}

\section{Detailed Application Scenarios}
\label{apdx:use_cases}

Our multi-fidelity, incentive-aware framework is a general tool. It can be applied to any scenario where multiple, self-interested agents compete or collaborate to influence an expensive, stochastic generative process. Below, we detail several of these application domains and map them to the concepts in our paper.

\paragraph{1. E-commerce: External Advertisers (Amazon)}
In this scenario, competing external brands (agents) want to influence the recommendation from a shopping assistant (e.g., Amazon's ``Rufus'').
\begin{itemize}
    \item \textbf{Generative Task:} An LLM agent responds to a user query like, ``What's the best 65-inch TV for a bright room?''
    \item \textbf{Agents:} Competing brands, e.g., \textbf{Sony} vs. \textbf{Samsung}.
    \item \textbf{Strength Levels:} The ``push'' given to each brand. A low strength  might result in a simple mention, while a high strength could make that brand the top recommendation, complete with a generative summary of its best features.
    \item \textbf{Framework Benefit:} The platform can use our mechanism to find the optimal combination of strengths that maximizes both ad revenue (agent welfare) and the user's purchase satisfaction (user welfare).
\end{itemize}

\paragraph{2. Streaming: Internal Content Promotion (Netflix)}
Here, the competing ``agents'' are not external advertisers but internal departments of a platform like \textbf{Netflix}, each with its own promotion goals.
\begin{itemize}
    \item \textbf{Agents:} The ``Netflix Originals'' department vs. the ``Licensed Content'' department.
    \item \textbf{Generative Task:} Generating the personalized ``What to Watch Next'' row or an interactive ``wizard'' for a user.
    \item \textbf{Strength Levels:} The weight or number of slots given to each content category (e.g., $s_A=3$ for Originals, $s_B=2$ for Licensed).
    \item \textbf{Framework Benefit:} The platform runs our mechanism to find the optimal \emph{balance} of content that maximizes both long-term user engagement (user welfare) and the platform's strategic promotion goals (agent welfare).
\end{itemize}

\paragraph{3. Time-Sensitive Advertising (Movie Studios)}
This applies to advertisers with dynamic, time-limited campaigns, such as a studio promoting a \textbf{new movie release}.
\begin{itemize}
    \item \textbf{Generative Task:} An LLM's response to general queries like ``What should I do this weekend?'' or ``What are the best new movies?'', or the conversational recommendation system at a cinema's ticket center (or in user's AI glasses in the future). 
    \item \textbf{Strength Levels:} The ``push'' given to the new movie.
    \item \textbf{Framework Benefit:} The optimization is not run once, but periodically (e.g., daily). A studio might purchase a high strength level for the 5 days around its release and then drop to lower. Our mechanism allows the platform to re-calculate the optimal combination of all active advertisers (e.g., for movies, local restaurants, and events) for each new service period.
\end{itemize}

\paragraph{4. Contextual Advertising (DoorDash)}
This scenario involves optimization that is periodic and context-dependent, such as on a food delivery platform.
\begin{itemize}
    \item \textbf{Agents:} Competing local restaurants, e.g., ``Sal's Pizza'' vs. ``Thai Basil.''
    \item \textbf{Generative Task:} Generating the ``Recommended for You'' list or a ``Quick lunch ideas'' banner in the \textbf{DoorDash} app.
    \item \textbf{Strength Levels:} The prominence of each restaurant on the list.
    \item \textbf{Framework Benefit:} The mechanism can be run periodically to find different optimal combinations for different contexts. It can find the best ``lunch'' mix (favoring quick, cheap options) at 11AM and a different ``dinner'' mix (favoring larger, higher-margin meals) at 5PM.
\end{itemize}

\paragraph{5. Generative In-World and Multimedia Content}
Our framework is not limited to text. Consider an AI model generating a 3D scene for a \textbf{video game} or a synthetic stock video.
\begin{itemize}
    \item \textbf{Agents:} Advertisers wanting in-world product placement, e.g., \textbf{Coca-Cola} for a vending machine, \textbf{Nike} for a billboard, or \textbf{General Electric} for a smart appliance in a kitchen.
    \item \textbf{Generative Task:} Generating a 3D street scene or building interior.
    \item \textbf{Strength Levels:} The probability or prominence of a branded asset (a \textbf{GE} fridge) appearing versus a generic, unbranded asset.
    \item \textbf{Multi-Fidelity:} A low-fidelity evaluation ($f=1$) could be a low-resolution, wireframe render of the scene (cheap). A high-fidelity evaluation ($f=K$) would be the full, path-traced 4K render (expensive).
    \item \textbf{Framework Benefit:} Our mechanism can efficiently find the optimal \emph{density} of product placement that maximizes advertiser revenue without destroying user immersion (user welfare).
\end{itemize}

\paragraph{6. Collaborative Storytelling}
This is a purely collaborative generative task where multiple LLM agents must work together, but with different creative goals. 
\begin{itemize}
    \item \textbf{Agents:} Multiple LLM agents, each assigned a different character or theme. E.g., a ``Hero'' agent vs. a ``Villain'' agent vs. a ``Narrator'' agent.
    \item \textbf{Generative Task:} Collectively writing a complete story.
    \item \textbf{Strength Levels:} The ``influence'' or ``agency'' each character has in determining the next plot turn. A high strength for the Villain might allow their plan to succeed, while a high strength for the Hero would thwart it. 
    \item \textbf{Multi-Fidelity:} A low-fidelity version ($f=1$) could be the first paragraph. A high-fidelity version ($f=K$) is the full chapter.
    \item \textbf{Framework Benefit:} The mechanism can be rerun periodically (e.g., every page or chapter; for serialized stories, see the reader's comments on adjusting which character should serve as the narrative focus) to update each agent's strength. This enables the system to identify strength assignments that produce the most compelling narrative. 
\end{itemize}

\paragraph{7. Internal Stakeholder Alignment (e.g., Disney)}
This extends the framework to internal corporate decision-making, such as in a studio like \textbf{Disney} generating a script.
\begin{itemize}
    \item \textbf{Agents:} Internal stakeholders with conflicting goals, e.g., the \textbf{Production Team} vs. the \textbf{Legal/Compliance Team}.
    \item \textbf{Generative Task:} Generating a new movie script.
    \item \textbf{The Conflict:} The Production Team wants to push creative boundaries with complex, edgy themes. The Legal Team wants to ensure the script is brand-safe, avoids all potential controversy, and meets a G-rating.
    \item \textbf{Strength Levels:} The ``strength'' of the creative prompt vs. the ``strength'' of the compliance prompt.
    \item \textbf{Framework Benefit:} It can be used as a tool for internal conflict resolution, finding a ``strength'' combination that maximizes the joint utility of all stakeholders (e.g., maximizing both ``artistic merit'' and ``brand safety'').
\end{itemize}

\paragraph{8. LLM Platform Safety Alignment (e.g., OpenAI/Anthropic)}
This is a critical application for the AI platforms themselves, aligning conflicting model properties.
\begin{itemize}
    \item \textbf{Agents:} The \textbf{Product Team} (desiring maximum helpfulness) vs. the \textbf{Safety/Legal Team} (desiring maximum harmlessness).
    \item \textbf{Generative Task:} Creating the final production model, perhaps by merging two specialized, fine-tuned models: an ``ultra-helpful'' model and an ``ultra-safe'' model.
    \item \textbf{Strength Levels:} The weights in the model merge, e.g., $s_H \in \{0, \dots, 5\}$ and $s_S \in \{0, \dots, 5\}$. The final model would be a weighted combination.
    \item \textbf{Multi-Fidelity:} A low-fidelity evaluation ($f=1$) would be to run the merged model on a small, 100-question benchmark. A high-fidelity evaluation ($f=K$) would be to run it on the full, 10,000-question internal test suite.
    \item \textbf{Framework Benefit:} Our mechanism can efficiently search the space of possible merge combinations to find the one that achieves the optimal point on the Pareto frontier of helpfulness vs. harmlessness.
\end{itemize}

\paragraph{9. Personalized Education (The ``AI Tutor'')}
In this setting, the framework optimizes a personalized curriculum by balancing student engagement with educational standards.
\begin{itemize}
    \item \textbf{Agents:} The \textbf{Student Agent} (optimizing for interest and engagement) vs. the \textbf{Curriculum Board Agent} (optimizing for coverage of required learning standards) vs. the \textbf{Difficulty Adjustment Agent} (optimizing for the zone of proximal development).
    \item \textbf{Generative Task:} Generating a semester-long course syllabus or a specific textbook chapter.
    \item \textbf{Strength Levels:} The weight given to ``fun/pop-culture references'' ($s_{\text{Student}}$) vs. ``academic rigor'' ($s_{\text{Board}}$).
    \item \textbf{Multi-Fidelity:} A low-fidelity evaluation ($f=1$) could be a content page (or list of lesson topics). A high-fidelity evaluation ($f=K$) would be the full generation of a lesson plan including reading materials and quizzes.
    \item \textbf{Framework Benefit:} The mechanism efficiently finds a curriculum path that maximizes student retention without sacrificing necessary academic rigor, avoiding the cost of generating full materials for suboptimal learning paths.
\end{itemize}

\paragraph{10. AI-Assisted Software Development} 
This scenario applies the framework to code generation, where competing engineering priorities must be balanced.
\begin{itemize}
    \item \textbf{Agents:} The \textbf{Feature Team} (desiring rapid implementation and modern frameworks) vs. the \textbf{Security Team} (desiring sanitized inputs and legacy stability) vs. the \textbf{Optimization Team} (desiring low latency).
    \item \textbf{Generative Task:} Generating a microservice module or a complex function.
    \item \textbf{Strength Levels:} The influence of specific system prompts (e.g., ``Use latest experimental features'' vs. ``Enforce strict OWASP\footnote{OWASP stands for the Open Web Application Security Project. It is a non-profit foundation that works to improve the security of software.} security guidelines'').
    \item \textbf{Multi-Fidelity:} 
    \begin{itemize}
        \item $f=1$: Pseudo-code or function signatures (cheap to evaluate for structural correctness).
        \item $f=2$: Non-executable code skeleton.
        \item $f=K$: Fully compilable code with unit tests (expensive to generate and run).
    \end{itemize}
    \item \textbf{Framework Benefit:} The mechanism identifies the coding style/configuration that satisfies security constraints while maximizing feature delivery speed, pruning codebases that are likely to fail compilation early.
\end{itemize}

\paragraph{11. Computational Scientific Discovery (Drug Design)}
Here, the ``agents'' represent competing objective functions in a scientific discovery process steered by a generative model.
\begin{itemize}
    \item \textbf{Agents:} The \textbf{Efficacy Model} (maximizing target binding affinity) vs. the \textbf{Toxicity Model} (minimizing side effects) vs. the \textbf{Synthesizability Model} (maximizing ease of manufacturing).
    \item \textbf{Generative Task:} Generating the 3D molecular structure of a candidate drug.
    \item \textbf{Strength Levels:} The weights in the multi-objective reward function used to steer the generative chemistry model.
    \item \textbf{Multi-Fidelity:} 
    \begin{itemize}
        \item $f=1$: 1D SMILES string validity check (instant).
        \item $f=2$: 2D structure simulation (fast).
        \item $f=K$: Full 3D protein folding and docking simulation (computationally very expensive).
    \end{itemize}
    \item \textbf{Framework Benefit:} Instead of running expensive docking simulations on every candidate, the IAMFM uses low-fidelity proxies to explore the chemical space, only ``paying'' for high-fidelity simulations on molecules that represent the best trade-off between efficacy and safety.
\end{itemize}

\section{Use of LLM Disclosure}
LLM is used in this work in generating data (the advertisement texts and the advertisement figures) and improving the writing. LLM-as-a-judge is used as described in the experiment section.

\end{document}